\long\def\symbolfootnote[#1]#2{\begingroup%
\def\thefootnote{\fnsymbol{footnote}}\footnote[#1]{#2}\endgroup}
\newcommand{\Expect}[1]{\mbox{}{\mathbb{E}}\left[#1\right]}
\newcommand{\ExpectSub}[2]{\mbox{}{\mathbb{E}}_{#1}\left[#2\right]}
\newcommand{\FNorm }[1]{\mbox{}\|#1\|_{\mathrm{F}}  }
\newcommand{\FNormS}[1]{\mbox{}\|#1\|_\mathrm{F}^2}
\newcommand{\TNorm }[1]{\mbox{}\|#1\|_2  }
\newcommand{\TNormS}[1]{\mbox{}\|#1\|_2^2}
\newcommand{\XNorm }[1]{\mbox{}\|#1\|_{\xi}  }
\newcommand{\XNormS}[1]{\mbox{}\|#1\|_{\xi}^2}
\newtheorem{problem}[theorem]{Problem}
\newcommand{\transp}{^{\textsc{T}}}
\newcommand{\trace}{\text{\rm Tr}}
\newcommand{\st}{\text{\rm St}}
\newcommand{\Adj}{\text{\rm Adj}}
\newcommand{\mat}[1]{{\ensuremath{\bm{\mathrm{#1}}}}}
\newcommand{\binset}[2]{C \left( {#1}, {#2} \right)}
\newcommand{\binnum}[2]{\binom{#1}{#2}}
\def\volsamp{\hbox{\rm VolSamp}}
\def\rank{\hbox{\rm rank}}
\def\diag{\hbox{\rm diag}}
\def\b{{\mathbf b}}
\def\e{{\mathbf e}}
\def\u{{\mathbf u}}
\def\v{{\mathbf v}}
\def\matA{\mat{A}}
\def\matB{\mat{B}}
\def\matD{\mat{D}}
\def\matE{\mat{E}}
\def\matI{\mat{I}}
\def\matL{\mat{L}}
\def\matT{\mat{T}}
\def\matU{\mat{U}}
\def\matV{\mat{V}}
\def\matW{\mat{W}}
\def\matX{\mat{X}}
\def\matY{\mat{Y}}
\def\matZ{\mat{Z}}
\def\matSig{\mat{\Sigma}}
\def\matPi{\mat{\Pi}}
\DeclareMathSymbol{\Prob}{\mathbin}{AMSb}{"50}
\newcommand\remove[1]{}
\def\math#1{$#1$}
\def\frac#1#2{{#1\over #2}}
\def\eqan#1{\begin{eqnarray*}
#1
\end{eqnarray*}}
\DeclareMathSymbol{\R}{\mathbin}{AMSb}{"52}
\def\cl#1{{\cal #1}}
\def\x{{\mathbf x}}
\def\y{{\mathbf y}}
\def\b{{\mathbf b}}
\def\norm#1{{\|#1\|}}
\def\ceil#1{{\left\lceil\,#1\,\right\rceil}}
\def\dotfil{\leaders\hbox to 1.5mm{.}\hfill}
\def\RN#1{\setcounter{rmnum}{#1}\uppercase\expandafter{\romannumeral\value{rmnum}}}
\def\rn#1{\setcounter{rmnum}{#1}\expandafter{\romannumeral\value{rmnum}}}
\begin{document}

\title{FASTER SUBSET SELECTION FOR MATRICES AND APPLICATIONS}

\author{
Haim Avron\thanks{Mathematical Sciences Department, IBM T. J. Watson Research Center, Yorktown Heights, NY 10598. Email: \texttt{haimav@us.ibm.com}}
\and
Christos Boutsidis\thanks{Mathematical Sciences Department, IBM T. J. Watson Research Center, Yorktown Heights, NY 10598. Email: \texttt{cboutsi@us.ibm.com}}
}

\maketitle

\begin{abstract}
\noindent We study the following problem of \emph{subset selection} for matrices:
given a matrix $\matX \in \R^{n \times m}$ ($m > n$) and a sampling
parameter $k$ ($n \le k \le m$), select a subset of $k$ columns from
$\matX$ such that the pseudo-inverse of the sampled matrix has as
smallest norm as possible. In this work, we focus on the Frobenius and
the spectral matrix norms.
We describe several novel (deterministic and randomized) approximation algorithms for this
problem with approximation bounds that are optimal up to constant factors.
Additionally, we show that the combinatorial problem of finding a low-stretch spanning tree in an undirected graph
corresponds to subset selection, and discuss various implications of this reduction.
\end{abstract}

\begin{keywords}
Subset Selection, Low-stretch Spanning Trees, Volume Sampling, Low-rank Approximations, $k$-means Clustering, feature selection, Sparse Approximation.
\end{keywords}

\begin{AMS}
15B52, 15A18, 90C27 
\end{AMS}

\pagestyle{myheadings}
\thispagestyle{plain}
\markboth{AVRON AND BOUTSIDIS}{FASTER SUBSET SELECTION FOR MATRICES AND APPLICATIONS}

\section{Introduction}
Given a full rank short-and-fat matrix $\matX \in \R^{n \times m}$ with $m > n$  (typically $m \gg n$)
it is often of interest to \emph{compress} $\matX$ via selecting a subset of its columns. The goal
of such a sampling procedure is to select the columns in a way that the sampled matrix behaves spectrally
similarly to the original matrix, i.e. the singular values of the two matrices are comparable.
Since deleting columns from $\matX$ decreases the singular values monotonically
(this is immediate from the interlacing property of the singular values; see Theorem 8.1.7 on page 396 in~\cite{GV96}), the challenge
is to select the columns that in a sense (which we make precise in the definition below) maximize the spectrum in the sampled matrix.
In particular, we consider the following combinatorial optimization problem
(let $[m] = \{i \in \mathbb{N}: i\le m \}$, i.e. the set of natural numbers $1,2,...m$).

\begin{problem}[Subset Selection for Matrices]\label{def:prob}
Fix $\matX \in \R^{n \times m}$ with $m > n$ and a sampling parameter $k$ with $n \le k \le m$.
Let $\cal S \subseteq$ $ [m] $ denote a set of cardinality at most $k$ for which $\rank(\matX_{\cal S})=\rank(\matX)$, where $\matX_{\cal S} \in \R^{n \times |{\cal S}|}$ contains the
columns of $\matX$ indicated in $\cal S$. Among all such possible choices of $\cal S$, find an ${\cal S}_{opt}$ such that
$\XNorm{ \matX_{ {\cal S}_{opt}}^{\dagger} }$ is minimized, i.e.,
$$ {\cal S}_{opt}  \in \arg \min_{{\cal S} \in {\cal F}(\matX, k) } \XNorm{ \matX_{ \cal S}^{\dagger} }\,,$$
where ${\cal F}(\matX, k) = \{ {\cal S}\,:\,|{\cal S}| \leq k,\,\rank(\matX_{\cal S})=\rank(\matX) \}$. Note that there might be more than one possibility for ${\cal S}_{opt}$ (the minimizer might not be unique). In the above, $\xi = 2, \mathrm{F}$ denotes the spectral or the Frobenius matrix norm, respectively, and $\matX_{ \cal S}^{\dagger}$ denotes the Moore-Penrose pseudo-inverse of $\matX_{ \cal S}$.
\end{problem}

Technically, the above definition corresponds to two different combinatorial optimization problems, one
for $\xi = 2$ and the other for $\xi = \mathrm{F}$.
\remove{The spectral norm version of this problem is known to be NP-hard
(see the $Max$-$MinSingularValue$ part of Theorem 4 in~\cite{CM09}). To the best of our knowledge, there is
currently no known hardness result for the Frobenius norm version. Prior to our work the tightest bounds
known for Problem~\ref{def:prob} were attained by the algorithm described in~\cite{HM07}.
In this work, we present several novel algorithms that achieve the same or tighter
approximations as the method of~\cite{HM07} using \emph{less} operations.
Each algorithm has an advantage over the other
algorithms in certain aspects. Moreover, by proving novel lower bounds for Problem~\ref{def:prob}, we show that the approximation bounds of our best algorithms are optimal (up to constants).
}

Problem~\ref{def:prob} occurs in
numerous situations:
column-based low-rank matrix approximation~\cite{BDM11a,BMD09a};
feature selection in $k$-means clustering~\cite{BM11,BZMD11};
optimal experiment design~\cite{HM07,HM11};
multipoint boundary value problems~\cite{HM07,HM11};
sparse solutions to least-squares regression~\cite{CH92a,Bou11b};
sensor selection in a wireless network~\cite{JB09};
rank-deficient linear least squares~\cite{FK06}, and rank-deficient non-linear least squares~\cite{IKP11}, to name just a few.
We discuss some of these situations in Section~\ref{sec:other}.

However, our initial motivation for investigating Problem~\ref{def:prob} was our observation that
the combinatorial problem of finding a low-stretch spanning tree in an undirected graph~\cite{AKPW95}
corresponds to the Frobenius norm version of Problem~\ref{def:prob}. This connection is new
and might be of independent interest.


We study three aspects of Problem~\ref{def:prob}: algorithms, lower bounds, and applications. We now summarize our contributions in each of these aspects.

\subsection{Our contributions}

\subsubsection{Algorithms}
In Section~\ref{sec:main} we describe five different approximation algorithms for Problem~\ref{def:prob}.
We suggest five different algorithms because no single
algorithm has the lowest operation count; the choice of the most efficient algorithm depends on the actual values of $m$, $n$ and $k$. Our algorithms are considerably faster than the previously known algorithms, and they achieve the same or tighter approximation bounds.
Table~\ref{table:algs} summarizes the algorithms we propose, as well as previously known algorithms for Problem~\ref{def:prob}.

Our first two algorithms, Algorithm~\ref{alg1} and Algorithm~\ref{alg2}, which we describe in Theorem~\ref{thm1} and Corollary~\ref{cor1} (both in Section~\ref{sec:main}), respectively, are especially fast when $k$ is close to $m$, since they form ${\cal S}$ by greedily removing columns. Both algorithms are deterministic. Algorithm~\ref{alg1} in Theorem~\ref{thm1} is designed for the Frobenius  norm case ($\xi = \mathrm{F}$). It requires $O\left( mn^2 + mn(m-k) \right)$ operations, and finds a subset ${\cal S}$ of cardinality $k$ such that
$$
\FNormS{  \matX_{ \cal S}^{\dagger} } \le \frac{m-n+1}{k-n+1} \cdot \FNormS{\matX^{\dagger}}\,.
$$
Notice, for example, that if $k= m - \alpha$, for some small integer $0 < \alpha  \le 0.9 (m-n+1)$, then the approximation bound is $1 + 10\alpha(m-n+1)^{-1}$.

Algorithm~\ref{alg2} in Corollary~\ref{cor1} is designed for the spectral norm case ($\xi = 2$). It's operation count is $O\left( mn^2 + mn(m-k) \right)$ as well. It finds a subset ${\cal S}$ of cardinality $k$ such that
$$
\TNormS{ \matX_{ \cal S}^{\dagger} } \le \left(1+ \frac{n\left(m-k\right)}{k-n+1} \right) \cdot \TNormS{\matX^{\dagger}}\,.
$$
Similarly, if, for example, $k= n+1+ \beta$, for some integer $\beta$ close to $m$ with $0 < \beta  < m-n+1$, then the approximation bound is $1 + n + n (m-n-1) \beta^{-1}$.

The idea of greedily removing columns was previously used by de Hoog and R. Mattheijb in~\cite{HM07}. However, our algorithms are at least a factor of $n$ faster, and in some cases a factor of $n^2$ faster. Furthermore, our algorithms operate on a wider range of matrices: the algorithms in~\cite{HM07} require that all possible column subsets in $\matX$ of size $k$ or larger are non-singular, while our algorithms have no such restriction
(see the paragraph Greedy Algorithms in Section~\ref{sec:prior} for a detailed discussion of the results in~\cite{HM07}).

\begin{table}
\begin{center}
\begin{tabular}{|l|l|l|l|l|l|}
\hline
        & {\bf Sampling}  & {\bf Bound on $\frac{\FNormS{ \matX_{ \cal S}^{\dagger} }}{\FNormS{ \matX^{\dagger} }}$ }  &
 {\bf Bound on $\frac{\TNormS{ \matX_{ \cal S}^{\dagger} }}{\TNormS{ \matX^{\dagger} }}$ } & {\bf Operation count} \\
\hline
\hline
\multicolumn{1}{|l}{{\em Old Algorithms}} & \multicolumn{1}{c}{} & \multicolumn{1}{c}{} & \multicolumn{1}{c}{} & \multicolumn{1}{c|}{}\\
\hline
\hline
Theorem 2 in~\cite{HM07}  & $k \ge n$ & $\frac{m-n+1}{k-n+1}$ & $\frac{m-n+1}{k-n+1} \cdot n$ & $O(mn^3(m-k))$ \\
\hline
Corollary 2 in~\cite{HM07}  & $k \ge n$ & $\frac{m-n+1}{k-n+1} \cdot \frac{n \TNormS{\matX^{\dagger}}}{\FNormS{\matX^{\dagger}}}$ & $1+ \frac{n(m-k)}{k-n+1}$ & $O(mn^3(m-k))$  \\
\hline
Theorem 1 in~\cite{HM11}  & $k \ge n$ & $\frac{m-n+1}{k-n+1} \cdot \frac{n \TNormS{\matX^{\dagger}}}{\FNormS{\matX^{\dagger}}}$ & $1+ \frac{n(m-k)}{k-n+1}$ & $O(mn^3(m-k))$  \\
\hline
Lemma 16 in~\cite{Bou11a}  & $k = n$ & $ \left( 1+ f^2 n \left( m-n \right) \right) $ & $1+ f^2n(m-n)$ & $ O(m n^2 \log_{f} m) $  \\
(Alg. is from~\cite{GE96}) & & & & \\
\hline
Lemma 1 in~\cite{Git11} & $k>$ &&& \\
$\delta=1/2$  & $8 \cdot \tau \cdot n \cdot \log (2n)$ &  No bound & $\frac{2 m }{k}$ & $ O(m n^2 + k) $ \\
\hline
Section A in~\cite{JB09}  & $k \ge n$ &  No bound &  No bound & $ O(m^3) $ \\
\hline
\hline
\multicolumn{1}{|l}{{\em New Algorithms}} & \multicolumn{1}{c}{} & \multicolumn{1}{c}{} & \multicolumn{1}{c}{} & \multicolumn{1}{c|}{}\\
\hline
\hline
Theorem~\ref{thm1} & $k \ge n$ & $\frac{m-n+1}{k-n+1}$ & $\frac{m-n+1}{k-n+1} \cdot n$ & $O(mn^2 + mn(m-k))$  \\
\hline
Corollary~\ref{cor1} & $k \ge n$ & $\frac{m-n+1}{k-n+1} \cdot \frac{n \TNormS{\matX^{\dagger}}}{\FNormS{\matX^{\dagger}}}$ & $1+ \frac{n(m-k)}{k-n+1}$ & $O(mn^2 + mn(m-k))$  \\
\hline
Theorem~\ref{thm2} & $k > n$ & $\frac{( 1 + \sqrt{\frac{m}{k}})^2}{( 1 - \sqrt{\frac{n}{k})^2}}$ & $\frac{( 1 + \sqrt{\frac{m}{k}})^2}{( 1 - \sqrt{\frac{n}{k})^2}}$ & $O(m n^2 k)$  \\
\hline
Theorem~\ref{thm3}& $k \geq $ &&&\\ $\delta = 1/2$ & $32 \cdot n \cdot \ln (4 n)$ &
$4 m  $
& $4 m$
& $O(mn^2 + k \log k) $  \\
\hline
Theorem~\ref{thm4}  &  &  & &\\
$\delta = 1/2$ & $k = n$ & $(1+\eta)(m-n+1)$ & $(1+\eta) n(m-n+1)$ & $O(mn^3/\log(1+\eta)) $  \\
\hline
\end{tabular}
\caption{ Summary of various algorithms for Problem~\ref{def:prob} (our algorithms, as well as previous algorithms).
$\matX \in \R^{n \times m}$ is a full rank matrix. $k$ denotes the number of sampled columns.
$\cal S$ $\subseteq [m]$ has cardinality at most $k$. $\delta$ denotes a failure probability,
which is assumed to be zero if it is omitted from the description.
In the fourth line of the table, $f > 1$ is a parameter which trades accuracy with number of operations.
In the fifth line of the table,
$\tau$ denotes the so-called coherence of $\matX$:
$ \tau =  \frac{m}{n} \max_{i \in [m]} (\matV \matV \transp)_{ii}$, where $\matV\in \R^{m \times n}$ contains the right singular vectors of $\matX$ corresponding
to the top $n$ singular values of $\matX$.
Lemma 1 in~\cite{Git11} assumes that $\matX$ has orthonormal rows; but, this can be extended to arbitrary $\matX$ just by applying
the result to $\matV\transp$.
In the sixth line of the table, the formulation in~\cite{JB09} assumes that $\matX$ is orthonormal, but this can
be generalized as well.
In the last line of the table, $\eta > 0$ is a parameter which trades the accuracy in the bound with the number of operations of the algorithm.
}
\label{table:algs}
\end{center}
\end{table}
\begin{table}
\begin{center}
\begin{tabular}{|l|l|l|}
\hline
&  {\bf $\FNormS{\matX_{\cal S}^{\dagger}} \ge \gamma \FNormS{\matX^{\dagger}};$ $\gamma=$ }
&  {\bf $\TNormS{\matX_{\cal S}^{\dagger}} \ge \gamma \TNormS{\matX^{\dagger}};$ $\gamma=$ } \\
\hline
\hline
\textbf{$k=n$}
& $m/n $ &  $m$      \\
\hline
\textbf{$k>n$, $k = O(n)$}
& $m/k-C$ & $m/k-1$  \\
\hline
\textbf{$k>n$, $k = \omega(n)$}
& $m/k - k/n$ & $m/k-1$        \\
\hline
\end{tabular}
\caption{
Summary of lower bounds for Problem~\ref{def:prob}. By lower bounds, we mean that there
is a matrix $\matX \in \R^{n \times m}$ such that for every ${\cal S}$,
$\XNormS{\matX_{\cal S}^{\dagger}} \ge \gamma \XNormS{\matX^{\dagger}}$, for a value of $\gamma$ shown in the table.
${\cal S} \subseteq [m]$ has cardinality at most $n \le k \le m$.
For $\xi=2$ and $n=k$, the bound is from Lemma 2.2 in~\cite{GTZ97}; the bound for $\xi=\mathrm{F}$ is
an immediate corollary. We prove the other bounds in Section~\ref{sec:opt}. $C$ denotes a constant.
}
\label{table:lower}
\end{center}
\end{table}

Our third algorithm, which we describe as Algorithm~\ref{alg3} in Theorem~\ref{thm3} (Section~\ref{sec:main}), is designed
for cases that $k$ is small, e.g. $k=O(n)$, a case which is common in applications (see Section~\ref{sec:other}).
The algorithm's operation count is $O\left( mn^2 + kn^2m \right)$, and it constructs a subset
$\cal S$ with cardinality at most $k > n$, such that, for both $\xi=2,\mathrm{F}$:
$$ \XNormS{ \matX_{\cal S}^{\dagger} } \le \left( 1 + \sqrt{\frac{m}{k}}\right)^2 \left( 1 - \sqrt{\frac{n}{k}} \right)^{-2} \XNormS{ \matX^{\dagger} }\,.$$
This algorithm is inspired by recent results on approximate decompositions of the identity~\cite{BSS09,BDM11a}.
%
Notice that, for example, if $k = \Theta(n)$, the approximation bound is $1+O(m/k)$.

Our fourth algorithm, which we describe as Algorithm~\ref{alg4} in Theorem~\ref{thm3} (Section~\ref{sec:main}),
is designed for cases where both $m$ and $k$ are large (specifically, $k=\Omega(n \log n)$).
It is especially fast since it is based on randomly sampling columns of the matrix. However,
we do not use uniform sampling, so our bounds are independent of numerical properties
of the matrix. The operation count of the algorithm is $O(mn^2 + k \log k)$.
For a fixed probability parameter $\delta$ ($0 < \delta < 1$), and $k \geq \ceil{32 n \ln (2n/\delta)}$,
the algorithm constructs a subset $\cal S$ of cardinality at most $k$, such that, for both $\xi=2,\mathrm{F}$, and with probability  $1-\delta$,
$$ \XNormS{ \matX_{\cal S}^{\dagger} } \le 4 \cdot m \cdot \XNormS{ \matX^{\dagger} }\,.$$
If $\matX$ has orthonormal rows, then, the operation count drops to $O(mn + k \log k)$, i.e. linear in the size of the input. The analysis of the algorithm is based on the matrix concentration bound of~\cite{RV07}.


Our last algorithm, Algorithm~\ref{alg5} in Theorem~\ref{thm4} (Section~\ref{sec:volume}) is designed for $k=n$.
It is based on the following theoretical contribution (Lemma~\ref{lem:volrand} in Section~\ref{sec:volume}):
if we randomly sample a subset $\cal S$ of cardinality $k \ge n$ with probability proportional to $\det(\matX_{\cal S} \matX\transp_{\cal S})$, then,
\eqan{
\Expect{\FNormS{\matX_{\cal S}^{\dagger}}}\leq\frac{m-n+1}{k-n+1} \cdot \FNormS{\matX^{\dagger}}
\qquad\mbox{and} \hspace{0.3in}
\Expect{\TNormS{\matX_{\cal S}^{\dagger}}} \leq \left(1+\frac{n(m-k)}{k-n+1} \right) \cdot \TNormS{\matX^{\dagger}}.
}
Algorithm~\ref{alg5}  finds a subset
{\cal S} of cardinality $k=n$
such that
$$ \TNormS{ \matX_{\cal S}^{-1} }  \le \FNormS{ \matX_{\cal S}^{-1} }
\le (1+\eta) \cdot \left( m-n+1 \right) \cdot  \FNormS{ \matX^{\dagger} }
\le (1+\eta) \cdot \left( m-n+1 \right) \cdot  n \cdot \TNormS{ \matX^{\dagger}}\,,$$
for any $\eta > 0$ chosen by the user.
This bound is deterministic
but the bound on the number of operations is probabilistic. Specifically, for any $0 < \delta < 1$, we show that with probability $1 - \delta$,
the operation count is $O\left(m  n^3  \log \delta^{-1} \log^{-1}{(1+\eta)} \right)$.

Our volume-sampling-based algorithm for the subset
selection problem can be viewed as a complementary result to the volume-sampling-based algorithms designed before for low-rank matrix approximation~\cite{DRVW06}.
In low-rank matrix approximation, the subspace spanned by the columns that are selected by volume sampling contains a rank $k$ matrix that approximates the best rank $k$
matrix computed via the SVD; in our case, the objective is different but we show that volume sampling gives useful results as well.

\subsubsection{Lower Bounds}
By lower bounds, we mean that there exists a matrix $\matX \in \R^{n \times m}$ such that for every ${\cal S}$ of cardinality $k \ge n$,
for $\xi = 2$ or $\xi = \mathrm{F}$,
we have $\XNormS{\matX_{\cal S}^{\dagger}} \ge \gamma \XNormS{\matX^{\dagger}}$ for some value of $\gamma$ which we call \emph{lower bound}.
We develop such lower bounds via, first, relating the subset selection problem to the so-called column-based matrix reconstruction problem~\cite{BDM11a}, and then, employing existing lower bounds~\cite{BDM11a} for column-based matrix reconstruction.
We present  these results in Section~\ref{sec:opt}; a summary of lower bounds appears in Table~\ref{table:lower}.
Our lower bounds indicate that some upper bounds of de Hoog and Mattheij~\cite{HM07,HM11} as well as ours are the best possible up to constant factors. This resolves an open question in~\cite{HM07,HM11}.

An alternative way to study the optimality of our algorithms is to develop lower bounds of the form
$\XNormS{\matX_{\cal S}^{\dagger}} \ge \gamma \XNormS{\matX_{{\cal S}_{opt}}^{\dagger}}$.
However, we were unable to prove such bounds, so we leave this as an interesting open question for future investigation.

\subsubsection{Applications}
In Section~\ref{sec:ap}, we study the connection between low-stretch spanning trees and subset selection.
Using a result by Spielman and Woo~\cite{SW09}, we prove that the stretch of any tree in an undirected graph equals the Frobenius norm squared of the pseudo-inverse of the sampled matrix that arises by sampling columns from an orthonormal matrix which is a basis for the row space of the so-called
node-by-edge incidence matrix of the graph. This incidence matrix contains as many columns as edges in the graph; so,
sampling columns from this matrix corresponds to sampling edges from the graph.
We then use this reduction to develop novel algorithms for constructing spanning trees with low stretch in undirected graphs.
Unfortunately, our algorithms are worse than the available state-of-the-art~\cite{EEST05,ABN08,KMP11}.
We believe, however, that the connection is interesting and might be useful to shed new light on the combinatorial problem of finding a low stretch spanning tree in an undirected graph.

In Section~\ref{sec:other} we use the subset selection algorithms of this paper to design novel algorithms for three other problems involving sub-sampling:
column-based low-rank matrix reconstruction, sparse solution of least-squares problems, and feature selection in $k$-means clustering.

\subsection{Related Work}\label{sec:prior}
We now provide a comprehensive summary of known results regarding Problem~\ref{def:prob} and we comment
on two related subset selection problems studied in the literature.

\subsubsection{Greedy Algorithms}
In~\cite{HM07} de Hoog and Mattheij propose the following algorithm for the Frobenius norm version of Problem~\ref{def:prob}.
The idea is to proceed by removing one column from $\matX$ at a time.
In the first iteration of the algorithm, they remove the column with index $i_1$, where
$$ i_1 = \arg \min_{i=1,...,m} \trace\left(  \left( \matX\matX\transp - \x_i \x_i\transp \right)^{-1}  \right). $$
Let $\matX_1 \in \R^{n \times (m-1)}$ be the matrix obtained after removing the $i_1$th column of $\matX$.
In the second iteration of the algorithm, they remove the column with index $i_2$ such that,
$$ i_2 = \arg \min_{i=1,...,m-1} \trace\left(  \left( \matX_1 \matX_1\transp - \x_i \x_i\transp \right)^{-1}  \right), $$
and so on, until $m-k$ columns are removed.

A straightforward implementation of this idea requires
$O(m n^3 (m-k))$ operations.
However, one can use the
Sherman-Morrison formula for rank one updates to the inverse of a matrix and improve the operation count to $O(n^3 + m n^2 (m-k))$.

Notice that the algorithm just described assumes (implicitly)   that in all the iterations, removing a single column
does not result in a rank deficient matrix; otherwise, for an iterate $\matX_j$ ($j=1,...,m-k$) and a column $\x_i$ ($i=1,...,m$) whose removal
will result in a rank deficient matrix, the inverse of $\matX_j\matX_j\transp - \x_i \x_i\transp$ is not defined.
In~\cite{HM07} it is shown that this algorithm achieves the bound
$\FNormS{\matX^{\dagger}_{\cal S}}\leq \frac{m - n + 1}{k - n + 1} \cdot \FNormS{\matX^{\dagger}}.$
However, the assumption just mentioned is not true in general.

Our algorithms of Theorem~\ref{thm1} and Corollary~\ref{cor1} use the greedy removal idea as well. However, they find the columns to
be removed in a different way. Our algorithms are substantially faster (at least a factor of $n$, and a factor of $n^2$ in some cases) than the algorithm of~\cite{HM07}. Additionally, our algorithms efficiently detect columns whose removal results in a rank deficient matrix, and avoid removing them. So, our algorithms
work for any full-rank matrix $\matX$, without any restriction.

We also mention that Theorem 1 in~\cite{HM11} describes a similar
greedy deterministic algorithm with comparable operation count but slightly worse approximation bounds than the algorithm of~\cite{HM07}
(see Table~\ref{table:algs} for the precise statement of these results). On the positive side, this algorithm works for any $\matX$.

\subsubsection{Rank Revealing Factorizations}
The subset selection problem that we study in this paper
has deep connections, which we do not explain in detail, with the so-called Rank-Revealing
QR~\cite{GE96} (and also see~\cite{BMD09a,CH94} for a summary of available RRQR algorithms)
and Rank-Revealing LU~\cite{GM04}
factorizations.

Worth special mention is the seminal work
of Gu and Eisenstant~\cite{GE96} on Strong Rank-Revealing QR (RRQR). Algorithm 4 and Theorem~3.2 of~\cite{GE96} are stated for matrices with at least as many rows as column, but they can be easily adapted to the case where there are at least as many columns than rows (see Lemma 15 of~\cite{Bou11a} or Equation 3.1 of \cite{Ipsen}). That is, these algorithms provide a numerically stable way to
compute a subset ${\cal S}$ of cardinality  $k \le n$ with bounds on all the non-zero singular values of $\matX_{\cal S}$. Specifically, when $k=n$ then
for $i=1,...,n$ this approach provides the following bound,
$$ \sigma^2_i (\matX_{\cal S}) \ge \frac{ \sigma^2_i(\matX) }{ 1+f^2 n (m-n) }\,.$$
By applying the inequality to $i=n$ we have the following bound,
$$ \TNormS{ \matX_{ \cal S}^{-1} } \le \left(1+ f^2n\left(m-n\right) \right) \cdot \TNormS{\matX^{\dagger}}\,.$$
By summing up the bounds on each singular value we get the following bound,
$$ \FNormS{ \matX_{ \cal S}^{-1} } \le \left(1+ f^2n\left(m-n\right) \right) \cdot \FNormS{\matX^{\dagger}}\,.$$
For $f > 1$ and $k=n$, the operation count of this method is $O(mn^2 \log_fm)$.

Rank revealing approaches can only be used to sample $k \le n$ columns;
extending these approaches to sample arbitrary $k \ge n$ columns, which is the focus of this paper, is not obvious.


\subsubsection{Incoherent Subset Selection}
A recent result by Gittens~\cite{Git11} studies the subset selection problem in the context of the so-called coherence of
$\matX$. The algorithm uses random sampling of columns. Gittens shows that this simple algorithm gives competitive bounds for matrices which have low coherence.

\subsubsection{Approximation via Convex Relaxation}
Joshi and Boyd~\cite{JB09} explored the use of convex relaxation to solve Problem~\ref{def:prob}: initially,
they maximize the norm (spectral or Frobenius) of $(\matX \matE)^{\dagger}$ where $\matE \in \R^{m \times m}$ is a diagonal matrix with diagonal entries that are inside the interval $[0,1]$. This is a convex program, which can be solved, for example, via an interior point algorithm. Note that Problem~\ref{def:prob} corresponds to maximizing the norm of $(\matX \matD)^{\dagger}$ where $\matD \in \R^{m \times m}$ is a diagonal matrix with diagonal entries that are either $0$ or $1$, and setting ${\cal S} = \{i\,:\,\matD_{ii} = 1 \}$. So, to get a feasible solution for Problem~\ref{def:prob}, Joshi and Boyd suggest a rounding scheme to get strictly $0$ or $1$ weights. No theoretical results are reported but the method is shown to perform well in practice.

\subsubsection{Maximum-volume Subsets} Theorem 1 in de Hoog and Mattheij~\cite{HM07} shows that, for $k \ge n$, if $\matX_{\cal S}$
maximizes $\det\left(\matX_{\cal T}\transp \matX_{\cal T}  \right)$ among all possible subsets $\cal T$ of cardinality $k$, then
the following two bounds hold,
$ \TNormS{\matX_{\cal S}^{\dagger}}\le \left( 1 + n\left(m-k\right)/(k-n+1) \right) \cdot \TNormS{\matX^{\dagger}};$ and,
$ \FNormS{\matX_{\cal S}^{\dagger}}\le \left(m-n+1\right)/\left(k-n+1\right) \cdot n \cdot \TNormS{\matX^{\dagger}}.$
Similar spectral norm bounds for $k=n$ were shown before in Eqn.~(2.4) of Theorem 2.2 of~\cite{HP92},
Lemma 3.4 ($\mu=1$, where $\mu$ is a parameter in the lemma) in~\cite{Pan00}, Lemma 2.1 of~\cite{GTZ97}, and Algorithm 3 in~\cite{GE96}.
A similar Frobenius norm bound for $k=n$ was shown before in Eqn.~(2.13) of Theorem 2.3 of~\cite{HP92}.
Notice that all these results do not imply any algorithm other than the naive procedure
of testing all the  $\binnum{m}{k}$ possible subsets of cardinality $k$ (this procedure has an exponential operation count).

Our Lemma~\ref{lem:volrand} proves a similar result, which
states that if one samples $\cal S$ with probability proportional to $\det\left(\matX_{\cal S}\transp \matX_{\cal S}  \right)$,
then, the same bounds hold in expectancy. Now, recent polynomial-time implementations ($O(m n^3)$ operations) of such determinant-based random sampling~\cite{DR10,GK12} allow us to design efficient algorithms ($O(m n^3)$ operations with high probability) that achieve only slightly larger bounds.

Finally, note that the strong RRQR algorithm of~\cite{GE96} finds a local maximum-volume subset. By local maximum-volume subset, we mean that the volume of the subset found is always bigger than the volume of any subset obtained by interchanging a single column.

\subsubsection{Computational Complexity of Subset Selection}
In~\cite{CM09}, Civril and Magdon-Ismail study the spectral norm version of Problem~\ref{def:prob}, as well as three other similar subset selection problems, from a complexity theory point of view. They show that these
problems are NP-hard.  They give special emphasis to the problem of finding a subset $\cal S$ for which $\matX_{\cal S}$ has maximum volume, i.e. $\det(\matX_{\cal S} \matX^\top_{\cal S})$ is maximized. As we discussed above, the problem of finding the subset with maximum volume is connected to Problem~\ref{def:prob}.

The computational complexity of finding a maximum volume subset was also investigated in the computational geometry literature. The problem is stated differently: finding a large simplex in a V-polytope. NP-hardness was established in~\cite{Packer02}, and exponential inapproximability was established in~\cite{Koutis06}.



\subsubsection{Variants of the Subset Selection Problem}
Other variants of subset selection have been studied extensively in numerical linear algebra and computer science. Most of this work focused on spectral norm and
the case of $\matX_{\cal S}$ containing \emph{rescaled} columns from $\matX$~(Theorem 3.1 in~\cite{RV07}; Theorem 11 in~\cite{Zou11}; Theorem 3.1 in~\cite{BSS09})
or $\matX_{\cal S}$ containing \emph{linear combinations} of columns from $\matX$~(Lemma 3.15 of~\cite{MRT11}; Lemma 6 of~\cite{Sar06}; Theorem 1.3 of~\cite{Tro11}). We should note that all these results give much better approximation bounds than our bounds for the spectral norm version of Problem~\ref{def:prob}. For example, the deterministic algorithm in Theorem 3.1 in~\cite{BSS09}, for any $\epsilon > 0$, selects and appropriately rescales $O(n / \epsilon^2)$ columns from $\matX$ and guarantees an approximation bound $1+\epsilon$.


Finally, we mention that all these algorithms have found many applications in numerous problems involving subsampling:
least-squares regression~\cite{AMT10};
column-based low-rank matrix approximation~\cite{BMD09a};
spectral graph sparsification~\cite{SS08}; and, dimensionality reduction in clustering~\cite{BZMD11}.

\subsubsection{Restricted Invertibility}
Bourgain and Tzafriri restricted invertibility result~\cite{BT87} states that there exists a universal constant $C$ such that for every square invertible matrix $\matA \in \R^{n \times n}$ whose columns have unit $\ell_2$ norm, one can find a subset ${\cal S} \subseteq [n]$ of cardinality at least $Cn/\TNormS{\matA}$ such that
$ \TNorm{\matA_{\cal S}} \cdot \TNorm{\matA_{\cal S}^{\dagger}} \leq \sqrt{3}\,.$
Given $\matA$, finding such a subset ${\cal S}$ is another variant of column subset selection.
Tropp gave the first polynomial (randomized) algorithm for restricted invertibility~\cite{Tropp09}.
A deterministic algorithm was recently suggested by Spielman and Srivastava~\cite{SS12}.
However, restricted invertibility deals with selecting fewer than $n$ columns that maximize the smallest non-zero
singular value, while Problem~\ref{def:prob} deals with selecting at least $n$ columns so that the matrix is full-rank, and the smallest singular value is as large as possible. So, the problems are similar, but different.

\section{Preliminaries}\label{sec:pre}


\subsection{Basic Notation}
We use $[n]$ to denote the set $\{1,\dots,n\}$.
We use \math{\matX,\matY\ldots} to denote matrices;
\math{\x,\y\ldots} to denote column vectors.
We denote the columns of $\matX \in \R^{n \times m}$ by $\x_1,\x_2,\ldots,\x_m \in \R^n$; $\x_i$ is column $i$ of $\matX$.
$\matI_{m}$ is the $m \times m$
identity matrix;  $\bm{0}_{n \times m}$ is the $n \times m$ matrix of zeros;
$\bm{e}_i$ is the $i$th standard basis vector (whose dimensionality will be clear from the context): all entries are zero except the $i$th entry which equals one.
$\matX_{ij}$ or $\left( \matX \right)_{ij}$ denotes the $(i,j)$th element of $\matX$. $\v_{ij}$ denotes the $j$th element of a vector $\v_i$.
Logarithms are base two. We abbreviate ``independent identically distributed'' to ``i.i.d''.
Finally, for a set $A$, we denote by $\binset{A}{k}$ the set of all subsets of $A$ of cardinality $k$.

\subsection{Sampling Columns}
In the context of Problem~\ref{def:prob}, $\cal S$ is a set of cardinality $1 < k \le m$, which contains some subset
of the natural numbers from $1,2,...,m$ (repetition of numbers is \emph{not} allowed).
$\matX_{\cal S}$ contains the columns of some matrix
$\matX \in \R^{n \times m}$, indicated in $\cal S$; sometimes we will use $(\matX)_{\cal S}$ to denote the same matrix.
The columns of $\matX_{\cal S}$ are ordered consistently with their order in $\matX$:
if $i,j$ are elements from $\cal S$ and $i<j$, then, the $i$th column of $\matX$ will appear before the $j$th column of $\matX$ in $\matX_{\cal S}$. Finally, $\matX\transp_{\cal S}$ means $(\matX_{\cal S}) \transp$ and $\matX^{\dagger}_{\cal S}$ means $(\matX_{\cal S})^{\dagger}$.

\subsection{Singular Value Decomposition} \label{chap24}
The (thin) Singular Value Decomposition (SVD) of  $\matX \in \R^{n \times m}$ of rank $\rho = \rank(\matX)$ is:
\begin{eqnarray*}
\label{svdA} \matX
         = \underbrace{\left(\begin{array}{cc}
             \matU_{r} & \matU_{\rho-r}
          \end{array}
    \right)}_{\matU \in \R^{n \times \rho}}
    \underbrace{\left(\begin{array}{cc}
             \matSig_{r} & \bf{0}\\
             \bf{0} & \matSig_{\rho - r}
          \end{array}
    \right)}_{\matSig \in \R^{\rho \times \rho}}
    \underbrace{\left(\begin{array}{c}
             \matV_{r}\transp\\
             \matV_{\rho-r}\transp
          \end{array}
    \right)}_{\matV\transp \in \R^{\rho \times m}},
\end{eqnarray*}
with singular values \math{\sigma_1\ge\ldots\sigma_r\geq\sigma_{r+1}\ge\ldots\ge\sigma_\rho > 0}.
Here, $r$ is some rank parameter $1 \le r \le \rho$.
We will often denote $\sigma_1$ as $\sigma_{\max}$ and $\sigma_{\rho}$ as $\sigma_{\min}$, and will use $\sigma_i\left(\matX\right)$ to denote the $i$-th singular value of $\matX$ when the matrix is not clear from the context. The matrices
$\matU_r \in \R^{n \times r}$ and $\matU_{\rho-r} \in \R^{n \times (\rho-r)}$ contain the left singular vectors of~$\matX$; and, similarly, the matrices $\matV_r \in \R^{m \times r}$ and $\matV_{\rho-r} \in \R^{m \times (\rho-r)}$ contain the right singular vectors of~$\matX$.
Finally, we repeatedly use the following column representation for the matrix $\matV$:
$\matV\transp = \matY = [\y_1, \y_2,...,\y_m]$. Here, the $\y_i$'s are vectors in $\R^{\rho}$.

\subsection{Moore-Penrose Pseudo-inverse} Let $\matX \in \R^{n \times m}$ with SVD $\matX = \matU \matSig \matV\transp$.
Then, $\matX^{\dagger} = \matV \matSig^{-1} \matU\transp \in \R^{m \times n}$
denotes the Moore-Penrose pseudo-inverse of $\matX$ ($\matSig^{-1}$ is the inverse of $\matSig$).
\begin{lemma}[Fact 6.4.12 in~\cite{Bernstein05}]
\label{lem:pseudo}
Let $\matA \in \R^{m \times n}, \matB \in \R^{n \times \ell}$, and assume that $\rank(\matA) = \rank(\matB) = n$. Then, \math{(\matA\matB)^{\dagger}=\matB^{\dagger}\matA^{\dagger}}.
\end{lemma}

\begin{lemma}\label{lem:thm2aux}
Let $\matA \in \R^{n \times m}$ be a full rank matrix with $m\geq n$. Let $\matB$ be an invertible $m \times m$ matrix. Then
$$ \TNorm{(\matA \matB)^{\dagger}} \leq \TNorm{\matA^{\dagger}} \cdot \TNorm{\matB^{-1}}\,.$$
\end{lemma}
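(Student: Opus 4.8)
The plan is to reduce the statement to an inequality between smallest singular values, exploiting the fact that for any matrix $\matM$ of full row rank one has $\TNorm{\matM^{\dagger}} = 1/\sigma_{\min}(\matM)$. First I would record the three observations that make this reduction legitimate. Since $\matA \in \R^{n\times m}$ with $m\ge n$ is full rank, it has full row rank $n$, so $\matA\matA\transp$ is $n\times n$ positive definite and $\TNorm{\matA^{\dagger}} = 1/\sigma_n(\matA)$, where $\sigma_n(\matA) = \sqrt{\lambda_{\min}(\matA\matA\transp)} > 0$. Since $\matB$ is an invertible $m\times m$ matrix, $\matB\matB\transp$ is positive definite and $\TNorm{\matB^{-1}} = 1/\sigma_m(\matB)$ with $\sigma_m(\matB) > 0$. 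Finally, $\matA\matB$ still has full row rank $n$ (right-multiplication by the invertible $\matB$ preserves rank), so $\TNorm{(\matA\matB)^{\dagger}} = 1/\sigma_n(\matA\matB)$. Consequently the claimed inequality is equivalent to the singular-value bound $\sigma_n(\matA\matB) \ge \sigma_n(\matA)\,\sigma_m(\matB)$.

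To establish that bound I would argue at the level of positive semidefinite orderings. From $\matB\matB\transp \succeq \sigma_m(\matB)^2\,\matI_{m}$ we get, after conjugating by $\matA$, that $\matA\matB\matB\transp\matA\transp \succeq \sigma_m(\matB)^2\,\matA\matA\transp$, and since $\matA\matA\transp \succeq \sigma_n(\matA)^2\,\matI_{n}$ this gives $\matA\matB\matB\transp\matA\transp \succeq \sigma_n(\matA)^2\,\sigma_m(\matB)^2\,\matI_{n}$. Taking the smallest eigenvalue of both sides, and noting $(\matA\matB)(\matA\matB)\transp = \matA\matB\matB\transp\matA\transp$, yields $\sigma_n(\matA\matB)^2 \ge \sigma_n(\matA)^2\,\sigma_m(\matB)^2$; taking square roots and then reciprocals finishes the proof. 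An equivalent and perhaps cleaner presentation is the variational one: $\sigma_n(\matA\matB) = \min_{\TNorm{\x}=1}\TNorm{\matB\transp\matA\transp\x} \ge \sigma_m(\matB)\,\min_{\TNorm{\x}=1}\TNorm{\matA\transp\x} = \sigma_m(\matB)\,\sigma_n(\matA)$, where the middle step uses $\TNorm{\matB\transp\v} \ge \sigma_m(\matB)\,\TNorm{\v}$ for every $\v \in \R^{m}$.

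I do not anticipate a real obstacle; the only points needing care are that the characterization $\TNorm{\matM^{\dagger}} = 1/\sigma_{\min}(\matM)$ is applied only to full-row-rank matrices — which is exactly why the hypotheses ``$\matA$ full rank'' and ``$\matB$ invertible'' are used — and that the semidefinite-ordering step is carried out on the $n\times n$ Gram matrices, so that $\lambda_{\min}$ genuinely refers to the smallest (nonzero) eigenvalue. I would also note in passing that trying to route through Lemma~\ref{lem:pseudo} does not work directly: that lemma requires the inner dimension to equal the common rank of the two factors, whereas here the inner dimension $m$ may strictly exceed $\rank(\matA) = n$, so the singular-value argument above is the natural one.
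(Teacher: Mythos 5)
Your proof is correct and matches the paper's argument in substance: the paper proves exactly the singular-value bound $\sigma_{\min}(\matA\matB)\ge\sigma_{\min}(\matA)\,\sigma_{\min}(\matB)$ via the variational characterization $\sigma_{\min}(\matB\transp\matA\transp)=\min_{\x\neq 0}\TNorm{\matB\transp\matA\transp\x}/\TNorm{\x}$, which is precisely the ``cleaner presentation'' you give at the end (your positive-semidefinite-ordering version is an equivalent reformulation of the same step). Your side remark about why Lemma~\ref{lem:pseudo} does not apply directly is also accurate.
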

\begin{proof}
\eqan{
\TNorm{(\matA \matB)^{\dagger}} = (\sigma_{\min}(\matA \matB))^{-1} = (\sigma_{\min}(\matB\transp \matA\transp))^{-1}
& = & \left(\min_{\x \neq 0}\frac{\TNorm{\matB\transp \matA\transp \x}}{\TNorm{\x}}\right)^{-1}\\
& \leq & \left(\min_{\x \neq 0}\frac{\TNorm{\matB\transp \matA\transp \x}}{\TNorm{\matA\transp \x}}\cdot\frac{\TNorm{\matA\transp \x}}{\TNorm{\x}}\right)^{-1} \\
& \buildrel{(*)}\over{\leq} & \left(\min_{\x \neq 0}\frac{\TNorm{\matB\transp \matA\transp \x}}{\TNorm{\matA\transp \x}}\cdot\min_{\x \neq 0}\frac{\TNorm{\matA\transp \x}}{\TNorm{\x}}\right)^{-1} \\
& \leq & \left(\min_{\y \neq 0}\frac{\TNorm{\matB\transp \y}}{\TNorm{\y}}\cdot\min_{\x \neq 0}\frac{\TNorm{\matA\transp \x}}{\TNorm{\x}}\right)^{-1} \\
& = & (\sigma_{\min}(\matB\transp) \cdot \sigma_{\min}(\matA\transp))^{-1} \\
& = & \TNorm{\matA^{\dagger}} \cdot \TNorm{\matB^{-1}}
}
In \math{(*)} we use the fact that $\matA \transp$ is a full rank matrix with more rows than columns (so $\TNorm{\matA\transp \x}\neq 0 $ for $\x\neq 0$).
\end{proof}

\subsection{Column Exchanges and Cramer's rule}
For a matrix $\matA$, an index $i$ and a vector $\v$, we denote by
$\matA(i \rightarrow \v)$ the matrix obtained after replacing the
$i$th column of $\matA$ by $\v$. Notice that for square matrices $\matA$ and $\matB$ of
the same dimension we have $\det(\matA) \det(\matB(i \rightarrow \v)) = \det( (\matA \matB)(i \rightarrow \matA \v) )$.
For an invertible square matrix $\matA$, recall Cramer's rule, which gives a formula for computing the components of
$\x=\matA^{-1}\b$ in terms of determinants. In our notation, the rule states
that $\x_{i}$, the $i$th position in $\x$, is
$$\x_{i}=\frac{\det(\matA(i \rightarrow \b))}{\det(\matA)}\,.$$


\subsection{Volume Sampling} Let $\matX$ be a full rank matrix of dimensions $n \times m$ with $m \geq n$, and let $n \leq k \leq m$ be some integer. Given a subset ${\cal S} \in \binset{[m]}{k}$ define the probability of ${\cal S}$ by
$$ P_{\cal S}= \frac{ \det\left( \matX_{\cal S} \matX_{\cal S}\transp \right) }{ \sum_{{\cal T}\in\binset{[m]}{k}} \det\left( \matX_{\cal T} \matX_{\cal T}\transp \right) }\,. $$
The values $\{P_{\cal S}\}_{{\cal S}\in\binset{[m]}{k}}$ define a distribution over the sets
in $\binset{[m]}{k}$. We denote this distribution by $\volsamp(\matX, k)$. That is, we write ${\cal S} \sim \volsamp(\matX, k)$ to denote that ${\cal S}$ is a random subset which assumes value in $\binset{[m]}{k}$, whose distribution is defined by $$\Pr({\cal S} = {\cal T}) = P_{\cal T}\,.$$
We call this sampling distribution \emph{volume sampling} due to the fact that $\det(\matX_{\cal S} \matX_{\cal S}\transp)^{1/2}$ is the volume of the parallelpiped defined by the rows of $\matX_{\cal S}$. Notice that if $\matA$ is a square non-singular matrix then $\volsamp(\matA \matX, k) = \volsamp(\matX, k)$ (this follows from the fact that $\det((\matA \matX)_{\cal S}(\matA \matX)\transp_{\cal S}) = \det(\matA)^2 \det(\matX_{\cal S}\matX\transp_{\cal S})$, for every $\cal S$).

An efficient algorithm for sampling a set from $\volsamp(\matX, n)$ was first suggested by Deshpande and Rademacher~\cite{DR10}. This algorithm was recently improved by Guruswami and Sinop, who showed how
to sample such a subset with $O(n^{3}m)$ operations~\cite{GK12}.
There is currently no algorithm for sampling from $\volsamp(\matX, k)$ for an arbitrary $k \geq n$.

\subsection{Other Known Results} In addition we use the following two known results.
\begin{lemma}[Special case of the Cauchy-Binet formula] Let $\matA \in \R^{n \times m}, \matB \in \R^{n \times m}$, and $m\ge n$. Then,
$$ \det(\matA \matB\transp) = \sum_{{\cal S} \in \binset{[m]}{n}} \det(\matA_{\cal S}) \det( \matB\transp_{\cal S})\,.$$
\end{lemma}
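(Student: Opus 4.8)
The plan is to prove this by the classical multilinear (Leibniz) expansion of the determinant; it is the quickest self-contained route and avoids invoking the general Laplace expansion. Set $\matC = \matA\matB\transp$, so that $\matC_{ij} = \sum_{t=1}^m \matA_{i,t}\matB_{j,t}$. Plugging this into the Leibniz formula and then distributing each product of $m$-term sums over all index functions $\ell : [n]\to[m]$ gives
$$\det(\matC) = \sum_{\pi\in S_n}\mathrm{sgn}(\pi)\prod_{i=1}^n\Big(\sum_{t=1}^m \matA_{i,t}\matB_{\pi(i),t}\Big) = \sum_{\ell:[n]\to[m]}\ \sum_{\pi\in S_n}\mathrm{sgn}(\pi)\prod_{i=1}^n \matA_{i,\ell(i)}\matB_{\pi(i),\ell(i)}.$$

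First I would swap the two outer sums and factor out the part that depends only on $\matA$:
$$\det(\matC) = \sum_{\ell:[n]\to[m]}\Big(\prod_{i=1}^n \matA_{i,\ell(i)}\Big)\ \sum_{\pi\in S_n}\mathrm{sgn}(\pi)\prod_{i=1}^n \matB_{\pi(i),\ell(i)}.$$
The inner sum over $\pi$ is precisely the determinant of the $n\times n$ matrix whose $i$th column is column $\ell(i)$ of $\matB$; in particular, if $\ell$ is not injective this matrix has a repeated column and the term vanishes, so only injective $\ell$ survive. Each injective $\ell$ factors uniquely as $\ell = \rho_{\cal S}\circ\tau$, where ${\cal S} = \ell([n])\in\binset{[m]}{n}$, $\rho_{\cal S}:[n]\to{\cal S}$ is the increasing enumeration of ${\cal S}$, and $\tau\in S_n$. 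Reordering columns shows the inner $\matB$-determinant equals $\mathrm{sgn}(\tau)\det(\matB_{\cal S})$, while resumming $\prod_i \matA_{i,\ell(i)}$ against $\mathrm{sgn}(\tau)$ over all $\tau\in S_n$ reproduces $\det(\matA_{\cal S})$ (after the substitution $\sigma = \tau^{-1}$ the signs coincide). Hence the contribution of all $\ell$ with image ${\cal S}$ is $\det(\matA_{\cal S})\det(\matB_{\cal S}) = \det(\matA_{\cal S})\det(\matB\transp_{\cal S})$, and summing over ${\cal S}\in\binset{[m]}{n}$ yields the claim.

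The one step requiring care is this last collapse: one must check that the sign $\mathrm{sgn}(\tau)$ incurred when sorting the columns of $\matB$ cancels exactly against the sign introduced when the $n!$ orderings of a fixed ${\cal S}$ are reassembled into $\det(\matA_{\cal S})$, leaving no residual factor. An alternative I could present instead, which trades this for a (routine) generalized-Laplace sign computation, is the block-matrix argument: for
$$M = \begin{pmatrix}\matI_m & \matB\transp\\ -\matA & \bm{0}_{n\times n}\end{pmatrix},$$
the Schur-complement determinant identity applied to the invertible block $\matI_m$ gives $\det(M) = \det(\matA\matB\transp)$, while expanding $\det(M)$ by Laplace along its last $n$ rows kills every column subset meeting the last $n$ columns (those minors contain a zero column) and leaves exactly $\sum_{{\cal S}\in\binset{[m]}{n}}\det(-\matA_{\cal S})\cdot(\pm\det(\matB\transp_{\cal S}))$, the signs combining to the stated formula. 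I would write up the first route.
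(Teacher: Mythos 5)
Your Leibniz-expansion argument is correct, and there is nothing in the paper to compare it against: the paper states this lemma in its ``Other Known Results'' subsection as a classical fact and gives no proof at all, so any complete proof is an addition rather than an alternative to the authors' route. Checking your steps: the inner sum $\sum_{\pi}\mathrm{sgn}(\pi)\prod_i \matB_{\pi(i),\ell(i)}$ is indeed the determinant of the matrix whose $i$th column is column $\ell(i)$ of $\matB$, so non-injective $\ell$ vanish; for injective $\ell=\rho_{\cal S}\circ\tau$ that determinant is $\mathrm{sgn}(\tau)\det(\matB_{\cal S})$; and the collapse you flag as the delicate step is in fact immediate, since $\sum_{\tau\in S_n}\mathrm{sgn}(\tau)\prod_i (\matA_{\cal S})_{i,\tau(i)}$ is literally the Leibniz formula for $\det(\matA_{\cal S})$ --- the sign from sorting the columns of $\matB$ is exactly the sign the Leibniz sum for $\matA_{\cal S}$ requires, with no residual cancellation to track (your remark about substituting $\sigma=\tau^{-1}$ is harmless but unnecessary). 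Note also that with the paper's convention $\matB\transp_{\cal S}=(\matB_{\cal S})\transp$, the final identification $\det(\matB_{\cal S})=\det(\matB\transp_{\cal S})$ is just invariance of the determinant under transposition. Your sketched block-matrix alternative via the Schur complement of $\begin{pmatrix}\matI_m & \matB\transp\\ -\matA & \bm{0}_{n\times n}\end{pmatrix}$ is also a standard valid route, at the cost of the generalized-Laplace sign bookkeeping you mention; writing up the first route, as you propose, is the cleaner choice.
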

\begin{lemma}[Theorem 1.2.12 in~\cite{HJ85}]
\label{GK12-lem6}
Let $\lambda_1 \ge \lambda_2 \ge ... \ge \lambda_m,$ denote the eigenvalues of $\matA \in \R^{m \times m}$.
Let $1 \le k \le m$.
Then,
$$ \sum_{{\cal S} \in \binset{[m]}{k}} \det\left( \matA_{{\cal S},{\cal S}} \right) = \sum_{{\cal S} \in \binset{[m]}{k}} \prod_{i \in {\cal S}} \lambda_i.$$
Here, $\matA_{{\cal S},{\cal S}} \in \R^{k \times k}$ denotes the submatrix of $\matA$ corresponding to the rows and the columns in ${\cal S}$
$\subseteq [m]$, which has cardinality $k$.
\end{lemma}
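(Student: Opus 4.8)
The plan is to view the left-hand side as the sum of all $k\times k$ principal minors of $\matA$ and to prove the classical identity that this sum equals the $k$th elementary symmetric polynomial of the eigenvalues of $\matA$, namely $e_k(\lambda_1,\ldots,\lambda_m)=\sum_{{\cal S}\in\binset{[m]}{k}}\prod_{i\in{\cal S}}\lambda_i$, which is exactly the right-hand side. I would derive this by expanding the characteristic polynomial $p(t)=\det(t\matI_m-\matA)$ in two different ways and comparing the coefficients of $t^{m-k}$.

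First, I would expand $p(t)$ using multilinearity of the determinant in its columns. Writing the $j$th column of $t\matI_m-\matA$ as $t\bm{e}_j-\a_j$, where $\a_j$ is the $j$th column of $\matA$, and distributing over all $2^m$ choices gives
\[
p(t)=\sum_{{\cal T}\subseteq[m]}\det(\mat{M}_{\cal T}),
\]
where $\mat{M}_{\cal T}$ is the matrix whose $j$th column is $t\bm{e}_j$ if $j\notin{\cal T}$ and $-\a_j$ if $j\in{\cal T}$. Next I would Laplace-expand $\det(\mat{M}_{\cal T})$ along the unit-vector columns (those with $j\notin{\cal T}$): each contributes a factor of $t$ and removes its own row, so what remains is the submatrix of $-\matA$ on the rows and columns indexed by ${\cal T}$, giving $\det(\mat{M}_{\cal T})=t^{m-|{\cal T}|}(-1)^{|{\cal T}|}\det(\matA_{{\cal T},{\cal T}})$. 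Collecting terms by $|{\cal T}|$ yields
\[
p(t)=\sum_{j=0}^m(-1)^j\Big(\sum_{{\cal S}\in\binset{[m]}{j}}\det(\matA_{{\cal S},{\cal S}})\Big)t^{m-j}.
\]

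Second, over the algebraic closure we have $p(t)=\prod_{i=1}^m(t-\lambda_i)=\sum_{j=0}^m(-1)^j e_j(\lambda_1,\ldots,\lambda_m)\,t^{m-j}$. Equating the coefficients of $t^{m-k}$ in the two expansions and cancelling $(-1)^k$ gives $\sum_{{\cal S}\in\binset{[m]}{k}}\det(\matA_{{\cal S},{\cal S}})=e_k(\lambda_1,\ldots,\lambda_m)$, which is the claim.

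The main point requiring care is the sign bookkeeping in the Laplace expansion of $\det(\mat{M}_{\cal T})$ — one must verify that stripping off the unit-vector columns leaves exactly the principal submatrix $\matA_{{\cal T},{\cal T}}$ with the single overall sign $(-1)^{|{\cal T}|}$ and no leftover permutation sign. A clean way to settle this is to first apply a simultaneous row-and-column permutation, which preserves $\det(\mat{M}_{\cal T})$ and carries principal submatrices to principal submatrices, so that the indices of ${\cal T}$ come first; then $\mat{M}_{\cal T}$ is block triangular with diagonal blocks $-\matA_{{\cal T},{\cal T}}$ and $t\matI_{m-|{\cal T}|}$, and its determinant factors at once. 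Since this is a standard textbook fact (Theorem 1.2.12 in~\cite{HJ85}), one may alternatively just cite it; I sketch the argument here only for completeness.
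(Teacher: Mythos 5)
Your argument is correct: the two-way expansion of the characteristic polynomial $\det(t\matI_m-\matA)$ — multilinearity in the columns plus the block-triangular reduction to get the signed sums of principal minors, versus the factorization $\prod_i(t-\lambda_i)$ over the algebraic closure — cleanly yields the identity, and it even handles general (possibly non-symmetric) $\matA$, which covers the symmetric positive semi-definite case used in the paper. The paper offers no proof of its own here, simply citing Theorem 1.2.12 of~\cite{HJ85}, and your derivation is essentially the standard textbook proof of that cited result, so nothing further is needed.
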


Lemma~\ref{GK12-lem6} expresses the $k$-th elementary symmetric function of the eigenvalues of $\matA$ as
the sum of the $k$-by-$k$ principal minors of $\matA$.

\section{Algorithms}\label{sec:main}

\subsection{Deterministic Greedy Removal (Frobenius norm)}\label{sec:greedy_removal}

This section describes an algorithm based on the same greedy removal strategy as in~\cite{HM07},
but it is faster, since it exploits the SVD decomposition of the matrix and the ability to quickly update it.
Additionally,
our algorithm efficiently detects columns whose removal results in a rank deficient matrix, and avoids removing them
(see the discussion in Section~\ref{sec:prior}).
We prove that our algorithm achieves the same approximation bounds as in~\cite{HM07}.
The proof of~\cite{HM07} does not apply to our algorithm, since~\cite{HM07} assumes implicitly that in all the iterations,
removing a single column does not result in a rank deficient matrix.

A complete pseudo-code description of our algorithm appears as Algorithm~\ref{alg1}. We now explain the steps of the algorithm.
To facilitate the description of the algorithm, we assume the columns in $\matX_{\cal S}$ are indexed using their index in $\matX$. Our algorithm constructs ${\cal S}$ by iteratively removing columns. That is, we start with a complete subset ${\cal S}_0 = [m]$. Then we proceed with $m-k$ iterations, since the goal is to select $k$ columns. We reserve the index
$i$ to refer to the iterations of the algorithm; so, $i=1,2,\dots,m-k$.

Each iteration $i$ of the algorithm starts with some ${\cal S}_{i-1} \subseteq [m]$ of cardinality $m-i+1$, and removes  one index from it, to obtain ${\cal S}_{i} \subset {\cal S}_{i-1}$ of cardinality $m-i$. Our algorithm does exactly $m-k$ iterations, hence it returns ${\cal S}_{m-k}$ of cardinality $k$.
Additionally, in each iteration we maintain an SVD of the current subset,
$\matX_{{\cal S}_i} = \matU^{(i)} \matSig^{(i)} \matY^{(i)}$. We denote the singular values of $\matX_{{\cal S}_i}$ by $\sigma^{(i)}_1,\ldots,\sigma^{(i)}_n$, and the columns of $\matY^{(i)}$ by $\{\y^{(i)}_r\}_{r \in {{\cal S}_i}}$.

Our algorithm begins by computing the SVD of $\matX_{{\cal S}_0} = \matX$. Then, for $i=1,2,\dots,m-k$,
iteration $i$ has two stages:
\begin{enumerate}
\item Finding an index $j_i$ to remove. We then set ${\cal S}_{i} = {\cal S}_{i-1} - \{j_i\}$.
\item Updating the SVD $\matX_{{\cal S}_{i}} = \matU^{(i)} \matSig^{(i)} \matY^{(i)}$. The algorithm needs only $\matSig^{(i)}$ and $\matY^{(i)}$ (no need to downdate $\matU^{(i)}$).
\end{enumerate}

We now describe each stage in detail. Our algorithm implements the greedy removal idea of Theorem 2 in~\cite{HM07},
so $j_i$ is selected as to minimize $\FNormS{\matX^{\dagger}_{{\cal S}_{i}}}$ subject to constraint ${\cal S}_{i}$ is obtained from ${\cal S}_{i-1}$ by removing a single entry, and that the rank of $\matX_{{\cal S}_{i}}$ is equal to the rank of $\matX_{{\cal S}_{i-1}}$. Specifically, the formula for $j_i$ is
\begin{eqnarray}
\label{eq:unstable}
j_i = \arg \min_{r \in {{\cal S}_{i-1}}; \TNorm{\y^{(i-1)}_r}<1}
\left(  \frac{ \sum_{l=1}^{n} \left( \y^{(i-1)}_{rl} /\sigma^{(i-1)}_l \right)^2 } {1-\TNormS{\y^{(i-1)}_r}} \right)\,.
\end{eqnarray}
In the last equation, $\y^{(i-1)}_{rl}$ is the the $l$ element of $\y^{(i-1)}_r$ or equivalently, $(l,r)$ element of $\matY^{(i-1)}$.
We will prove shortly that indeed $j_i$ is the minimizer we seek.

As for the second stage, we simply downdate the SVD of $\matX_{{\cal S}_{i-1}}$ to obtain an SVD of $\matX_{{\cal S}_{i}}$, using the algorithm described in~\cite{GE95}.

\begin{algorithm}[t]
\textbf{Input:} $\matX\in\R^{n \times m}$ ($m> n$, $\rank(\matX)=n$), sampling parameter \math{n \le k \le m}. \\
\noindent \textbf{Output:} Set $\cal S$ $\subseteq [m]$ of cardinality $k$.
\begin{algorithmic}[1]
\STATE ${\cal S}_0 \gets [m]$
\STATE Compute the SVD of $\matX_{{\cal S}_0} $:  $\matX_{{\cal S}_0} = \matU^{(0)} \matSig^{(0)} \matY^{(0)}$
\FOR {$i=1,2,\dots,m-k$}
\STATE Let the singular values of $\matX_{{\cal S}_{i-1}}$ be $\sigma^{(i-1)}_1,\ldots,\sigma^{(i-1)}_n$.
\STATE Let the columns of $\matY^{(i-1)}$ be $\{\y^{(i-1)}_r\}_{r \in {{\cal S}_{i-1}}}$.\\
Denote by $\y^{(i-1)}_{rl}$ the $l$-th element of $\y^{(i-1)}_r$.
\STATE $j_i \gets \arg \min_{r \in {{\cal S}_{i-1}}; \TNorm{\y^{(i-1)}_r}<1} \left(  \frac{ \sum_{l=1}^{n} \left( \y^{(i-1)}_{rl} /\sigma^{(i-1)}_l \right)^2 } {1-\TNormS{\y^{(i-1)}_r}} \right)\,.$ \\
\COMMENT{See proof on how to implement this step in a stable manner.}
\STATE ${\cal S}_{i} \gets {\cal S}_{i-1} - \{j_i\}$
\STATE Downdate the SVD of $\matX_{{\cal S}_{i-1}}$ to obtain an SVD of $\matX_{{\cal S}_{i}}=\matU^{(i)} \matSig^{(i)} \matY^{(i)}$.\\
\COMMENT{Using an algorithm described in~\cite{GE95}}
\ENDFOR
\RETURN $\cal S$
\end{algorithmic}
\caption{A deterministic greedy removal algorithm for subset selection (Theorem~\ref{thm1}).}
\label{alg1}
\end{algorithm}

Before proceeding to the analysis of Algorithm~\ref{alg1}, we discuss two numerical stability issues that affect an actual
implementation of Algorithm~\ref{alg1}.
Computing the terms in equation~\eqref{eq:unstable} might be problematic since the computation can potentially suffer from catastrophic cancellations when $\TNorm{\y^{(i-1)}_r} \approx 1$. However, to find the minimizer we need to do only comparisons. That is, we need to be able to determine for two indices $g,h\in {{\cal S}_{i-1}}$ whether
$$\frac{ \sum_{l=1}^{n} \left( \y^{(i-1)}_{gl} /\sigma^{(i-1)}_l \right)^2 } {1-\TNormS{\y^{(i-1)}_g}} \leq \frac{ \sum_{l=1}^{n} \left( \y^{(i-1)}_{hl} /\sigma^{(i-1)}_l \right)^2 } {1-\TNormS{\y^{(i-1)}_h}},$$
or not. It is easy to verify that provided $\TNorm{\y^{(i-1)}_g} < 1$ and $\TNorm{\y^{(i-1)}_h} < 1$,  the last equation holds if and only if
$$ \sum_{l=1}^{n} \left( \y^{(i-1)}_{gl} /\sigma^{(i-1)}_l \right)^2 + \TNormS{\y^{(i-1)}_g} \cdot \sum_{l=1}^{n} \left( \y^{(i-1)}_{hl} /\sigma^{(i-1)}_l \right)^2 \leq $$ $$
\sum_{l=1}^{n} \left( \y^{(i-1)}_{hl} /\sigma^{(i-1)}_l \right)^2 + \TNormS{\y^{(i-1)}_h} \cdot \sum_{l=1}^{n} \left( \y^{(i-1)}_{gl} /\sigma^{(i-1)}_l \right)^2\,.$$
The last equation does not do any subtraction, so it does not suffer from catastrophic cancellations.

Another issue with equation~\eqref{eq:unstable} is that an index $h \in {{\cal S}_{i-1}}$ is a candidate minimizer only if $\TNorm{\y^{(i-1)}_h} < 1$. Under inexact arithmetic that will always be the case, even if removing the column results in a rank deficient system. This issue can be solved by replacing the test $\TNorm{\y^{(i-1)}_h} < 1$ with $\TNorm{\y^{(i-1)}_h} < 1 - \tau$ for some small threshold $\tau$.

\begin{theorem}\label{thm1}
Fix $\matX \in \R^{n \times m}$ ($m>n$, $\rank(\matX)=n$) and sampling parameter $m \geq k \ge n$.
Algorithm~\ref{alg1} needs $O\left( mn^2 + mn\left(m - k\right) \right)$ operations and
deterministically constructs a set $\cal S$ $\subseteq [m]$ of cardinality $k$ with
\eqan{
\FNormS{  \matX_{ \cal S}^{\dagger} } \le \frac{m-n+1}{k-n+1} \cdot \FNormS{\matX^{\dagger}}
\qquad\mbox{and} \hspace{0.3in}
 \TNormS{ \matX_{ \cal S}^{\dagger} } \le \frac{m-n+1}{k-n+1} \cdot n \cdot \TNormS{\matX^{\dagger}}.
}
Moreover, if $\matX$ contains orthonormal rows, the operation count is $O\left(mn\left(m - k\right) \right)$.
\end{theorem}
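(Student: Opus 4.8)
The plan is to analyze one iteration of the greedy removal and show that it makes the right local choice, then chain the per-iteration guarantees into the global bound. The key observation is a Cramer's-rule / Sherman--Morrison identity: if $\matX_{{\cal S}}$ has SVD $\matU\matSig\matY$ with columns $\y_r$ of $\matY$ (so $\matX_{{\cal S}} = \matU\matSig\matY$, and the rows of $\matY$ are orthonormal), then removing column $r$ keeps the rank full if and only if $\TNorm{\y_r}<1$ (since $\matY\matY\transp = \matI_n$ forces $\sum_{s\in{\cal S}}\y_s\y_s\transp = \matI_n$, so deleting $\y_r$ leaves $\matI_n - \y_r\y_r\transp$, which is invertible iff $\TNorm{\y_r}<1$). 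Under this condition, I would compute $\FNormS{\matX_{{\cal S}-\{r\}}^{\dagger}} = \Trace{(\matX_{{\cal S}-\{r\}}\matX_{{\cal S}-\{r\}}\transp)^{-1}}$ via the identity $\matX_{{\cal S}-\{r\}}\matX_{{\cal S}-\{r\}}\transp = \matX_{{\cal S}}\matX_{{\cal S}}\transp - \x_r\x_r\transp$ and the Sherman--Morrison formula, arriving at
\[
\FNormS{\matX_{{\cal S}-\{r\}}^{\dagger}} = \FNormS{\matX_{{\cal S}}^{\dagger}} + \frac{\x_r\transp(\matX_{{\cal S}}\matX_{{\cal S}}\transp)^{-2}\x_r}{1 - \x_r\transp(\matX_{{\cal S}}\matX_{{\cal S}}\transp)^{-1}\x_r}.
\]
Writing $\x_r = \matU\matSig\y_r$ turns the numerator into $\sum_{l}(\y_{rl}/\sigma_l)^2$ and the denominator into $1-\TNormS{\y_r}$, which is exactly the quantity minimized in~\eqref{eq:unstable}; so $j_i$ is precisely the index whose removal minimizes $\FNormS{\matX_{{\cal S}_i}^{\dagger}}$ among all rank-preserving single-column removals. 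This justifies the algorithm and gives the stability reformulation already sketched before the theorem statement.

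Next, to get the approximation factor I would use an averaging argument over which column to remove. Since $\sum_{r\in{\cal S}_{i-1}}\TNormS{\y^{(i-1)}_r} = \Trace{\matY^{(i-1)}\matY^{(i-1)\transp}} = n$ and $|{\cal S}_{i-1}| = m-i+1$, the average of $1-\TNormS{\y^{(i-1)}_r}$ is $(m-i+1-n)/(m-i+1)$, which is positive as long as $m-i+1 > n$ — i.e. a rank-preserving removal always exists until we are down to $n$ columns, so the algorithm never gets stuck before reaching cardinality $k\ge n$. Similarly $\sum_r \sum_l(\y^{(i-1)}_{rl}/\sigma^{(i-1)}_l)^2 = \FNormS{\matX_{{\cal S}_{i-1}}^{\dagger}}$. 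The standard trick (as in~\cite{HM07}) is that the minimizer $j_i$ satisfies the increment bound obtained by comparing to a suitable weighted average; this yields
\[
\FNormS{\matX_{{\cal S}_i}^{\dagger}} \le \frac{m-i+1-n+1}{m-i-n+1}\cdot\FNormS{\matX_{{\cal S}_{i-1}}^{\dagger}},
\]
i.e. the ratio picks up a factor $(m-i-n+2)/(m-i-n+1)$ at step $i$. Telescoping over $i=1,\dots,m-k$ collapses to $(m-n+1)/(k-n+1)$, giving the Frobenius bound; the spectral bound then follows from $\TNormS{\cdot}\le\FNormS{\cdot}$ together with $\FNormS{\matX^{\dagger}}\le n\TNormS{\matX^{\dagger}}$.

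Finally, for the operation count I would argue: computing the initial (thin) SVD of $\matX$ costs $O(mn^2)$ (or is free when $\matX$ already has orthonormal rows, in which case $\matSig^{(0)}=\matI_n$ and $\matY^{(0)}=\matX$). Each of the $m-k$ iterations must (i) evaluate the comparison criterion for all $O(m)$ remaining columns, each evaluation being $O(n)$ work given $\matSig^{(i-1)}$ and the columns of $\matY^{(i-1)}$, for $O(mn)$ per iteration; and (ii) downdate the SVD after deleting one column of $\matY^{(i-1)}$, which by the algorithm of~\cite{GE95} costs $O(mn)$ (downdating a rank-$n$ factorization of an $n\times(m-i+1)$ matrix; note we never need $\matU^{(i)}$). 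So the loop costs $O(mn(m-k))$, and the total is $O(mn^2 + mn(m-k))$, dropping the $mn^2$ term when the rows are already orthonormal. The main obstacle I anticipate is the careful handling of the rank-deficiency detection — making sure the restriction $\TNorm{\y^{(i-1)}_r}<1$ in~\eqref{eq:unstable} really does characterize exactly the admissible removals and that such a removal always exists while $|{\cal S}_{i-1}|>n$ — since this is precisely the gap in the argument of~\cite{HM07}; the rest is bookkeeping with Sherman--Morrison and a telescoping product.
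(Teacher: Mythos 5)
Your proposal is correct, and its skeleton (the Sherman--Morrison identity for the removal cost, the characterization of rank-preserving removals via $\TNorm{\y^{(i-1)}_r}<1$, the telescoping of per-step factors, and the operation-count accounting) coincides with the paper's. Where you genuinely diverge is in how the per-iteration bound $\FNormS{\matX_{{\cal S}_i}^{\dagger}} \le \frac{m-i-n+2}{m-i-n+1}\,\FNormS{\matX_{{\cal S}_{i-1}}^{\dagger}}$ is justified: the paper obtains it from an existence statement (Lemma~\ref{lem:oneremove}: some single-column removal achieves the factor $\frac{M-n+1}{M-n}$), which is itself derived from the volume-sampling expectation bound of Lemma~\ref{lem:volrand}, and then chains by induction using that the greedy choice is a minimizer; you instead argue directly by averaging the Sherman--Morrison increment. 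Your route does work: with $a_r=\sum_{l}\bigl(\y^{(i-1)}_{rl}/\sigma^{(i-1)}_l\bigr)^2$ and $b_r=1-\TNormS{\y^{(i-1)}_r}$, the identities $\sum_{r} a_r=\FNormS{\matX_{{\cal S}_{i-1}}^{\dagger}}$ and $\sum_{r} b_r=(m-i+1)-n$ together with the mediant inequality $\min_{r:\,b_r>0} a_r/b_r \le \bigl(\sum_r a_r\bigr)/\bigl(\sum_r b_r\bigr)$ give exactly the desired factor, and the restriction to $b_r>0$ is harmless because the excluded terms have $b_r=0$ and only enlarge the numerator, while $\sum_r b_r>0$ guarantees an admissible removal exists --- precisely the rank-deficiency gap of~\cite{HM07} that the paper instead repairs through the volume-sampling detour. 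Your argument is thus more elementary and self-contained (it never invokes Lemma~\ref{lem:volrand}), whereas the paper's route has the side benefit that the volume-sampling machinery is needed anyway for Theorem~\ref{thm4} and exhibits the greedy guarantee as a derandomization of the expectation bound. To make your write-up fully rigorous you should replace the appeal to ``the standard trick'' by an explicit statement of the mediant/averaging inequality and of why restricting the minimum to indices with $\TNorm{\y^{(i-1)}_r}<1$ does not weaken it; everything else is complete.
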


Before proceeding with the proof we state an auxiliary lemma. However, we defer the proof to Section~\ref{sec:volume} since this Lemma is a corollary of a Theorem that appears in that section.

\begin{lemma}\label{lem:oneremove}
Let $\matX\in\mathbb{R}^{n\times m}$ ($m \ge n)$ be a full rank matrix. There exists a subset ${\cal S}\subset[m]$ of cardinality $m-1$ such that $\matX_{\cal S}$ is full rank and
$$
\FNormS{\matX^{\dagger}_{\cal S}}\leq \frac{m - n + 1}{m - n} \cdot \FNormS{\matX^{\dagger}}\,.
$$
\end{lemma}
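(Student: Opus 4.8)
The plan is to use the averaging/probabilistic method over single-column removals, exactly in the spirit of volume sampling with $k = m-1$. Write $\matX = \matU\matSig\matV\transp$ in thin SVD form, so that $\FNormS{\matX^{\dagger}} = \sum_{l=1}^n \sigma_l^{-2}$, and recall the column notation $\matV\transp = \matY = [\y_1,\dots,\y_m]$ with $\y_r \in \R^n$. Since $\matV$ has orthonormal columns, $\sum_{r=1}^m \y_r \y_r\transp = \matI_n$, hence $\sum_{r=1}^m \TNormS{\y_r} = n$. For a single removed index $r$, the Sherman--Morrison formula gives
\[
\FNormS{\matX^{\dagger}_{[m]-\{r\}}} = \trace\!\left(\left(\matX\matX\transp - \x_r\x_r\transp\right)^{-1}\right)
= \FNormS{\matX^{\dagger}} + \frac{ \x_r\transp (\matX\matX\transp)^{-2} \x_r }{ 1 - \x_r\transp(\matX\matX\transp)^{-1}\x_r },
\]
provided the denominator is positive, i.e. provided $\matX_{[m]-\{r\}}$ is full rank. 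Rewriting in the SVD coordinates, $\x_r\transp(\matX\matX\transp)^{-1}\x_r = \TNormS{\y_r}$ and $\x_r\transp(\matX\matX\transp)^{-2}\x_r = \sum_{l=1}^n (\y_{rl}/\sigma_l)^2$, so the increment equals the quantity minimized in~\eqref{eq:unstable}. Note that $\matX_{[m]-\{r\}}$ is full rank precisely when $\TNormS{\y_r} < 1$, and that there must be at least one such $r$ (in fact at least $m-n+1$ of them), since $\sum_r \TNormS{\y_r} = n \le m-1 < m$ forces some $\TNormS{\y_r} < 1$.

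The key step is a weighted averaging argument: I would average the increment over $r$ with weights $w_r = 1 - \TNormS{\y_r} \ge 0$ (which sum to $m - n$), so that the troublesome denominators cancel:
\[
\sum_{r=1}^m w_r \left(\FNormS{\matX^{\dagger}_{[m]-\{r\}}} - \FNormS{\matX^{\dagger}}\right)
= \sum_{r=1}^m \sum_{l=1}^n \frac{\y_{rl}^2}{\sigma_l^2}
= \sum_{l=1}^n \frac{1}{\sigma_l^2}\sum_{r=1}^m \y_{rl}^2 = \sum_{l=1}^n \frac{1}{\sigma_l^2} = \FNormS{\matX^{\dagger}},
\]
where the inner sum $\sum_r \y_{rl}^2 = 1$ because the $l$-th row of $\matV\transp$ has unit norm. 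Since the weights sum to $m-n$, there exists an index $r^\star$ with $w_{r^\star} > 0$ (hence $\matX_{[m]-\{r^\star\}}$ full rank) achieving at most the weighted average, giving $\FNormS{\matX^{\dagger}_{[m]-\{r^\star\}}} - \FNormS{\matX^{\dagger}} \le \FNormS{\matX^{\dagger}}/(m-n)$, i.e. $\FNormS{\matX^{\dagger}_{[m]-\{r^\star\}}} \le \frac{m-n+1}{m-n}\FNormS{\matX^{\dagger}}$, and take ${\cal S} = [m] - \{r^\star\}$.

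The main obstacle — and the reason the excerpt defers the proof — is the bookkeeping around rank deficiency: one must restrict the averaging to indices $r$ with $\TNormS{\y_r} < 1$, and verify both that such indices exist and that the weighted identity above still holds when the sum is effectively over this restricted set (it does, because the excluded indices have $w_r = 0$ and their increment, though formally undefined, never enters the weighted sum). Since the lemma is stated as a corollary of a theorem in Section~\ref{sec:volume}, the cleanest route in the paper is presumably to instantiate the general volume-sampling expectation bound of Lemma~\ref{lem:volrand} at $k = m-1$ — giving $\Expect{\FNormS{\matX_{\cal S}^{\dagger}}} \le \frac{m-n+1}{m-n}\FNormS{\matX^{\dagger}}$ over ${\cal S}\sim\volsamp(\matX,m-1)$ — and then observe that some set in the support of that distribution achieves at most the expectation; any set in the support automatically has $\matX_{\cal S}$ full rank since $\det(\matX_{\cal S}\matX_{\cal S}\transp) > 0$ for it to carry positive probability. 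I would present the direct averaging argument above as the self-contained version, noting the equivalence.
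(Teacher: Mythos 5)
Your proposal is correct, and your closing paragraph in fact coincides with the paper's actual proof: the lemma is restated as Corollary~\ref{cor:oneremove} and proved by drawing ${\cal T}\sim\volsamp(\matX,m-1)$, invoking the expectation bound of Lemma~\ref{lem:volrand}, observing that a discrete random variable attains a value at most its expectation with positive probability, and noting that every set in the support of $\volsamp(\matX,m-1)$ automatically gives a full-rank $\matX_{\cal S}$. Your self-contained Sherman--Morrison argument is a genuinely different and more elementary route: it replaces the Cauchy--Binet computations behind Lemma~\ref{lem:volrand} by the exact single-column downdate identity (the same one driving Theorem~\ref{thm1}) plus a weighted average with weights $w_r=1-\TNormS{\y_r}$, which sum to $m-n$ and cancel the denominators; note these weights are precisely what volume sampling at $k=m-1$ uses implicitly, since $\det\left(\matX_{[m]-\{r\}}\matX_{[m]-\{r\}}\transp\right)=\det\left(\matX\matX\transp\right)\left(1-\TNormS{\y_r}\right)$, so the two arguments are the same averaging in different clothing. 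What the paper's route buys is economy (Lemma~\ref{lem:volrand} is needed anyway for Theorem~\ref{thm4}, so the corollary takes two lines); what yours buys is independence from the volume-sampling machinery, which is aesthetically pleasant given that Lemma~\ref{lem:oneremove} is consumed inside the proof of Theorem~\ref{thm1} and is only proved later (there is no circularity in the paper either way, since Lemma~\ref{lem:volrand} does not rely on Theorem~\ref{thm1}). Two minor points: your parenthetical that at least $m-n+1$ indices satisfy $\TNormS{\y_r}<1$ should read $m-n$ (e.g.\ $\matX=\left[\,\matI_n \;\; \bm{0}_{n\times(m-n)}\,\right]$ attains exactly $m-n$), though your argument only needs one such index; and both proofs implicitly assume $m>n$, which is harmless since the stated bound is vacuous when $m=n$.
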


\begin{proof}[Proof of Theorem~\ref{thm1}]
The spectral norm bound  is immediate from the Frobenius norm bound using the fact that for any
matrix $\matB$, 
$$\TNormS{\matB} \le \FNormS{\matB} \le \rank(\matB) \cdot \TNormS{\matB}.$$ So, we prove only the Frobenius norm bound.

We now prove that $j_i$, given by equation~\eqref{eq:unstable} (and line 6 in Algorithm~\ref{alg1})),  minimizes $\FNormS{\matX^{\dagger}_{{\cal S}_{i}}}$ subject to constraint ${\cal S}_{i}$ is obtained from ${\cal S}_{i-1}$ by removing a single entry, and that the rank of $\matX_{{\cal S}_{i}}$ is equal to the rank of $\matX_{{\cal S}_{i-1}}$.

First, we argue that for any $r\in{\cal S}_{i-1}$ the matrix 
$$\matX_{{\cal S}_{i-1}-\{r\}}\matX\transp_{{\cal S}_{i-1}-\{r\}}=\matX_{{\cal S}_{i-1}}\matX_{{\cal S}_{i-1}}\transp  - \x^{(i-1)}_r (\x^{(i-1)}_r)\transp$$ is singular if and only if $\TNorm{\y^{(i-1)}_r}=1$ (under the assumption that $\matX_{{\cal S}_{i-1}}\matX_{{\cal S}_{i-1}}\transp$ is non-singular).
Notice that,
$$\matX_{{\cal S}_{i-1}}\matX_{{\cal S}_{i-1}}\transp  - \x^{(i-1)}_r (\x^{(i-1)}_r)\transp = \matU^{(i-1)} \matSig^{(i-1)} \left(\matI_n - \y^{(i-1)}_r (\y^{(i-1)}_r)\transp \right)  \matSig^{(i-1)} (\matU^{(i-1)})\transp\,.$$
The matrix $\matU^{(i-1)}$ is full rank (it is square unitary), so we find that
$$\matX_{{\cal S}_{i-1}}\matX_{{\cal S}_{i-1}}\transp  - \x^{(i-1)}_r (\x^{(i-1)}_r)\transp$$ is singular if and only if
$$\matSig^{(i-1)} \left(\matI_n - \y^{(i-1)}_r (\y^{(i-1)}_r)\transp \right)  \matSig^{(i-1)}$$ is singular. We now observe that $\matSig^{(i-1)}$ is full rank (it is diagonal with positive values on the diagonal) as well, so we find that
$$\matX_{{\cal S}_{i-1}}\matX_{{\cal S}_{i-1}}\transp  - \x^{(i-1)}_r (\x^{(i-1)}_r)\transp$$ is singular if and only if
$$\matI_n - \y^{(i-1)}_r (\y^{(i-1)}_r)\transp$$ is singular. That can hold only if $\TNorm{\y^{(i-1)}_r} = 1$.  Therefore, comparing the norm of $\y^{(i-1)}_r$ with $1$ is an efficient way (once we have an SVD) under exact arithmetic to detect if 
$$\matX_{{\cal S}_{i-1}-\{r\}}\matX\transp_{{\cal S}_{i-1}-\{r\}}$$ 
is singular. This justifies the restriction $\TNorm{\y^{(i-1)}_r}<1$ in equation~\eqref{eq:unstable}.

We proceed with some calculations.
Fix an index $r\in{\cal S}_{i-1}$. If $$\matX_{{\cal S}_{i-1}}\matX_{{\cal S}_{i-1}}\transp  - \x^{(i-1)}_r (\x^{(i-1)}_r)\transp$$ is not singular, then,
$\trace\left(  \left( \matX_{{\cal S}_{i-1}}\matX_{{\cal S}_{i-1}}\transp  - \x^{(i-1)}_r (\x^{(i-1)}_r)\transp  \right)^{-1}  \right)=$
\eqan{
\label{eq:dehoog-step}
&\buildrel{(a)}\over{=}&
\trace\left( \matU^{(i-1)} \left( (\matSig^{(i-1)})^2 -  \matSig^{(i-1)} \y^{(i-1)}_r (\y^{(i-1)}_r)\transp \matSig^{(i-1)} \right)^{-1} (\matU^{(i-1)})\transp \right) \\
&\buildrel{(b)}\over{=}&
\trace\left( (\matSig^{(i-1)})^{-2} + \frac{(\matSig^{(i-1)})^{-1} \y^{(i-1)}_r (\y^{(i-1)}_r)\transp
(\matSig^{(i-1)})^{-1}}{1 -  (\y^{(i-1)}_r)\transp \y^{(i-1)}_r }  \right) \\
&\buildrel{(c)}\over{=}&
\trace\left( (\matSig^{(i-1)})^{-2}\right) + \trace\left(\frac{(\matSig^{(i-1)})^{-1} \y^{(i-1)}_r (\y^{(i-1)}_r)\transp
(\matSig^{(i-1)})^{-1}}{1 -  (\y^{(i-1)}_r)\transp \y^{(i-1)}_r } \right)\\
&\buildrel{(d)}\over{=}&
\FNormS{\matX_{{\cal S}_{i-1}}^{\dagger}}  + \frac{\trace\left( (\matSig^{(i-1)})^{-1} \y^{(i-1)}_r \left( (\matSig^{(i-1)})^{-1} \y^{(i-1)}_r \right)\transp \right)}{1-\TNormS{\y^{(i-1)}_r}}  \\
&\buildrel{(e)}\over{=}&
\FNormS{\matX_{{\cal S}_{i-1}}^{\dagger}}  + \frac{ \TNormS{ (\matSig^{(i-1)})^{-1} \y^{(i-1)}_r} }{1-\TNormS{\y^{(i-1)}_r}} \\
& \buildrel{(f)}\over{=} &
\FNormS{\matX_{{\cal S}_{i-1}}^{\dagger}} + \frac{ \sum_{l=1}^{n} \left( \y^{(i-1)}_{rl} /\sigma^{(i-1)}_l \right)^2 } {1-\TNormS{\y^{(i-1)}_r}}
}
\math{(a)} follows by replacing the SVD of $\matX$ and the identity $(\matU \matA \matU\transp)^{-1} = \matU \matA^{-1} \matU\transp$ for a unitary $\matU$ and non-singular $\matA$.
\math{(b)} follows from the Sherman-Morrison formula (recall that we assume that the matrix is not singular, so $\TNorm{\y^{(i-1)}_r} \neq 1$) and that for any unitary $\matU$ we have $\trace(\matU \matA \matU\transp)=\trace(\matA)$.
\math{(c)} follows by the linearity of the trace operator.
\math{(d)} follows from the fact that $ 1/(1-\TNormS{\y^{(i-1)}_r})$ is a scalar.
\math{(e)} follows from the fact that for any matrix $\matB$ we have $\trace\left(\matB\matB\transp \right) = \FNormS{\matB}$; in our case, we apply this equality to $\matB = (\matSig^{(i-1)})^{-1} \y^{(i-1)}_r$.
\math{(f)} follows because $\matSig^{(i-1)}$ is diagonal.

These calculations, alongside the observation that 
$$\matX_{{\cal S}_{i-1}}\matX_{{\cal S}_{i-1}}\transp  - \x^{(i-1)}_r (\x^{(i-1)}_r)\transp$$ is singular if and only if $\TNorm{\y^{(i-1)}_r} = 1$, imply that indeed using equation~\eqref{eq:unstable}
we can find the $j_i$ such that $\FNormS{\matX^{\dagger}_{{\cal S}_{i}}}$ is minimized.

We now use this fact to establish the Frobenius norm approximation bound (recall that the spectral norm bound follows immediatly from the Frobenius norm bound). We will show using induction that
\begin{equation}
\FNormS{  \matX_{{\cal S}_i}^{\dagger} } \le \frac{m-n+1}{m-i-n+1} \cdot \FNormS{\matX^{\dagger}}\,.\label{eq:induct}
\end{equation}
Since our algorithm returns ${\cal S}_{m-k}$ the claim follows from~\eqref{eq:induct}.

Equation~\eqref{eq:induct} trivially holds for $i=0$. Assume it holds for $i-1$. We now show it holds for $i$. Note that the cardinality of ${\cal S}_{i-1}$ is $m-i+1$. Lemma~\ref{lem:oneremove} ensures that there exists a subset ${\cal T}_{i} \subset {\cal S}_{i-1}$ of cardinality $m-i$ such that
\eqan{
\FNormS{  \matX_{{\cal T}_{i}}^{\dagger} } \le \frac{m-i-n+2}{m-i-n + 1} \cdot \FNormS{  \matX_{{\cal S}_{i}}^{\dagger} }
&\leq& \frac{m-i-n+2}{m-i-n + 1} \cdot \frac{m-n+1}{m-i-n+2} \cdot \FNormS{\matX^{\dagger}}\\
&=&
\frac{m-n+1}{m-i-n+1} \cdot \FNormS{\matX^{\dagger}}\,.
}
Our algorithm finds a subset ${\cal S}_{i} \subset {\cal S}_{i-1}$ of cardinality $m-i$ with minimal $\FNormS{  \matX_{{\cal S}_{i}}^{\dagger} }$, so
$$\FNormS{  \matX_{{\cal S}_{i}}^{\dagger} } \leq \FNormS{  \matX_{{\cal T}_{i} }^{\dagger} } \leq \frac{m-n+1}{m-i-n+1} \cdot \FNormS{\matX^{\dagger}}\,.$$

We conclude by analyzing the operation count. At the start, Algorithm~\ref{alg1} requires $O(mn^2)$ operations to compute a thin SVD of $\matX$.
Now, at iteration $i$, computing $j_i$ requires $O((m-i+1)n)$ operations. Downdating the SVD to find
$\matSig^{(i)}$ and $\matY^{(i)}$
can be done be done in $O((m-i+1)n)$ operations \footnote{This is precisely Problem 3 in page 794 of~\cite{GE95}; the third paragraph in page 795 of~\cite{GE95}
argues that this problem can be solved in $O(mn \log^2 \epsilon)$ operations, where $\epsilon$ is the machine precision. In our analysis we ignore the $\log^2 \epsilon$ term since $\epsilon$ is constant, and $\log^2 \epsilon$ is not too big since typically $\epsilon \approx 10^{-16}$. Ignoring such terms is common in the analysis of SVD-type algorithms.
}. There are $m-k$ iterations, so overall, $O\left( mn^2 + mn\left(m - k\right) \right)$ operations suffice. If $\matX$ has orthonormal rows, the operation count is just
$O\left( mn\left(m - k\right) \right)$ because the initial SVD is available.
\end{proof}

\subsection{Deterministic Greedy Removal (spectral norm)}\label{sec:greedy_removal2}

We now describe an algorithm which achieves a slightly worse Frobenius norm bound than the bound in the previous theorem
but a slightly better spectral norm bound. Algorithm~\ref{alg2} is the pseudo-code description. Algorithm~\ref{alg2} simply
applies Algorithm~\ref{alg1} on $\matV\transp$, where $\matV \in\R^{m\times n}$  is the matrix containing the top $n$ right singular vectors of $\matX$.  Notice that the output of Algorithm~\ref{alg1} and Algorithm~\ref{alg2}  might be different.
\begin{algorithm}[t]
\textbf{Input:} $\matX\in\R^{n \times m}$ ($m> n$, $\rank(\matX)=n$), sampling parameter \math{n \le k \le m}. \\
\noindent \textbf{Output:} Set $\cal S$ $\subseteq [m]$ of cardinality $k$.
\begin{algorithmic}[1]

\STATE Compute the matrix $\matV \in\R^{m\times n}$  of the right singular vectors corresponding to the top $n$ singular values of $\matX$.

\STATE Run Algorithm~\ref{alg1} with inputs $\matV\transp$ and $k$ to obtain $\cal S$ of cardinality $k$.

\RETURN ${\cal S}$

\end{algorithmic}
\caption{A deterministic greedy removal algorithm for subset selection (Corollary~\ref{cor1}).}
\label{alg2}
\end{algorithm}

\begin{corollary}\label{cor1}
Fix $\matX \in \R^{n \times m}$ ($m>n$, $\rank(\matX)=n$) and sampling parameter $k \ge n$. Algorithm~\ref{alg2} needs $O\left( mn^2 + mn\left(m - k\right) \right)$ operations
and deterministically constructs a set $\cal S$ $\subseteq [m]$ of cardinality $k$ with
$$\FNormS{  \matX_{ \cal S}^{\dagger} } \le \frac{m-n+1}{k-n+1} \cdot n \cdot \TNormS{\matX^{\dagger}}.$$
Also, for $i=1,\dots,n$ we have
$$\sigma_i^2(\matX) \cdot \left(1+ \frac{n\left(m-k\right)}{k-n+1} \right)^{-1} \le \sigma_i^2(\matX_{\cal S})\,.$$
In particular,
$$\TNormS{ \matX_{ \cal S}^{\dagger} } \le \left(1+ \frac{n\left(m-k\right)}{k-n+1} \right) \cdot \TNormS{\matX^{\dagger}}\,.$$
Moreover, if $\matX$ contains orthonormal rows, the operation count is $O\left(mn\left(m - k\right) \right)$.
\end{corollary}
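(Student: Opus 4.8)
The plan is to reduce everything to Theorem~\ref{thm1} applied to the matrix $\matV\transp$, and then translate the bounds about $\matV\transp$ back to statements about $\matX$ using the SVD. Write the SVD of $\matX$ as $\matX = \matU\matSig\matV\transp$ with $\matU \in \R^{n\times n}$ orthogonal, $\matSig \in \R^{n\times n}$ diagonal with the singular values $\sigma_1 \ge \cdots \ge \sigma_n > 0$ on the diagonal, and $\matV \in \R^{m\times n}$ having orthonormal columns (so $\matV\transp \in \R^{n\times m}$ has orthonormal rows). The first observation is that for any ${\cal S}$ we have $\matX_{\cal S} = \matU\matSig(\matV\transp)_{\cal S}$, and since $\matU\matSig$ is an invertible $n\times n$ matrix, $\rank(\matX_{\cal S}) = \rank((\matV\transp)_{\cal S})$; in particular $\matX_{\cal S}$ is full rank iff $(\matV\transp)_{\cal S}$ is, so the set ${\cal S}$ returned by Algorithm~\ref{alg1} on input $\matV\transp$ gives a full-rank $\matX_{\cal S}$ of cardinality $k$. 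By Lemma~\ref{lem:pseudo}, since both factors have rank $n$, $\matX_{\cal S}^{\dagger} = ((\matV\transp)_{\cal S})^{\dagger}(\matU\matSig)^{-1} = ((\matV\transp)_{\cal S})^{\dagger}\matSig^{-1}\matU\transp$.

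Next I would establish the operation count: Algorithm~\ref{alg2} first computes the top-$n$ right singular vectors of $\matX$, which costs $O(mn^2)$ operations, and then runs Algorithm~\ref{alg1} on $\matV\transp$, which has orthonormal rows, so by the last sentence of Theorem~\ref{thm1} that call costs $O(mn(m-k))$. Hence the total is $O(mn^2 + mn(m-k))$. If $\matX$ already has orthonormal rows then $\matX = \matV\transp$ up to the trivial orthogonal factor $\matU$ (indeed one may take $\matSig = \matI$ and $\matV\transp = \matU\transp\matX$), and the initial SVD step is $O(mn^2)$ in general but the whole thing collapses to one invocation of Algorithm~\ref{alg1} on a matrix with orthonormal rows — more carefully, one can simply set $\matV\transp = \matX$ (since $\matX$ orthonormal rows means $\matX\matX\transp = \matI_n$, so the SVD is $\matX = \matI_n\cdot\matI_n\cdot\matX$), avoiding the SVD entirely, giving $O(mn(m-k))$.

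For the approximation bounds, apply Theorem~\ref{thm1} to $\matV\transp \in \R^{n\times m}$. Since $\matV\transp$ has orthonormal rows, all its singular values equal $1$, so $\FNormS{(\matV\transp)^{\dagger}} = \|\matV\transp\|_F^2$-type quantity $= n$ and $\TNormS{(\matV\transp)^{\dagger}} = 1$. Theorem~\ref{thm1} gives $\FNormS{((\matV\transp)_{\cal S})^{\dagger}} \le \frac{m-n+1}{k-n+1}\cdot n$ and, since $(\matV\transp)_{\cal S}$ has at most $n$ nonzero singular values, $\TNormS{((\matV\transp)_{\cal S})^{\dagger}} \le \frac{m-n+1}{k-n+1}$. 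Now for any vector $\x$, $\|\matX_{\cal S}^{\dagger}\x\|_2 \le \|((\matV\transp)_{\cal S})^{\dagger}\|_2\cdot\|\matSig^{-1}\matU\transp\x\|_2 \le \TNorm{((\matV\transp)_{\cal S})^{\dagger}}\cdot\|\matX^{\dagger}\|_2^{?}$ — more precisely, I would invoke Lemma~\ref{lem:thm2aux}-style submultiplicativity: $\TNorm{\matX_{\cal S}^{\dagger}} = \TNorm{((\matV\transp)_{\cal S})^{\dagger}\matSig^{-1}\matU\transp} \le \TNorm{((\matV\transp)_{\cal S})^{\dagger}}\cdot\TNorm{\matSig^{-1}} = \TNorm{((\matV\transp)_{\cal S})^{\dagger}}\cdot\TNorm{\matX^{\dagger}}$, and similarly $\FNorm{\matX_{\cal S}^{\dagger}} \le \TNorm{((\matV\transp)_{\cal S})^{\dagger}}\cdot\FNorm{\matSig^{-1}\matU\transp}$ — but that last bound is too weak; instead I'd bound $\FNormS{\matX_{\cal S}^{\dagger}} \le \FNormS{((\matV\transp)_{\cal S})^{\dagger}}\cdot\TNormS{\matSig^{-1}} = \FNormS{((\matV\transp)_{\cal S})^{\dagger}}\cdot\TNormS{\matX^{\dagger}}$ using the inequality $\FNorm{\matA\matB} \le \FNorm{\matA}\TNorm{\matB}$. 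Combining gives $\FNormS{\matX_{\cal S}^{\dagger}} \le \frac{m-n+1}{k-n+1}\cdot n\cdot\TNormS{\matX^{\dagger}}$, as claimed.

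For the per-singular-value bound, note $\sigma_i(\matX_{\cal S})$ are exactly the singular values of $\matU\matSig(\matV\transp)_{\cal S}$; since $\matU$ is orthogonal, they equal the singular values of $\matSig(\matV\transp)_{\cal S}$. The hard part is converting the Frobenius bound on $\matV\transp$ into a per-$\sigma_i$ bound — but Corollary~\ref{cor1} is stated as applying Algorithm~\ref{alg1} (designed for the Frobenius case) to $\matV\transp$, whereas the spectral per-$\sigma_i$ statement really wants the spectral-flavored greedy. Here I expect the main obstacle: I believe the intended route is to observe that running Algorithm~\ref{alg1} on $\matV\transp$ also controls $\TNormS{((\matV\transp)_{\cal S})^{\dagger}}$ via $\TNormS{((\matV\transp)_{\cal S})^{\dagger}} \le \FNormS{((\matV\transp)_{\cal S})^{\dagger}}$, giving $\sigma_{\min}^2((\matV\transp)_{\cal S}) \ge (k-n+1)/(m-n+1)$, and then using the interlacing/monotonicity-type inequality $\sigma_i^2(\matX_{\cal S}) = \sigma_i^2(\matSig(\matV\transp)_{\cal S}) \ge \sigma_i^2(\matX)\cdot\sigma_{\min}^2((\matV\transp)_{\cal S})$ — but this only yields the factor $\frac{m-n+1}{k-n+1}$, not the claimed $1 + \frac{n(m-k)}{k-n+1}$. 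So to get the sharper constant one must instead appeal to the stronger guarantee of the spectral-norm analysis: apply the \emph{spectral} version of the greedy-removal bound (the analogue of Lemma~\ref{lem:oneremove} for $\TNormS{\cdot}$, which follows from Lemma~\ref{lem:volrand}) inside the induction, obtaining $\TNormS{((\matV\transp)_{\cal S})^{\dagger}} \le 1 + \frac{n(m-k)}{k-n+1}$ directly, equivalently $\sigma_{\min}^2((\matV\transp)_{\cal S}) \ge (1 + \frac{n(m-k)}{k-n+1})^{-1}$, and then $\sigma_i^2(\matX_{\cal S}) = \sigma_i^2(\matSig(\matV\transp)_{\cal S}) \ge \sigma_i^2(\matX)\cdot\sigma_{\min}^2((\matV\transp)_{\cal S}) \ge \sigma_i^2(\matX)(1 + \frac{n(m-k)}{k-n+1})^{-1}$; applying this with $i = n$ gives $\TNormS{\matX_{\cal S}^{\dagger}} = \sigma_n^{-2}(\matX_{\cal S}) \le (1 + \frac{n(m-k)}{k-n+1})\sigma_n^{-2}(\matX) = (1 + \frac{n(m-k)}{k-n+1})\TNormS{\matX^{\dagger}}$. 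The one genuine gap to watch is justifying $\sigma_i^2(\matSig\matB) \ge \sigma_i^2(\matX)\sigma_{\min}^2(\matB)$ where $\matB = (\matV\transp)_{\cal S}$: this holds because $\matSig^{-1}$ is invertible and $\sigma_i(\matSig\matB) = 1/\sigma_{n-i+1}((\matSig\matB)^{\dagger})$ with $(\matSig\matB)^{\dagger} = \matB^{\dagger}\matSig^{-1}$ (Lemma~\ref{lem:pseudo}), together with the bound $\TNorm{\matB^{\dagger}\matSig^{-1}} \cdot(\text{min sing.\ value argument})$ — I would flesh this out via the min-max (Courant–Fischer) characterization of singular values exactly as in the proof of Lemma~\ref{lem:thm2aux}.
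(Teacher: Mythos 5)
Your treatment of the operation count and of the Frobenius-norm bound is fine (and essentially self-contained: the inequality $\FNorm{\matA\matB}\le\FNorm{\matA}\TNorm{\matB}$ applied to $\matX_{\cal S}^{\dagger}=((\matV\transp)_{\cal S})^{\dagger}\matSig^{-1}\matU\transp$ together with Theorem~\ref{thm1} on $\matV\transp$, whose pseudo-inverse has $\FNormS{(\matV\transp)^{\dagger}}=n$, gives exactly $\frac{m-n+1}{k-n+1}\cdot n\cdot\TNormS{\matX^{\dagger}}$). The gap is in the per-singular-value (spectral) bound. Your proposed fix --- run ``the spectral version of the greedy-removal bound inside the induction'' to get $\TNormS{((\matV\transp)_{\cal S})^{\dagger}}\le 1+\frac{n(m-k)}{k-n+1}$ --- does not work, for two reasons. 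First, Algorithm~\ref{alg2} calls Algorithm~\ref{alg1}, whose greedy step minimizes the \emph{Frobenius} norm of the pseudo-inverse; the inductive argument of Theorem~\ref{thm1} needs the algorithm to pick the minimizer of the quantity being tracked, so a spectral one-step existence lemma cannot be combined with this algorithm's choices. Second, even for a hypothetical spectral greedy, the one-step factors $\bigl(1+\frac{n}{m-i-n+1}\bigr)$ telescope to $\prod_{j=k-n+1}^{m-n}\frac{j+n}{j}=\binom{m}{n}/\binom{k}{n}$, which exceeds $1+\frac{n(m-k)}{k-n+1}$ in general (e.g.\ $n=2$, $m=10$, $k=5$ gives $4.5$ versus $3.5$), so the claimed additive-form bound cannot be reached this way. (Also, as a side slip, your intermediate claim $\sigma_{\min}^2((\matV\transp)_{\cal S})\ge\frac{k-n+1}{m-n+1}$ from the Frobenius guarantee is off by a factor $n$.)

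The route the paper takes is different and is the one you need: the spectral statement is deduced from the \emph{same} Frobenius guarantee, not from a spectral greedy. Writing $\matY=\matV\transp$ (orthonormal rows) and $\bar{\cal S}=[m]-{\cal S}$, one has $\FNormS{\matY_{\cal S}^{\dagger}\matY_{\bar{\cal S}}}=\FNormS{\matY_{\cal S}^{\dagger}}-n\le\frac{n(m-n+1)}{k-n+1}-n=\frac{n(m-k)}{k-n+1}$, and $\matX_{\cal S}^{\dagger}\matX_{\bar{\cal S}}=\matY_{\cal S}^{\dagger}\matY_{\bar{\cal S}}$. Then the block upper-triangular matrix
$\matW=\left(\begin{array}{cc}\matI_k & \matX_{\cal S}^{\dagger}\matX_{\bar{\cal S}}\\ \bm{0}_{(m-k)\times k} & \matI_{m-k}\end{array}\right)$
satisfies $\left(\begin{array}{cc}\matX_{\cal S} & \bm{0}\end{array}\right)\matW=\matX\matPi$ for a permutation $\matPi$, and the relative perturbation bound (Theorem 3.1 of the Eisenstat--Ipsen reference, exactly as used inside the proof of Lemma~\ref{lem:volrand}) gives, for every $i$, $\sigma_i^{-2}(\matX_{\cal S})\le\TNormS{\matW}\,\sigma_i^{-2}(\matX)\le\bigl(1+\FNormS{\matX_{\cal S}^{\dagger}\matX_{\bar{\cal S}}}\bigr)\sigma_i^{-2}(\matX)\le\bigl(1+\frac{n(m-k)}{k-n+1}\bigr)\sigma_i^{-2}(\matX)$, which is the stated per-singular-value bound, and $i=n$ gives the spectral-norm claim. (The paper compresses this by citing Corollaries 1 and 2 of the de Hoog--Mattheij reference, but the mechanism is the one above.) Your closing multiplicative inequality $\sigma_i^2(\matSig\matB)\ge\sigma_i^2(\matX)\sigma_{\min}^2(\matB)$ is correct but is not enough, because the required lower bound on $\sigma_{\min}^2((\matV\transp)_{\cal S})$ of the form $\bigl(1+\frac{n(m-k)}{k-n+1}\bigr)^{-1}$ is precisely what your argument never establishes.
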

\begin{proof}
Let ${\cal S} \subseteq [m]$ be the set found by the algorithm, and let
$\bar{\cal S} = [m]-{\cal S}$. Corollary 2 of~\cite{HM07} asserts that
$$ \FNormS{ \matX_{\cal S}^{\dagger} \cdot \matX_{\bar{\cal S}}} \le \frac{n(m-k)}{k-n+1}.$$
Now, Corollary 1 in~\cite{HM07} indicates that, if such a bound holds for $\FNormS{ \matX_{\cal S}^{\dagger} \cdot \matX_{\bar{\cal S}}}$,
then,
\eqan{
\FNormS{  \matX_{ \cal S}^{\dagger} } \le \frac{m-n+1}{k-n+1} \cdot n \cdot \TNormS{\matX^{\dagger}}; \hspace{-0.24in}
\qquad\mbox{for $i=1,...,n:$} 
\sigma_i^2(\matX) \cdot \left(1+ \frac{n\left(m-k\right)}{k-n+1} \right)^{-1} \le \sigma_i^2(\matX_{\cal S})\,.
}
\end{proof}

\subsection{Deterministic Greedy Selection}
The algorithm of this section builds the set ${\cal S}$ by iteratively adding columns to it, after starting 
with the empty set. It uses a deterministic algorithm presented in~\cite{BDM11a},
which is, in turn, a generalization of an algorithm from~\cite{BSS09}.
In particular, we use Lemma 10 from~\cite{BDM11a}.
\begin{lemma}[Dual Set Spectral Sparsification, Lemma 10 in~\cite{BDM11a}.] \label{lem:2set}
Let \math{\cl V=\{\v_1,\ldots,\v_m\}} and \math{\cl U=\{\u_1,\ldots,\u_m\}}
be two equal cardinality decompositions of identity matrices: \math{\v_i\in\R^{n}} ($n < m$), \math{\u_i\in\R^\ell} ($\ell \leq m$),
$\sum_{i=1}^m\v_i\v_i\transp=\matI_{n}$, and $\sum_{i=1}^m\u_i\u_i\transp=\matI_{\ell}$.
Given an integer \math{k} with \math{n < k \le m}, there exists an algorithm that computes a set of weights
\math{s_i\ge 0} ($i=1,\ldots,m$) {at most \math{k} of which are non-zero}, such that
\eqan{
\sigma_{n}\left(\sum_{i=1}^m s_i\v_i\v_i\transp\right)
\ge
\left(1 - \sqrt{\frac{n}{k}}\right)^2
\qquad\mbox{and} \hspace{0.3in}
\sigma_{1}\left(\sum_{i=1}^m s_i\u_i\u_i\transp\right)
\le \left(1 + \sqrt{ \frac{\ell}{k} }\right)^2.
}
The algorithm is deterministic and needs at most $O\left(k m \left(n^2+\ell^2\right) \right)$ operations.
Moreover, if the set $\cl U$ contains vectors from the standard basis from $\R^{m}$, the algorithm needs
$O\left(k m n^2 \right)$ operations. We denote the application of the algorithm to $\cl V$ and $\cal U$ by
$$[s_1, s_2, \dots, s_m] = \textsc{DualSet}( \cl V, \cl U, k ).$$
\end{lemma}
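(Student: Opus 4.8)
The plan is to prove this the way it is proved in \cite{BDM11a}, namely by the barrier‑function method of Batson--Spielman--Srivastava \cite{BSS09}, run simultaneously on the two decompositions $\cl V$ and $\cl U$. Build the weights incrementally: start from $s=0$, so that $\matA_0=\sum_i s_i\v_i\v_i\transp=\bm{0}_{n\times n}$ and $\matB_0=\sum_i s_i\u_i\u_i\transp=\bm{0}_{\ell\times\ell}$, and run $k$ rounds; in round $\tau$ we add a positive weight $t$ to a single well‑chosen index $j$, so at most $k$ indices ever become non‑zero. We maintain a ``lower barrier'' $p_\tau$ and an ``upper barrier'' $q_\tau$, which increase by fixed amounts $\delta_p$ and $\delta_q$ each round, together with the two potentials $\Phi_p(\matA)=\trace\!\left((\matA-p\matI_n)^{-1}\right)$ and $\Phi^q(\matB)=\trace\!\left((q\matI_\ell-\matB)^{-1}\right)$. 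The invariants, propagated by induction on $\tau$, are $\lambda_{\min}(\matA_\tau)>p_\tau$, $\lambda_{\max}(\matB_\tau)<q_\tau$, $\Phi_{p_\tau}(\matA_\tau)\le\epsilon_p$ and $\Phi^{q_\tau}(\matB_\tau)\le\epsilon_q$, for suitable constants $\epsilon_p,\epsilon_q$; at $\tau=0$ this forces $p_0\le -n/\epsilon_p<0<\ell/\epsilon_q\le q_0$.

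The one‑round analysis is a rank‑one (Sherman--Morrison) computation. Advance the barriers to $p'=p_\tau+\delta_p$ and $q'=q_\tau+\delta_q$, and for each index $j$ set
$$L_j=\frac{\v_j\transp(\matA_\tau-p'\matI_n)^{-2}\v_j}{\Phi_{p'}(\matA_\tau)-\Phi_{p_\tau}(\matA_\tau)}-\v_j\transp(\matA_\tau-p'\matI_n)^{-1}\v_j,\qquad U_j=\frac{\u_j\transp(q'\matI_\ell-\matB_\tau)^{-2}\u_j}{\Phi^{q_\tau}(\matB_\tau)-\Phi^{q'}(\matB_\tau)}+\u_j\transp(q'\matI_\ell-\matB_\tau)^{-1}\u_j.$$
A direct calculation shows that adding weight $t$ to index $j$ keeps $\lambda_{\min}(\matA_{\tau+1})>p'$ and $\Phi_{p'}(\matA_{\tau+1})\le\Phi_{p_\tau}(\matA_\tau)$ whenever $t\ge 1/L_j$ (provided $L_j>0$), and keeps $\lambda_{\max}(\matB_{\tau+1})<q'$ and $\Phi^{q'}(\matB_{\tau+1})\le\Phi^{q_\tau}(\matB_\tau)$ whenever $t\le 1/U_j$. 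Hence it suffices to exhibit one index $j$ with $L_j\ge U_j>0$ and then pick any $t\in[1/L_j,1/U_j]$; this preserves all four invariants. To produce such a $j$ we average over $j$: using $\sum_i\v_i\v_i\transp=\matI_n$, $\sum_i\u_i\u_i\transp=\matI_\ell$, together with the facts that $(\matA_\tau-p_\tau\matI_n)^{-1}\preceq(\matA_\tau-p'\matI_n)^{-1}$, $(q_\tau\matI_\ell-\matB_\tau)^{-1}\succeq(q'\matI_\ell-\matB_\tau)^{-1}$, and that $\trace(\matP\matR)$ is nondecreasing in $\matP\succeq 0$ for fixed $\matR\succeq 0$, one obtains $\sum_j L_j\ge 1/\delta_p-\Phi_{p'}(\matA_\tau)$ and $\sum_j U_j\le 1/\delta_q+\Phi^{q'}(\matB_\tau)\le 1/\delta_q+\epsilon_q$. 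Bounding $\Phi_{p'}(\matA_\tau)$ in terms of $\Phi_{p_\tau}(\matA_\tau)\le\epsilon_p$ (each eigenvalue gap of $\matA_\tau-p_\tau\matI_n$ is at least $1/\epsilon_p$, so shifting the barrier by $\delta_p$ inflates the potential by at most a factor $(1-\epsilon_p\delta_p)^{-1}$), the parameters can be calibrated so that $\sum_j L_j>\sum_j U_j$, and averaging yields the required $j$.

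After $k$ rounds at most $k$ weights are non‑zero and $\lambda_{\min}(\matA_k)>p_k=p_0+k\delta_p$, $\lambda_{\max}(\matB_k)<q_k=q_0+k\delta_q$. Choosing $\epsilon_p,\delta_p,p_0$ (essentially $\epsilon_p=\sqrt{n/k}$, $\delta_p=1$, $p_0=-\sqrt{nk}$) and $\epsilon_q,\delta_q,q_0$ as explicit functions of $n,\ell,k$ makes $q_k/p_k\le(1+\sqrt{\ell/k})^2/(1-\sqrt{n/k})^2$ while keeping the averaging condition valid at every round, so rescaling every weight by $c=(1-\sqrt{n/k})^2/p_k>0$ yields $\sigma_n\!\big(\sum_i s_i\v_i\v_i\transp\big)\ge(1-\sqrt{n/k})^2$ and $\sigma_1\!\big(\sum_i s_i\u_i\u_i\transp\big)\le(1+\sqrt{\ell/k})^2$. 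For the operation count, each round forms $(\matA_\tau-p'\matI_n)^{-1}$ and its square in $O(n^3)$, evaluates the two quadratic forms defining $L_j$ for all $m$ indices in $O(mn^2)$, does the analogous work on the $\matB$ side in $O(\ell^3+m\ell^2)$, and performs one rank‑one update; over $k$ rounds this is $O(km(n^2+\ell^2))$ since $m\ge n,\ell$. If the $\u_i$ are all (scalar multiples of) standard basis vectors of $\R^m$, then $\matB_\tau$ stays diagonal, so the $\matB$‑side work collapses to $O(m)$ per round and the total becomes $O(kmn^2)$.

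The step I expect to be the main obstacle is the parameter bookkeeping: one must choose $\epsilon_p,\epsilon_q,\delta_p,\delta_q,p_0,q_0$ so that simultaneously (i) the four invariants are self‑sustaining, (ii) the averaging inequality $1/\delta_p-\Phi_{p'}(\matA_\tau)>1/\delta_q+\epsilon_q$ holds in every round --- which in particular needs tight control of $\Phi_{p'}(\matA_\tau)$ after the barrier has just moved up --- and (iii) the final rescaled quantities come out as exactly $(1-\sqrt{n/k})^2$ and $(1+\sqrt{\ell/k})^2$. The Sherman--Morrison one‑round estimates and the trace‑monotonicity facts used above are routine once this calibration is fixed.
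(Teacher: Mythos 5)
This lemma is not proved in the paper at all: it is imported verbatim as Lemma 10 of~\cite{BDM11a}, and the text explicitly defers to that reference for the algorithm and its analysis. What you propose is a reconstruction of that cited proof, and your skeleton does match it: the two barrier potentials, the one-round Sherman--Morrison lemmas with the thresholds $1/L_j$ and $1/U_j$, the choice of an index with $L_j\ge U_j$, the final rescaling, and the operation count (including the diagonal shortcut when $\cl U$ consists of standard basis vectors) are all exactly the dual-set Batson--Spielman--Srivastava argument of~\cite{BDM11a,BSS09}.

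There is, however, a genuine gap at the very spot you flag as ``bookkeeping.'' Your lower-side averaging inequality $\sum_j L_j \ge 1/\delta_p - \Phi_{p'}(\matA_\tau)$, combined with $\Phi_{p'}(\matA_\tau)\le \epsilon_p/(1-\epsilon_p\delta_p)$, is strictly weaker than what the argument needs, and no calibration of the constants $\epsilon_p,\delta_p,p_0,\epsilon_q,\delta_q,q_0$ recovers the stated bounds from it. By the scale invariance of the scheme one may normalize $\delta_p=1$; your per-round guarantee is then $\sum_j L_j\ge (1-2\epsilon_p)/(1-\epsilon_p)$, while making the lower barrier positive after $k$ rounds forces $\epsilon_p> n/k$. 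Optimizing the upper-side parameters under the constraint $1/\delta_q+\epsilon_q\le (1-2\epsilon_p)/(1-\epsilon_p)$, the best lower-side constant you can certify (normalizing the upper side to $(1+\sqrt{\ell/k})^2$) is $\max_{n/k<\epsilon<1/2}\,(1-\frac{n/k}{\epsilon})\,\frac{1-2\epsilon}{1-\epsilon}$, which is strictly below $(1-\sqrt{n/k})^2$ for every $k$, and which is vacuous (no feasible $\epsilon$) whenever $n<k\le 2n$ --- a range the lemma covers. The proof in~\cite{BSS09} (and hence~\cite{BDM11a}) needs the sharper averaging bound $\sum_j L_j\ge 1/\delta_p-\Phi_{p_\tau}(\matA_\tau)$, i.e.\ with the \emph{unshifted} potential; this is not a monotonicity statement but follows from a Cauchy--Schwarz argument (the claim after Lemma 3.5 in~\cite{BSS09}) showing $\trace\left((\matA_\tau-p'\matI_n)^{-2}\right)/\left(\Phi_{p'}(\matA_\tau)-\Phi_{p_\tau}(\matA_\tau)\right)-\Phi_{p'}(\matA_\tau)\ \ge\ 1/\delta_p-\Phi_{p_\tau}(\matA_\tau)$ whenever $\epsilon_p\delta_p\le 1$. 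With that inequality, $\delta_p=1$, $\epsilon_p=\sqrt{n/k}$, $p_0=-\sqrt{nk}$, and the upper-side optimization you sketch (where your simpler bound $\sum_j U_j\le 1/\delta_q+\epsilon_q$ is already the sharp one) yield exactly $(1-\sqrt{n/k})^2$ and $(1+\sqrt{\ell/k})^2$. So the missing ingredient is this one sharper lemma; the remainder of your outline is sound.
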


We refer the reader to~\cite{BDM11a} for the full description of the algorithm. Lemma~\ref{lem:2set}
implies that one can sample from two different set of vectors \math{\cl V=\{\v_1,\ldots,\v_m\}} and \math{\cl U=\{\u_1,\ldots,\u_m\}}, and control \emph{simultaneously} the smallest singular value of the matrix formed from the sampled vectors from the first set, and the largest singular value of the matrix formed from the sampled vectors from the second set.

A complete pseudo-code description of our algorithm appears as Algorithm~\ref{alg3}. Algorithm~\ref{alg3} proceeds as follows.
First, it computes the SVD of $\matX$: $\matX = \matU \matSig\matV\transp$ (see also Section~\ref{sec:pre} for useful notation).
The second step is to apply the algorithm of Lemma~\ref{lem:2set} ($\textsc{DualSet}$) on
${\cal V} = \{ \y_1, \dots, \y_m \}$
and
${\cal U} = \{ \e_1, \dots, \e_m \},$ the standard basis,
to compute the weights $s_1,\dots,s_m$. The algorithm then returns the set of non-zero $s_i$'s:
${\cal S} = \{ i\,:\,s_i \neq 0\}$.

We now present the analysis of Algorithm~\ref{alg3}.

\begin{algorithm}[t]
\textbf{Input:} $\matX\in\R^{n \times m}$ ($m> n$, $\rank(\matX)=n$), sampling parameter \math{n \le k \le m}. \\
\noindent \textbf{Output:} Set $\cal S$ $\subseteq [m]$ of cardinality at most $k$.
\begin{algorithmic}[1]

\STATE Compute the matrix $\matV \in\R^{m\times n}$  of the top $n$ right singular vectors of $\matX$.

\STATE Let ${\cal V} = \{ \y_1, \dots, \y_m \}$  (see Section~\ref{sec:pre} for the definition of $\y_i$'s).

\STATE Let ${\cal U} = \{ \e_1, \dots, \e_m \}$ contain the standard basis vectors.

\STATE Run $[s_1, s_2, \dots, s_m] = \textsc{DualSet}( \cl V, \cl U, k )$.

\RETURN ${\cal S} = \{ i\,:\,s_i \neq 0\}$

\end{algorithmic}
\caption{A deterministic greedy selection algorithm for subset selection (Theorem~\ref{thm2}.)}
\label{alg3}
\end{algorithm}

\begin{theorem}\label{thm2}
Fix $\matX \in \R^{n \times m}$ ($m>n$, $\rank(\matX)=n$) and sampling parameter $m \geq k > n$.
Algorithm~\ref{alg3} needs $O\left( k m n^2 \right)$ operations and
deterministically constructs a set $\cal S$ $\subseteq [m]$ of cardinality at most $k$ such that for both $\xi=2, \mathrm{F}$:
$$ \XNormS{ \matX_{\cal S}^{\dagger} } \le \left( 1 + \sqrt{\frac{m}{k}}\right)^2 \left( 1 - \sqrt{\frac{n}{k}} \right)^{-2} \XNormS{ \matX^{\dagger} }.$$
\end{theorem}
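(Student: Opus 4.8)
The plan is to express everything in terms of the SVD $\matX = \matU \matSig \matV\transp$ and reduce the problem to controlling the extreme singular values of $\matY_{\cal S}\transp = \matV_{\cal S}\transp$ (the sampled rows of $\matV$, in the notation of the excerpt, $\matY = \matV\transp$ with columns $\y_1,\dots,\y_m$). First I would observe that since $\sum_{i=1}^m \y_i\y_i\transp = \matV\transp\matV = \matI_n$, the set $\cl V = \{\y_1,\dots,\y_m\}$ is a decomposition of the identity in $\R^n$, and $\cl U = \{\e_1,\dots,\e_m\}$ is trivially a decomposition of the identity in $\R^m$ (here $\ell = m$). Thus Lemma~\ref{lem:2set} applies directly: $\textsc{DualSet}(\cl V, \cl U, k)$ returns weights $s_i \ge 0$, at most $k$ of them nonzero, with
$$\sigma_n\left(\sum_{i=1}^m s_i \y_i\y_i\transp\right) \ge \left(1 - \sqrt{n/k}\right)^2.$$
The second set $\cl U$ of standard basis vectors is present only to invoke the faster $O(kmn^2)$ variant and plays no role in the bound — in fact the constraint from $\cl U$ is automatically satisfied, but I would still note it gives the claimed operation count.

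Next I would convert the singular-value lower bound into a bound on $\XNormS{\matX_{\cal S}^{\dagger}}$. The subtlety is that the weights $s_i$ are not $0/1$, whereas the subset $\cal S = \{i : s_i \ne 0\}$ gives an unweighted submatrix $\matX_{\cal S}$. So the key step is: let $\matS \in \R^{m\times|{\cal S}|}$ be the restriction-and-rescaling matrix whose columns are $\sqrt{s_i}\,\e_i$ for $i \in {\cal S}$ (this is essentially a diagonal scaling composed with column selection). Then $\matV\transp\matS$ has columns $\sqrt{s_i}\,\y_i$, so $(\matV\transp\matS)(\matV\transp\matS)\transp = \sum_{i\in{\cal S}} s_i\y_i\y_i\transp$ and therefore $\sigma_n(\matV\transp\matS) = \sigma_n\!\big(\sum_i s_i\y_i\y_i\transp\big)^{1/2} \ge 1 - \sqrt{n/k}$. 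Also all the $s_i$ with $i\in\cal S$ satisfy $s_i \le \sigma_1(\sum_j s_j\e_j\e_j\transp) \le (1+\sqrt{m/k})^2$, so $\sigma_1(\matS) \le 1 + \sqrt{m/k}$ (note $\matS\transp\matS = \diag(s_i)_{i\in\cal S}$). Now I need to relate $\matX_{\cal S} = \matU\matSig\matV\transp_{\cal S}$ to $\matU\matSig\matV\transp\matS = (\matX\matS)$: writing $\matX\matS = \matX_{\cal S}\,\matD$ where $\matD = \diag(\sqrt{s_i})_{i\in\cal S}$ is an invertible $|{\cal S}|\times|{\cal S}|$ diagonal matrix, Lemma~\ref{lem:thm2aux} gives $\TNorm{(\matX_{\cal S}\matD)^{\dagger}} \le \TNorm{\matX_{\cal S}^{\dagger}}\cdot\TNorm{\matD^{-1}}$; rearranging, $\TNorm{\matX_{\cal S}^{\dagger}} \le \TNorm{(\matX\matS)^{\dagger}}\cdot\TNorm{\matD} = \sigma_n(\matX\matS)^{-1}\cdot\max_{i\in\cal S}\sqrt{s_i}$. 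Hmm — this direction introduces an unwanted $\sqrt{s_i}$ factor, so I would instead go the cleaner route: directly bound $\sigma_n(\matX_{\cal S})$ from below.

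The cleanest argument is: $\matX\matS = \matU\matSig(\matV\transp\matS)$, and since $\matU\matSig$ is $n\times n$ invertible and $\matV\transp\matS$ is $n\times|\cal S|$ full row rank, Lemma~\ref{lem:pseudo} gives $(\matX\matS)^{\dagger} = (\matV\transp\matS)^{\dagger}(\matU\matSig)^{-1} = (\matV\transp\matS)^{\dagger}\matSig^{-1}\matU\transp$, hence $\XNorm{(\matX\matS)^{\dagger}} \le \XNorm{(\matV\transp\matS)^{\dagger}}\cdot\TNorm{\matSig^{-1}}$ for the spectral norm and an analogous chain for Frobenius via $\FNorm{\matA\matB}\le\FNorm{\matA}\TNorm{\matB}$. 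But again $\matX\matS \ne \matX_{\cal S}$. To handle this I would use the substitution $\matX_{\cal S} = \matX\matS\matD^{-1}$ and apply Lemma~\ref{lem:thm2aux} in the form $\XNorm{\matX_{\cal S}^{\dagger}} = \XNorm{(\matX\matS\matD^{-1})^{\dagger}} \le \XNorm{(\matX\matS)^{\dagger}}\cdot\TNorm{(\matD^{-1})^{-1}} = \XNorm{(\matX\matS)^{\dagger}}\cdot\TNorm{\matD}$, and $\TNorm{\matD} = \max_{i\in\cal S}\sqrt{s_i} \le 1+\sqrt{m/k}$. Combining, $\XNorm{\matX_{\cal S}^{\dagger}} \le \XNorm{(\matV\transp\matS)^{\dagger}}\cdot\TNorm{\matSig^{-1}}\cdot(1+\sqrt{m/k})$; here $\XNorm{(\matV\transp\matS)^{\dagger}} \le \sqrt{n}\cdot\sigma_n(\matV\transp\matS)^{-1} \le \sqrt{n}\,(1-\sqrt{n/k})^{-1}$ for Frobenius (or without the $\sqrt n$ for spectral), and $\TNorm{\matSig^{-1}} = \TNorm{\matX^{\dagger}}$ while $\XNorm{\matX^{\dagger}}$ supplies the $\sqrt n$ for the Frobenius case. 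Squaring yields the stated bound $(1+\sqrt{m/k})^2(1-\sqrt{n/k})^{-2}\XNormS{\matX^{\dagger}}$ for both $\xi = 2,\mathrm{F}$. The main obstacle — and the place to be careful — is precisely bookkeeping the weight-rescaling: making sure the $\sqrt{s_i}$ factors cancel correctly so that neither an extra $\max_i s_i$ nor an extra $\min_i s_i$ leaks into the final bound, and confirming that Lemma~\ref{lem:thm2aux}'s hypotheses (full rank, more columns than rows for the inner matrix) are met at each invocation. The operation count follows immediately from the "standard basis" clause of Lemma~\ref{lem:2set} plus the $O(mn^2)$ SVD, which is dominated by $O(kmn^2)$ since $k > n$.
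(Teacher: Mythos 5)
Your overall strategy is the same as the paper's: run \textsc{DualSet} of Lemma~\ref{lem:2set} on ${\cal V}=\{\y_1,\dots,\y_m\}$ and the standard basis, encode the weights in a diagonal rescaling $\matD$, and strip $\matD$ with Lemma~\ref{lem:thm2aux}; your spectral-norm case goes through. One remark on the narrative before the real issue: your aside that the constraint coming from ${\cal U}$ ``plays no role in the bound'' and ``is automatically satisfied'' is wrong --- the guarantee $\sigma_1\left(\sum_i s_i\e_i\e_i\transp\right)\le \left(1+\sqrt{m/k}\right)^2$ is precisely what bounds $\max_i s_i$, hence $\TNormS{\matD}$, and it is the source of the factor $\left(1+\sqrt{m/k}\right)^2$; you do in fact use it two sentences later, so the argument is unharmed, but the remark contradicts your own proof.

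The genuine gap is in the Frobenius case. You split $\FNorm{(\matV\transp\matS)^{\dagger}\matSig^{-1}\matU\transp}\le\FNorm{(\matV\transp\matS)^{\dagger}}\cdot\TNorm{\matSig^{-1}}$, bound $\FNorm{(\matV\transp\matS)^{\dagger}}\le \sqrt{n}\,\sigma_n(\matV\transp\matS)^{-1}$, and arrive at $\FNorm{\matX_{\cal S}^{\dagger}}\le \sqrt{n}\left(1-\sqrt{n/k}\right)^{-1}\left(1+\sqrt{m/k}\right)\TNorm{\matX^{\dagger}}$, hoping that ``$\FNorm{\matX^{\dagger}}$ supplies the $\sqrt{n}$.'' That last step fails: the available inequality is $\FNorm{\matX^{\dagger}}\le\sqrt{n}\,\TNorm{\matX^{\dagger}}$, which points the wrong way, so $\sqrt{n}\,\TNorm{\matX^{\dagger}}$ cannot be replaced by $\FNorm{\matX^{\dagger}}$. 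What your chain actually proves is $\FNormS{\matX_{\cal S}^{\dagger}}\le \left(1+\sqrt{m/k}\right)^2\left(1-\sqrt{n/k}\right)^{-2} n\,\TNormS{\matX^{\dagger}}$, which can exceed the claimed bound by a factor of order $n$ (take $\sigma_n\ll\sigma_{n-1}$, so that $\FNormS{\matX^{\dagger}}\approx\TNormS{\matX^{\dagger}}$). The fix is the split the paper uses: apply submultiplicativity in the form $\FNorm{\matA\matB}\le\TNorm{\matA}\cdot\FNorm{\matB}$, keeping the Frobenius norm on $\matSig^{-1}\matU\transp$ (whose Frobenius norm is exactly $\FNorm{\matX^{\dagger}}$) and bounding the sampled factor --- $\matY_{\cal S}^{\dagger}$, with $\matD$ stripped via $\matY_{\cal S}^{\dagger}=(\matY_{\cal S}\matD\matD^{-1})^{\dagger}$ --- only in spectral norm. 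Then no $\sqrt{n}$ appears, and Lemma~\ref{lem:thm2aux} is needed only in the spectral form in which the paper states it; your Frobenius-norm invocation of that lemma is in fact true (via $\sigma_i(\matA\matB)\ge\sigma_i(\matA)\,\sigma_{\min}(\matB)$) but is not what the lemma says and becomes unnecessary once the split is corrected. The operation count argument is fine.
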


\begin{proof}
We first prove the approximation bound, and then bound the number of operations.

Lemma~\ref{lem:2set} guarantees that
$$\sigma_{1}\left(\sum_{i=1}^m s_i\e_i\e_i\transp\right)
\le \left(1 + \sqrt{ \frac{m}{k} }\right)^2\,.
$$
However, $\sum_{i=1}^m s_i\e_i\e_i\transp = \diag(s_1,\dots,s_m) \in \R^{m \times m}$, a diagonal matrix containing the weights $s_i$'s
in its main diagonal;
so, $\max_i s_i \leq \left(1 + \sqrt{ \frac{m}{k} }\right)^2$.
Lemma~\ref{lem:2set} also guarantees that
$$\sigma_{n}\left(\sum_{i=1}^m s_i\y_i\y_i\transp\right)
\ge
\left(1 - \sqrt{\frac{n}{k}}\right)^2\,.
$$
Assume that ${\cal S} = \{ i_1, \dots, i_{\tilde{k}} \}$ where $\tilde{k} \leq k$ and $i_1 < i_2 < \dots < i_{\tilde{k}}$, and let $\matD = \diag(\sqrt{s_{i_1}},\dots,\sqrt{s_{i_{\tilde{k}}}})$.
It is easy to verify that $\sum_{i=1}^m s_i\y_i\y_i\transp = \matY_{\cal S} \matD^2 \matY\transp_{\cal S}$; so, $\matY_{\cal S}$ is full rank and $\TNormS{(\matY_{\cal S} \matD)^{\dagger}} \leq \left(1 - \sqrt{ \frac{n}{k} } \right)^{-2}$.
The bound $\max_i s_i \leq \left(1 + \sqrt{ \frac{m}{k} }\right)^2$ earlier implies that $\TNormS{\matD} \leq \left(1 + \sqrt{ \frac{m}{k} }\right)^2$.
Now, observe that,
\eqan{
\XNormS{ \matX^{\dagger}_{\cal S}} \buildrel{(a)}\over{=} \XNormS{\left(\matU \matSig \matY_{\cal S}\right)^{\dagger}}
\buildrel{(b)}\over{=}
\XNormS{ \matY_{\cal S}^{\dagger} \matSig^{-1} \matU\transp}
&\buildrel{(c)}\over{\leq}&
\TNormS{ \matY_{\cal S}^{\dagger}} \cdot \XNormS{ \matX^{\dagger}}\\
&\buildrel{(d)}\over{=}& 
\TNormS{ \left(\matY_{\cal S} \matD \matD^{-1} \right)^{\dagger}} \cdot \XNormS{ \matX^{\dagger}}\\
&\buildrel{(e)}\over{\le}&
\TNormS{ \matD } \cdot \TNormS{ \left(\matY_{\cal S} \matD \right)^{\dagger}} \cdot \XNormS{ \matX^{\dagger}}\\
&\buildrel{(f)}\over{\le}&
 \left(1 + \sqrt{\frac{m}{k}}\right)^2 \cdot \left(1 - \sqrt{ \frac{n}{k} } \right)^{-2} \cdot \XNormS{ \matX^{\dagger}} \\
}
\math{(a)} follows by replacing $\matX$ with its SVD.
\math{(b)} follows by using Lemma~\ref{lem:pseudo} and the fact that all three matrices involved are full rank.
\math{(c)} follows by standard properties of matrix norms, and using the definition of the pseudoinverse of $\matX$ and $\matSig$.
\math{(d)} follows by introducing the identity matrix $\matI_{\tilde{k}} = \matD \matD^{-1}$.
\math{(e)} follows by using Lemma~\ref{lem:thm2aux}. Finally, \math{(f)} follows from the bounds we just proved for the terms
$\TNormS{ \matD }$, and $\TNormS{ \left(\matY_{\cal S} \matD \right)^{\dagger}}$.

We conclude by analyzing the operation count. The algorithm first computes an SVD of $\matX$, which costs $O\left( mn^2 \right)$. The second step is to run the algorithm of Lemma~\ref{lem:2set} on the right singular vectors of $\matX$ and the standard basis, which costs  $O\left( k m n^2 \right)$. So the total cost is $O\left( k m n^2 \right)$.
\end{proof}

\subsection{Randomized Selection}

The main idea in the algorithm of this section is to non-uniformly sample columns from $\matX$. The analysis is based
on a matrix concentration bound from~\cite{RV07}.
More specifically, we use Theorem 3.1 from~\cite{RV07} (the constants are from Corollary 4 in~\cite{Vir10}).
\begin{lemma}[Theorem 3.1 in~\cite{RV07}]\label{lem:random}
Let $\x \in \R^n$ be a random vector, which is uniformly bounded almost everywhere: $\TNorm{\x} \le M$.
Assume, for normalization, that $\TNorm{ \Expect{ \x \x\transp } } \le 1.$ Let $\x_1, \x_2,...,\x_k$ be $k$
independent copies of $\x$ sampled with replacement. Then, for every $\epsilon \in (0,1)$, and with probability at
least $1 - 2 \cdot n \cdot e^{- \epsilon^2 k / 4 M^2}$: $ \TNorm{ \frac{1}{k} \sum_{i=1}^k \x_i \x_i\transp - \Expect{ \x \x\transp } } \le \epsilon.$
\end{lemma}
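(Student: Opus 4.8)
Lemma~\ref{lem:random} is the Rudelson--Vershynin matrix concentration estimate, quoted verbatim from~\cite{RV07}, so inside the paper the natural move is simply to cite it; were one to want a self-contained argument, the cleanest route is the matrix Bernstein inequality, and I would proceed as follows. Put $\matZ_i=\frac1k\bigl(\x_i\x_i\transp-\Expect{\x\x\transp}\bigr)$ for $i=1,\dots,k$: these are independent, symmetric, mean-zero $n\times n$ random matrices, and the quantity to be controlled is exactly $\TNorm{\sum_{i=1}^k\matZ_i}$.

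Matrix Bernstein needs two inputs. The first is an almost-sure norm bound: since $\TNorm{\x_i\x_i\transp}=\TNormS{\x_i}\le M^2$ and $\x\x\transp\preceq M^2\matI$ forces $\TNorm{\Expect{\x\x\transp}}\le\min\{1,M^2\}\le M^2$, the triangle inequality gives $\TNorm{\matZ_i}\le 2M^2/k=:R$. The second is a matrix-variance bound: expanding the square, $\Expect{\matZ_i^2}=\frac1{k^2}\bigl(\Expect{\TNormS{\x}\,\x\x\transp}-(\Expect{\x\x\transp})^2\bigr)\preceq\frac{M^2}{k^2}\Expect{\x\x\transp}$, using $\TNorm{\x}\le M$ and discarding the negative-semidefinite term, so $\TNorm{\sum_{i=1}^k\Expect{\matZ_i^2}}\le\frac{M^2}{k}\TNorm{\Expect{\x\x\transp}}\le\frac{M^2}{k}=:\sigma^2$. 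Feeding these into the bound $\Probab{\TNorm{\sum_{i=1}^k\matZ_i}\ge t}\le 2n\,e^{-(t^2/2)/(\sigma^2+Rt/3)}$ with $t=\epsilon\in(0,1)$, the denominator equals $\frac{M^2}{k}(1+2\epsilon/3)<\frac{2M^2}{k}$, so the exponent is at least $\epsilon^2 k/(4M^2)$; this yields $\Probab{\TNorm{\frac1k\sum_{i=1}^k\x_i\x_i\transp-\Expect{\x\x\transp}}\ge\epsilon}\le 2n\,e^{-\epsilon^2 k/4M^2}$, and the complementary event is the statement of the lemma.

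There is no genuine obstacle here: everything is elementary once matrix Bernstein is available, and the only point needing a little care is matching the stated constant $4M^2$, which is exactly why I keep the slack bounds $\min\{1,M^2\}\le M^2$ and $1+2\epsilon/3<2$ rather than optimizing. If one prefers not to take matrix Bernstein as a black box---or to follow~\cite{RV07} more literally---the alternative is to symmetrize with Rademacher signs $\varepsilon_i$, reducing the left-hand side to (a constant multiple of) $\Expect{\TNorm{\sum_i\varepsilon_i\x_i\x_i\transp}}$; bound this by $C\sqrt{\log n}\,\bigl(\max_i\TNorm{\x_i}\bigr)\,\TNorm{\sum_i\x_i\x_i\transp}^{1/2}$ for a universal constant $C$ via Rudelson's lemma (a consequence of the non-commutative Khintchine inequality); solve the resulting quadratic inequality for $\TNorm{\sum_i\x_i\x_i\transp}$; and finally pass from this moment estimate to the exponential tail. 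Either route reproduces the displayed inequality up to absolute constants.
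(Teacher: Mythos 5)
The paper offers no proof of this lemma at all: it is imported verbatim from Rudelson--Vershynin, with the explicit constants taken from Corollary 4 of Vershynin's note~\cite{Vir10}, whose argument is of Ahlswede--Winter type (Golden--Thompson plus a matrix Chernoff bound). So your instinct that a citation suffices is exactly what the paper does, and your optional self-contained derivation via matrix Bernstein is correct and does recover the stated form: the bounds $\TNorm{\matZ_i}\le 2M^2/k$ (using $\TNorm{\Expect{\x\x\transp}}\le\min\{1,M^2\}\le M^2$) and $\TNorm{\sum_{i=1}^k\Expect{\matZ_i^2}}\le M^2/k$ are right, and for $\epsilon\in(0,1)$ the Bernstein exponent $\frac{\epsilon^2/2}{(M^2/k)(1+2\epsilon/3)}$ is indeed at least $\epsilon^2 k/(4M^2)$ because $1+2\epsilon/3<2$, giving failure probability at most $2n\,e^{-\epsilon^2 k/(4M^2)}$, which matches the lemma. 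The only inaccuracy is your closing claim that the symmetrization/noncommutative-Khintchine route of~\cite{RV07} ``reproduces the displayed inequality up to absolute constants'': that route yields a dimension-free tail at the price of a logarithmic (in $k$) loss in the exponent, not the $2n$ prefactor with exponent $\epsilon^2 k/(4M^2)$ stated here; the latter shape is precisely what the Ahlswede--Winter/matrix-Bernstein style argument of~\cite{Vir10} (or your Tropp-type bound) provides, which is why the paper cites that note for the constants.
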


Our algorithm is based on non-uniform sampling of columns with replacement. This type of sampling is the basis of many randomized matrix algorithms~\cite{DKM06a,DMMS11}.
The sampling probabilities are related to the
so-called leverage scores of the columns of $\matX$~\cite{BMD09a, DMMS11}, but in our algorithm we make sure that no column has a sampling probability that is too small.  One needs cubic time to compute these probabilities
using SVD or QR; our algorithm computes the probabilities that way. However, one can approximate these
probabilities in sub-cubic time using recent results from~\cite{DMMW12}.
It might be the case that these results can be used to improve the running time of our algorithm, at the cost of some small increase in the approximation bound. However, we leave this issue for future research.

\begin{algorithm}[t]
\textbf{Input:} $\matX\in\R^{n \times m}$ ($m> n$, $\rank(\matX)=n$), sampling parameter \math{n \le k \le m}. \\
\noindent \textbf{Output:} Set $\cal S$ $\subseteq [m]$ of cardinality at most $k$.
\begin{algorithmic}[1]

\STATE Compute the matrix $\matV \in\R^{m\times n}$  of the top $n$ right singular vectors of $\matX$.
\STATE For $i=1,2,\dots,m$ let (see Section~\ref{sec:pre} for the definition of $\y_i$'s),
               $$ \tau_i = \max\{ \TNormS{\y_i}, \frac{n}{m}\}\, $$
and
               $$p_i = \tau_i / \sum_{j=1}^m \tau_j.$$
\FOR {$t=1,2,\dots,k$}
\STATE Pick $i_t$; where  $i_t = i$ with probability $p_i$.
\ENDFOR
\RETURN ${\cal S} = \{ i_1, i_2, \dots, i_t \}$

\end{algorithmic}
\caption{A randomized algorithm for subset selection (Theorem~\ref{thm3}.)}
\label{alg4}
\end{algorithm}

A complete pseudo-code description of our algorithm appears as Algorithm~\ref{alg4}. Algorithm~\ref{alg4} proceeds as follows. 
First, it computes the SVD of $\matX$: $\matX = \matU \matSig\matV\transp$ (see also Section~\ref{sec:pre} for useful notation).
Let, $$ \tau_i = \max\{ \TNormS{\y_i}, \frac{n}{m}\}, $$ for $i=1,\dots,m$.
The set ${\cal S}$ is formed by non-uniformly, and independently, sampling $k$ numbers from ${1,\dots,m}$ with replacement.
In each trial, $i$ is sampled with probability $$p_i = \tau_i / \sum_{j=1}^m \tau_j.$$

We now present the analysis of Algorithm~\ref{alg4}.

\begin{theorem}\label{thm3}
Fix $\matX \in \R^{n \times m}$ ($m> n$, $\rank(\matX)=n$). Choose
a probability parameter $\delta$ ($0 < \delta < 1$). Now, choose an sampling parameter
$m \geq k \geq \min(\ceil{32 n \ln (2n/\delta)},m)$.
Algorithm~\ref{alg4} needs $O\left( mn^2 + k \log k \right)$ operations and
randomly constructs a set $\cal S$ $\subseteq [m]$ with cardinality at most $k$,
such that, for both $\xi=2, \mathrm{F}$, and with probability at least $1 - \delta$,
$$ \XNormS{ \matX_{\cal S}^{\dagger} } \le
4 \cdot m \cdot \XNormS{ \matX^{\dagger} }.$$
Moreover, if $\matX$ contains orthonormal rows the operation count is $O\left( m n + k \log k \right)$.
\end{theorem}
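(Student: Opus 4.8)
The plan is to apply the matrix concentration bound of Lemma~\ref{lem:random} to an appropriately scaled version of the right singular vectors of $\matX$, and then pass back to $\matX$ itself via Lemma~\ref{lem:pseudo} and standard norm inequalities, exactly as in the proof of Theorem~\ref{thm2}. First I would set up the random vector: for a column index $i$ drawn with probability $p_i$, define $\x = \y_i / \sqrt{p_i}$, so that $\Expect{\x\x\transp} = \sum_{i=1}^m p_i \cdot (\y_i\y_i\transp)/p_i = \sum_{i=1}^m \y_i\y_i\transp = \matV\transp\matV = \matI_n$ (using $\sum_i \y_i\y_i\transp = \matI_n$, since $\matV$ has orthonormal columns). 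Hence $\TNorm{\Expect{\x\x\transp}} = 1$, satisfying the normalization hypothesis. Next I would bound $\TNorm{\x}$: since $p_i = \tau_i/(\sum_j \tau_j) \ge \TNormS{\y_i}/(\sum_j \tau_j)$ and $\sum_j \tau_j \le \sum_j(\TNormS{\y_j} + n/m) = n + n = 2n$ (because $\sum_j \TNormS{\y_j} = \trace(\matV\transp\matV) = n$), we get $\TNormS{\x} = \TNormS{\y_i}/p_i \le \sum_j \tau_j \le 2n$, so we may take $M = \sqrt{2n}$.

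With $M^2 = 2n$, choosing $\epsilon = 1/2$ in Lemma~\ref{lem:random} gives, with probability at least $1 - 2n e^{-k/(32n)}$, that $\TNorm{\frac1k\sum_{t=1}^k \x_t\x_t\transp - \matI_n} \le 1/2$. The condition $k \ge \ceil{32 n \ln(2n/\delta)}$ makes $2n e^{-k/(32n)} \le \delta$, so the event holds with probability $\ge 1-\delta$. On this event, writing $\matW = \frac1k\sum_{t=1}^k \x_t\x_t\transp = \frac1k\sum_{t=1}^k \y_{i_t}\y_{i_t}\transp/p_{i_t} = \matY_{\cal S}\,\matD^2\,\matY\transp_{\cal S}$ where $\matD$ is the diagonal matrix of weights $1/\sqrt{k\,p_{i_t}}$ (taking repetitions into account as in the proof of Theorem~\ref{thm2}), we have $\sigma_n(\matW) \ge 1/2$, hence $\matY_{\cal S}$ is full rank and $\TNormS{(\matY_{\cal S}\matD)^{\dagger}} = 1/\sigma_n(\matW) \le 2$. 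For $\TNormS{\matD}$, each diagonal entry is $1/(k\,p_{i_t}) \le (\sum_j\tau_j)/(k\,\TNormS{\y_{i_t}})$, but the cleaner bound uses $p_{i_t} \ge (n/m)/(\sum_j\tau_j) \ge (n/m)/(2n) = 1/(2m)$, giving $1/(k\,p_{i_t}) \le 2m/k$, so $\TNormS{\matD} \le 2m/k$.

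Now I would repeat the chain of inequalities from the proof of Theorem~\ref{thm2}: by Lemma~\ref{lem:pseudo}, $\XNormS{\matX_{\cal S}^{\dagger}} = \XNormS{\matY_{\cal S}^{\dagger}\matSig^{-1}\matU\transp} \le \TNormS{\matY_{\cal S}^{\dagger}} \cdot \XNormS{\matX^{\dagger}}$, and then by Lemma~\ref{lem:thm2aux}, $\TNormS{\matY_{\cal S}^{\dagger}} = \TNormS{(\matY_{\cal S}\matD\matD^{-1})^{\dagger}} \le \TNormS{\matD} \cdot \TNormS{(\matY_{\cal S}\matD)^{\dagger}} \le (2m/k)\cdot 2 = 4m/k$. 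This would give $\XNormS{\matX_{\cal S}^{\dagger}} \le (4m/k)\XNormS{\matX^{\dagger}}$, which is stronger than the claimed $4m\cdot\XNormS{\matX^{\dagger}}$; I would then note that $k \ge 1$ suffices to reach the stated (weaker, cleaner) bound $4m\cdot\XNormS{\matX^{\dagger}}$ — likely the authors state it this way for simplicity, or the intended constant accounting is slightly looser. Finally, for the operation count: computing $\matV$ from the SVD costs $O(mn^2)$ (or is free if $\matX$ already has orthonormal rows, leaving only the sampling), computing all $\tau_i$ and $p_i$ costs $O(mn)$, and drawing $k$ i.i.d.\ samples from the distribution $\{p_i\}$ can be done in $O(m + k\log k)$ after building an alias table or by sorting — giving $O(mn^2 + k\log k)$ total, or $O(mn + k\log k)$ in the orthonormal case. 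The main obstacle I anticipate is getting the constants to line up exactly (the choice of $\epsilon$, the factor in the exponent $k/(32n)$ versus $\epsilon^2 k/(4M^2) = k/(32n)$ which works out cleanly, and reconciling the $4m/k$ versus $4m$ in the statement); the structural argument itself is a direct adaptation of the Theorem~\ref{thm2} proof with concentration replacing the deterministic sparsification.
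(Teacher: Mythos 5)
Your proposal follows the paper's own route almost verbatim: the same random vector $\x = \y_{c_t}/\sqrt{p_{c_t}}$, the same computation $\Expect{\x\x\transp}=\matI_n$ and bound $\TNormS{\x}\le \sum_j \tau_j \le 2n$, the same invocation of Lemma~\ref{lem:random} with $\epsilon=1/2$, $M^2=2n$ (so that $k\ge 32n\ln(2n/\delta)$ gives failure probability $\le\delta$), and the same final chain $\XNormS{\matX_{\cal S}^{\dagger}} \le \TNormS{\matD}\cdot\TNormS{(\matY_{\cal S}\matD)^{\dagger}}\cdot\XNormS{\matX^{\dagger}}$ borrowed from the proof of Theorem~\ref{thm2}. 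The operation-count accounting also matches.

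There is, however, one step that fails as written: the claim $\TNormS{\matD}\le 2m/k$. Because the $k$ indices are drawn i.i.d.\ \emph{with replacement}, the diagonal entries of $\matD^2$ are not $1/(kp_{i_t})$ but $s_i = \#\{t : c_t = i\}/(kp_i)$, i.e.\ the weight of a selected column is multiplied by its multiplicity. Multiplicities larger than one are not a corner case here: every $p_i$ is at most $1/n$ (since $\tau_i\le 1$ and $\sum_j\tau_j\ge n$), so with $k=\Theta(n\log n)$ draws the expected multiplicity $kp_i$ can be of order $\log n$, and in the worst case a multiplicity can be as large as $k$. You acknowledge repetitions when writing $\matW=\matY_{\cal S}\matD^2\matY_{\cal S}\transp$, but then drop them in the norm bound; you cannot transfer a bound from the multiset matrix to $\matY_{\cal S}$ for free either, since discarding duplicates only decreases the smallest singular value. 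The paper bounds the multiplicity crudely by $k$, giving $s_i \le 1/p_i \le (\sum_j\tau_j)/(n/m) \le 2m$ and hence $\TNormS{\matD}\le 2m$, which combined with $\TNormS{(\matY_{\cal S}\matD)^{\dagger}}\le 2$ yields exactly the stated $4m$ factor. So the ``stronger'' $4m/k$ bound you arrive at is an artifact of this slip, not a looseness in the paper; once you replace $2m/k$ by the correct $2m$, your argument is precisely the paper's proof and establishes the theorem as stated. (A genuinely sharper bound than $4m$ would require controlling the maximum multiplicity with high probability, an extra concentration argument your proposal does not contain.)
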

\begin{proof}
We first prove the approximation bound, and then bound the number of operations.

The first part of the approximation bound analysis is a technical manipulation to enable us to use Lemma~\ref{lem:random}. First, let $c_1,\dots,c_k$ be the indices sampled in trials $1,\dots,k$. That is ${\cal S} = \{ c_1, \dots, c_k \}$. For $i=1,\dots,k $ define the random vector $\x_i = \y_{c_i}/\sqrt{p_{c_i}}$. Now, for $i=1,\dots,m$ define $s_i = \frac{1}{kp_i} \cdot \#\{j\,:\,c_j=i\}$.
Notice that ${\cal S} = \{ i\,:\,s_i \neq 0\}$. Assume that ${\cal S} = \{ i_1, \dots, i_{\tilde{k}} \}$ where $\tilde{k} \leq k$ and $i_1 < i_2 < \dots < i_{\tilde{k}}$, and let 
$$\matD = \diag(\sqrt{s_{i_1}},\dots,\sqrt{s_{i_{\tilde{k}}}}).$$
With these definitions we observe that 
$$\frac{1}{k} \sum_{i=1}^k \x_i \x_i\transp = \matY_{\cal S} \matD^2 \matY\transp_{\cal S}.$$

Notice that $\x_1,\x_2,\dots,\x_k$ are i.i.d. Let $\x$ denote a random vector from the same distribution of $\x_1,\x_2,\dots,\x_k$. To use Lemma~\ref{lem:random} we need to compute $\Expect{ \x \x\transp}$ and to bound $\TNormS{\x}$:
$$\Expect{ \x \x\transp} = \sum_{i=1}^{m} p_i \cdot \frac{1}{\sqrt{p_i}} \y_i \cdot \frac{1}{\sqrt{p_i}} \y_i\transp = \sum_{i=1}^{m}\y_i\y_i\transp = \matI_n\,;$$
\eqan{
\TNormS{\x} \buildrel{(a)}\over{\le} \max_{j \in [m]}  \frac{\TNormS{\y_j}}{p_j}
\buildrel{(b)}\over{=}
\max_{j \in [m]} \frac{\sum_{i=1}^m \tau_i}{\tau_j} \TNormS{\y_j}
\buildrel{(c)}\over{\le}
\sum_{i=1}^m \tau_i
&\buildrel{(d)}\over{=}&
\sum_{i=1}^m \max\{ \TNormS{\y_i}, \frac{n}{m} \}\\
&\buildrel{(e)}\over{\le}&
n + \sum_{i=1}^m \TNormS{\y_i}
\buildrel{(f)}\over{=}
2 n
}
\math{(a)} follows by replacing the values taken by the vector $\x$.
\math{(b)} follows by replacing the value for the probabilities $p_i$'s.
\math{(c)} follows by the fact that $\tau_j \ge \TNormS{\y_j}$, for all $j=1,...,m$.
\math{(d)} follows by replacing the value for the parameters $\tau_i$'s.
\math{(e)} follows by simple algebra.

We are now ready to apply Lemma~\ref{lem:random} for the random vector $\y$ described above.
An immediate application of this Lemma ($M = \sqrt{2 n }$, $\epsilon=1/2$) and our bound on $k$
give that with probability at
least $1 - \delta$,
$$ \TNorm{ \matY_{\cal S}\matD^2 \matY_{\cal S}\transp - \matI_n } \le \frac{1}{2}.$$
(Recall that $\frac{1}{k} \sum_{i=1}^k \x_i \x_i\transp = \matY_{\cal S} \matD^2 \matY\transp_{\cal S}$). Standard matrix perturbation theory results~\cite{GV96} imply that for $i=1,...,n$
$$| \sigma_i^2\left(\matY_{\cal S}\matD\right)-1 | \leq \TNorm{\matY_{\cal S}\matD^2 \matY_{\cal S}\transp- \matI_n},$$
so, $i=n$ gives,
$$ \TNormS{(\matY_{\cal S} \matD)^{\dagger}} \le 2.$$
We bound $\TNormS{ \matD }$ as follows
\eqan{
\TNormS{ \matD } \buildrel{(a)}\over{=} \max_{j \in [m]} \matD_{jj}^2
\buildrel{(b)}\over{\le}
k \max_{j \in [m]} \left(\frac{1}{k p_j} \right)
&\buildrel{(c)}\over{\le}&
\max_{j \in [m]} \left(\frac{\sum_{i=1}^m \tau_i}{\tau_j} \right)\\
&\buildrel{(d)}\over{=}&
\left(\sum_{i=1}^m \tau_i\right) \cdot \max_{j \in [m]} \left(\frac{1}{\max\{ \TNormS{\y_j}, n/m \}} \right)  \\
&\buildrel{(e)}\over{\leq}&
2 \cdot n \cdot \max_{j \in [m]} \left(\frac{1}{\max\{ \TNormS{\y_j}, n/m \}} \right)\\
&\buildrel{(f)}\over{\le}&
2 \cdot n  \cdot \frac{m}{n} = 2 \cdot m
}
\math{(a)} follows because $\matD^2$ is a diagonal matrix.
\math{(b)} follows because each entry in $\matD^2$ might contain the term $1/kp_j$, at most $k$ times.
\math{(c)} follows by replacing the values for the probabilities.
\math{(d)} follows by replacing the values of the parameters $\tau_j$'s.
\math{(e)} follows by the fact that $$\sum_{i=1}^m \tau_i \le 2 n,$$ which we proved in Eqn. \math{(f)} in the previous calculations.
\math{(f)} follows by simple algebra.

To conclude the analysis of the approximation bound, notice that at the end of Theorem~\ref{thm2}, we implicitly proved that
$$ \XNormS{  \matX_{\cal S}^{\dagger} } \le \TNormS{ \matD } \cdot \TNormS{(\matY_{\cal S} \matD)^{\dagger}} \cdot \XNormS{ \matX^{\dagger}}.$$
Replace the bounds for $\TNormS{(\matY_{\cal S} \matD)^{\dagger}}$ and $\TNormS{\matD}$ in this bound to wrap up.

We conclude by analyzing the operation count. The algorithm first computes an SVD of $\matX$, which costs $O\left( mn^2 \right)$.
The probabilities can be calculated in $O(mn)$ and the sampling procedure can be implemented in $O(m + k \log k)$.
In total, the cost is $O\left( mn^2 + k \log k \right)$. If $\matX$ contains orthonormal rows,  $O(mn + k \log k)$ operations suffice.
\end{proof}

\subsection{\label{sec:volume}Volume based bounds and algorithms}
We now consider bounds and algorithms which construct the set $\cal S$
by looking at the volume of the parallelepiped spanned by the columns of $\matX_{\cal S}$,
which is exactly the determinant of $\matX_{\cal S} \matX_{\cal S}\transp $.

\subsubsection{Subset Selection and Determinants}
We start with Lemma~\ref{lem:square-to-det}, which establishes the connection between determinants and subset selection.
\begin{lemma}
\label{lem:square-to-det}Let $\matX \in\mathbb{R}^{n\times m}$ ($m \ge n)$
be a full rank matrix, and let ${\cal S} \subseteq [m]$ be any subset of cardinality
$k$ ($n \le k \le m$) such that $\matX_{\cal S}$ is full rank. For $i=1,\dots,n$, let $\matY_i \in \R^{(n-1) \times m}$ denote the matrix obtained after removing the $i$th row of $\matX$.
Then,
$$ \FNormS{\matX_{\cal S}^{\dagger}} = \frac{ \sum_{i=1}^{n} \det\left(  \left(\matY_i\right)_{\cal S} \left(\matY_i\right)\transp_{\cal S}  \right) }{\det\left( \matX_{\cal S} \matX_{\cal S}\transp \right) }.$$
Let $\matX = [\x_1, \x_2,...,\x_m]$ be the column representation of $\matX$. If $\cal S$ has cardinality exactly $n$, then
\[
\FNormS{\matX_{\cal S}^{-1}}
\le \TNormS{\matX^{\dagger}} \cdot \frac{\sum_{j=1}^{m}\sum_{i=1}^{n}\det\left(\matX_{\cal S}(i\rightarrow \x_j\right))^{2}}{\det\left(\matX_{\cal S}\right)^{2}}.
\]
Recall that, $\matX_{\cal S}\left(i\rightarrow \x_j\right)$ is the matrix by replacing the $i$-th
column of $\matX_{\cal S}$ with $\x_j$, the
$j$-th column of $\matX$.
If $\matX$ has orthonormal rows then the last inequality is an equality.
\end{lemma}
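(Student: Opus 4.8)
The plan is to prove the first (exact) identity via the Cauchy--Binet formula, and then derive the two inequalities as consequences of it by reasoning column-by-column through Cramer's rule.

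\textbf{Step 1: the exact formula for $\FNormS{\matX_{\cal S}^{\dagger}}$.}
Since $\matX_{\cal S}$ is $n \times k$ of full rank $n$, we have $\FNormS{\matX_{\cal S}^{\dagger}} = \trace\left( (\matX_{\cal S}\matX_{\cal S}\transp)^{-1} \right) = \sum_{i=1}^n \left[ (\matX_{\cal S}\matX_{\cal S}\transp)^{-1} \right]_{ii}$. For each $i$, the $i$th diagonal entry of the inverse equals (by the cofactor formula for the inverse, i.e. $\Adj(\matA)/\det(\matA)$) the $(i,i)$ cofactor of $\matX_{\cal S}\matX_{\cal S}\transp$ divided by $\det(\matX_{\cal S}\matX_{\cal S}\transp)$, and that cofactor is exactly $\det\!\left( (\matY_i)_{\cal S} (\matY_i)_{\cal S}\transp \right)$ since deleting row $i$ and column $i$ of $\matX_{\cal S}\matX_{\cal S}\transp$ gives $(\matY_i)_{\cal S}(\matY_i)_{\cal S}\transp$ (deleting row $i$ of $\matX$ restricted to columns $\cal S$ is $(\matY_i)_{\cal S}$). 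Summing over $i$ gives the claimed identity. This step is clean and does not use Cauchy--Binet.

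\textbf{Step 2: the inequality when $|{\cal S}| = n$.}
Now $\matX_{\cal S}$ is square and invertible, so $\matX_{\cal S}^{\dagger} = \matX_{\cal S}^{-1}$. Write $\matX = [\x_1, \dots, \x_m]$ and factor $\matX = \matA \matB$ with $\matA$ square invertible (e.g. $\matA = \matU \matSig$, $\matB = \matV\transp = \matY$ in the notation of the SVD section), so that $\matX_{\cal S} = \matA \matB_{\cal S}$ and, using the column-exchange identity from the preliminaries, $\det(\matX_{\cal S}(i \rightarrow \x_j)) = \det(\matA) \det(\matB_{\cal S}(i \rightarrow \b_j))$ where $\b_j$ is the $j$th column of $\matB$. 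The idea is to express $\FNormS{\matX_{\cal S}^{-1}}$ in terms of the rows of $\matX_{\cal S}^{-1}$, bound each row, and apply Cramer's rule. Concretely, $\FNormS{\matX_{\cal S}^{-1}} = \sum_{i=1}^n \TNormS{\rb_i}$ where $\rb_i\transp$ is the $i$th row of $\matX_{\cal S}^{-1}$; equivalently $\FNormS{\matX_{\cal S}^{-1}} = \sum_i \sum_{\ell} \TNormS{\matX_{\cal S}^{-1} \e_\ell} \cdot (\text{something})$ — more usefully, since $\{\x_j / \TNorm{\x_j}\}$ or better the full set $\{\x_j\}_{j=1}^m$ spans $\R^n$ richly, I would use the identity $\matI_n = \matX \matX^{\dagger}$, i.e. $\sum_{j=1}^m \x_j (\matX^{\dagger})_j\transp = \matI_n$ where $(\matX^{\dagger})_j$ is the $j$th row of $\matX^{\dagger}$ as a column vector. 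The key observation is that for each standard basis vector, $\matX_{\cal S}^{-1} \x_j$ has $i$th coordinate $\det(\matX_{\cal S}(i \rightarrow \x_j))/\det(\matX_{\cal S})$ by Cramer's rule. Then by writing each row of $\matX_{\cal S}^{-1}$ using $\matX_{\cal S}^{-1} = \matX_{\cal S}^{-1}\matX\matX^{\dagger}$, one gets $\matX_{\cal S}^{-1} = \left( \sum_{j} (\matX_{\cal S}^{-1}\x_j) (\matX^{\dagger})_j\transp \right)$, take Frobenius norms and use $\|\sum_j \u_j \v_j\transp\|_F \le (\sum_j \TNorm{\u_j}^2)^{1/2} \cdot \max$-type bounds — actually cleaner: $\FNormS{\matX_{\cal S}^{-1}} = \FNormS{\matX_{\cal S}^{-1}\matX\matX^{\dagger}} \le \TNormS{\matX^{\dagger}} \cdot \FNormS{\matX_{\cal S}^{-1}\matX}$, and $\FNormS{\matX_{\cal S}^{-1}\matX} = \sum_{j=1}^m \TNormS{\matX_{\cal S}^{-1}\x_j} = \sum_{j=1}^m \sum_{i=1}^n \left( \det(\matX_{\cal S}(i\rightarrow\x_j))/\det(\matX_{\cal S}) \right)^2$ by Cramer's rule. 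This yields exactly the stated inequality.

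\textbf{Step 3: the orthonormal-rows case.}
If $\matX$ has orthonormal rows then $\matX\matX\transp = \matI_n$, so $\matX^{\dagger} = \matX\transp$, $\TNormS{\matX^{\dagger}} = 1$, and moreover $\matX_{\cal S}^{-1}\matX$ has the property that $(\matX_{\cal S}^{-1}\matX)(\matX_{\cal S}^{-1}\matX)\transp = \matX_{\cal S}^{-1}\matX\matX\transp\matX_{\cal S}^{-\textsc{t}} = (\matX_{\cal S}\matX_{\cal S}\transp)^{-1}$, so $\FNormS{\matX_{\cal S}^{-1}\matX} = \trace\left((\matX_{\cal S}\matX_{\cal S}\transp)^{-1}\right) = \FNormS{\matX_{\cal S}^{-1}}$; the step $\FNormS{\matX_{\cal S}^{-1}\matX\matX\transp} \le \TNormS{\matX\transp}\cdot\FNormS{\matX_{\cal S}^{-1}\matX}$ is then an equality since $\TNormS{\matX\transp}=1$ and $\matX\transp$ acts as an isometry on the row space. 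Hence the inequality becomes an equality.

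\textbf{Main obstacle.}
The genuinely delicate point is Step 1's identification of the $(i,i)$ cofactor of $\matX_{\cal S}\matX_{\cal S}\transp$ with $\det\left((\matY_i)_{\cal S}(\matY_i)_{\cal S}\transp\right)$ — one must be careful that deleting the $i$th row \emph{and} $i$th column of the Gram matrix $\matX_{\cal S}\matX_{\cal S}\transp$ really does produce the Gram matrix of the row-deleted matrix, which is true because the $(p,q)$ entry of $\matX_{\cal S}\matX_{\cal S}\transp$ is the inner product of rows $p$ and $q$ of $\matX$ over the columns in $\cal S$. Everything else is bookkeeping with Cramer's rule and submultiplicativity of $\|\cdot\|_F$ against $\|\cdot\|_2$; I would double-check the direction of the norm inequality $\FNorm{\matM\matN} \le \TNorm{\matN}\FNorm{\matM}$ (which holds) is applied with the right factor on the right.
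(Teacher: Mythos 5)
Your proof is correct and follows essentially the same route as the paper's: the equality is established identically (write $\FNormS{\matX_{\cal S}^{\dagger}}$ as the trace of $(\matX_{\cal S}\matX_{\cal S}\transp)^{-1}$, use the adjugate, and identify the $(i,i)$ minor with $\det\left((\matY_i)_{\cal S}(\matY_i)_{\cal S}\transp\right)$), and the inequality uses the same ingredients — submultiplicativity against $\TNormS{\matX^{\dagger}}$, a column-by-column Frobenius expansion, and Cramer's rule — except that you invoke $\matX_{\cal S}^{-1}=\matX_{\cal S}^{-1}\matX\matX^{\dagger}$ directly where the paper routes the identical computation through the SVD factors ($\matX_{\cal S}^{-1}=\matY_{\cal S}^{-1}\matSig^{-1}\matU\transp$, then reinserting $\matSig^{-1}\matU\transp\matU\matSig$), a purely cosmetic difference. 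Your treatment of the orthonormal-rows case matches the paper's as well; the only (inconsequential) slip is that $\matX_{\cal S}^{-1}(\matX_{\cal S}^{-1})\transp$ equals $(\matX_{\cal S}\transp\matX_{\cal S})^{-1}$ rather than $(\matX_{\cal S}\matX_{\cal S}\transp)^{-1}$, but the two have the same trace, so the equality claim is unaffected.
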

\begin{proof}
We first prove the equality in the Lemma ($\cal S$ has cardinality k unless otherwise stated),
\begin{eqnarray*} \FNormS{\matX_{\cal S}^{\dagger}}
\buildrel{(a)}\over{=}   \trace\left( \left( \matX_{\cal S} \matX_{\cal S}\transp \right)^{-1}  \right)
&\buildrel{(b)}\over{=}& \trace\left( \det\left(\matX_{\cal S} \matX_{\cal S}\transp  \right)^{-1} \Adj\left(\matX_{\cal S} \matX_{\cal S}\transp \right) \right)\\
&\buildrel{(c)}\over{=}&    \det\left(\matX_{\cal S} \matX_{\cal S}\transp  \right)^{-1} \trace\left( \Adj\left(\matX_{\cal S} \matX_{\cal S}\transp \right) \right)\\
&\buildrel{(d)}\over{=}&    \det\left(\matX_{\cal S} \matX_{\cal S}\transp  \right)^{-1} \sum_{i=1}^{n} \left(  \Adj\left(\matX_{\cal S} \matX_{\cal S}\transp \right) \right)_{ii} \\
&\buildrel{(e)}\over{=}&    \det\left(\matX_{\cal S} \matX_{\cal S}\transp  \right)^{-1} \sum_{i=1}^{n}  \det\left(  \left(\matY_i\right)_{\cal S} \left(\matY_i\right)\transp_{\cal S}  \right)
\end{eqnarray*}
\math{(a)} follows by a property which connects the Frobenius norm of the pseudoinverse with the trace operator.
\math{(b)} follows by the well known formula for the inverse of a matrix using the adjugate matrix.
\math{(c)} follows by the linearity of the trace operator.
\math{(d)} follows by the definition of the trace operator.
Finally, \math{(e)} follows by the definition of the adjugate matrix and the observation that the $i$th diagonal
element of $\Adj\left(\matX_{\cal S} \matX_{\cal S}\transp \right)$ equals the determinant of an $(n-1) \times (n-1)$
matrix which is exactly $\matX_{\cal S} \matX_{\cal S}\transp$ after removing its $i$th row and $i$th column. This matrix is exactly $\left(\matY_i\right)_{\cal S} \left(\matY_i \right)\transp_{\cal S}$.

We now prove the second inequality ($\cal S$ has now fixed cardinality $n$).
Let $\matX=\matU\matSig \matV\transp$ be an SVD of $\matX$ (see also Section~\ref{sec:pre} for useful notation).
Define $\x_j = \matU \matSig \y_j$.
Then,
\begin{eqnarray*}
\FNormS{\matX_{\cal S}^{-1}} \buildrel{(a)}\over{=} \FNormS{\matY_{\cal S}^{-1} \matSig^{-1}\matU\transp }
 & \buildrel{(b)}\over{\le} &           \TNormS{ \matSig^{-1} } \cdot \FNormS{\matY_{\cal S}^{-1} \matY} \\
 & \buildrel{(c)}\over{=}&  \TNormS{ \matSig^{-1} } \cdot  \sum_{j=1}^{m}\TNormS{\matY_{\cal S}^{-1} \y_j} \\
 & \buildrel{(d)}\over{=}&  \TNormS{ \matSig^{-1} }
 \cdot  \sum_{j=1}^{m}\TNormS{\matY_{\cal S}^{-1} \matSig^{-1} \matU\transp \matU \matSig \y_j} \\
 & \buildrel{(e)}\over{=} & \TNormS{ \matSig^{-1} }
 \cdot\frac{\sum_{j=1}^{m}\sum_{i=1}^{n} \det\left(\left( \matU \matSig\matY_{\cal S} \right)(i\rightarrow\matU \matSig \y_j)\right)^{2}}{
 \det( \matU \matSig\matY_{\cal S})^{2}}\\
 & \buildrel{(f)}\over{=} & \TNormS{ \matX^{\dagger}}  \cdot
 \frac{\sum_{j=1}^{m}\sum_{i=1}^{n}\det(\matX_{\cal S}(i\rightarrow \x_j))^{2}}{\det(\matX_{\cal S})^{2}}\,.
\end{eqnarray*}
\math{(a)} follows by replacing the SVD of $\matX_{\cal S}^{-1}$.
Notice that $\matX_{\cal S}=\matU\matSig \matY_{\cal S}$
and $\matX_{\cal S}^{-1}=\matY_{\cal S}^{-1} \matSig^{-1}\matU\transp$.
The latter equality holds because $\matY_{\cal S}^{-1}$ is a square full rank matrix,
which is immediate by the assumption that $\rank(\matX_{\cal S})=n$.
\math{(b)} follows by first using a property of matrix norms, and, then,
dropping the square orthonormal matrix $\matU\transp$ and
inserting the matrix $\matY$, which has orthonormal rows, to the Frobenius norm term.
\math{(c)} follows by the definition of the Frobenius norm.
\math{(d)} follows by inserting the identity matrix $\matSig^{-1} \matU\transp \matU \matSig = \matI_n$.
\math{(e)} follows by applying Cramer's rule to the linear system $\matA \x = \b$, with $\matA = \matU \matSig \matY_{\cal S}$ and $\b = \matU \matSig \y_j$.
Finally, \math{(f)} follows by replacing the appropriate values for $\matX_{\cal S}$, $\x_j$, and $\TNormS{ \matSig^{-1} }$.

Notice that if $\matX$ has orthonormal rows then $\matSig$ is the identity matrix, and \math{(b)} becomes an equality.
\end{proof}

\subsubsection{Random Subsets Chosen via Volume Sampling}
Lemma~\ref{lem:square-to-det} connects determinants and the term $\FNormS{\matX_{\cal S}^{\dagger}}$, for \emph{any} set $\cal S$ for which $\matX_{\cal S}$ has full rank. In the related work part of the introduction, we also stated various results for the specific set $\hat{\cal S} \subseteq [m]$ of cardinality $k \ge n$ that maximizes
$\det(\matX_{\cal T} \matX_{\cal T}\transp)$
over all possible ${\cal T}$'s of cardinality $k$.
Unfortunately, finding the maximum volume (determinant) subset is not only NP-hard~\cite{Packer02,CM09},
but also exponentially hard to approximate~\cite{Koutis06,CM10},
so, these results do not yield an efficient algorithm. We solve this issue using randomization.

\begin{lemma}\label{lem:volrand}
\label{lem:square-expected-by-volume}Let $\matX\in\mathbb{R}^{n\times m}$
($m \ge n)$ be a full rank matrix and let $m \geq k \ge n$. Suppose that ${\cal S} \sim \volsamp(\matX, k)$. Then,
\begin{align}
\label{eqn:exp-bound} \Expect{\FNormS{\matX_{\cal S}^{\dagger}}}\leq\frac{m-n+1}{k-n+1} \cdot \FNormS{\matX^{\dagger}}\,.
\end{align}
(If for every set ${\cal S} \in \binset{[m]}{n}$ the matrix $\matX_{\cal S}$ is full rank
then this bound becomes an equality). Also,
for $i=1,\dots,n$,
$$\Expect{\sigma^{-2}_i\left(\matX_{\cal S}\right)} \leq \left(1+\frac{n(m-k)}{k-n+1} \right) \cdot \sigma^{-2}_i\left(\matX\right)\,.$$
In particular,
$$\Expect{\TNormS{\matX_{\cal S}^{\dagger}}} \leq \left(1+\frac{n(m-k)}{k-n+1} \right) \cdot \TNormS{\matX^{\dagger}}\,.$$
\end{lemma}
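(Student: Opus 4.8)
The plan is to reduce everything to two ingredients: (i) the exact identity for $\FNormS{\matX_{\cal S}^{\dagger}}$ from Lemma~\ref{lem:square-to-det}, and (ii) the Cauchy--Binet formula, which lets us evaluate the normalizing constant $\sum_{{\cal T}\in\binset{[m]}{k}}\det(\matX_{\cal T}\matX_{\cal T}\transp)$ and the weighted sums that appear in the expectation. First I would write, for each ${\cal S}\in\binset{[m]}{k}$ with $\matX_{\cal S}$ full rank,
$$
\FNormS{\matX_{\cal S}^{\dagger}} \;=\; \frac{\sum_{i=1}^n \det\big((\matY_i)_{\cal S}(\matY_i)\transp_{\cal S}\big)}{\det(\matX_{\cal S}\matX_{\cal S}\transp)}\,,
$$
by Lemma~\ref{lem:square-to-det}. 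Multiplying by $P_{\cal S}=\det(\matX_{\cal S}\matX_{\cal S}\transp)/Z$ with $Z=\sum_{{\cal T}}\det(\matX_{\cal T}\matX_{\cal T}\transp)$, the troublesome determinant in the denominator cancels, so
$$
\Expect{\FNormS{\matX_{\cal S}^{\dagger}}} \;=\; \frac{1}{Z}\sum_{{\cal S}\in\binset{[m]}{k}}\;\sum_{i=1}^n \det\big((\matY_i)_{\cal S}(\matY_i)\transp_{\cal S}\big)\,.
$$
(When some $\matX_{\cal S}$ is rank-deficient, its $P_{\cal S}=0$ and it contributes $0$ to the numerator sum as well, so the identity becomes an inequality exactly in the way the lemma states; I would note this case explicitly.)

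Next I would evaluate numerator and denominator using the combinatorial identity of Lemma~\ref{GK12-lem6} (the elementary-symmetric-function identity) applied to a well-chosen matrix. The cleanest route is to pass to the SVD $\matX=\matU\matSig\matV\transp$ and use volume-sampling invariance under left multiplication by a nonsingular square matrix, $\volsamp(\matX,k)=\volsamp(\matSig\matV\transp,k)$ — indeed I can reduce to $\volsamp(\matV\transp,k)$ for the eigenvalue statement, where $\matV\transp$ has orthonormal rows so $\matV\matV\transp=\matI_n$. For such a matrix $\matA$ with $\matA\matA\transp=\matI_n$, Cauchy--Binet gives $\det(\matA_{\cal T}\matA_{\cal T}\transp)=\sum_{R}\det(\matA_{R,{\cal T}})^2$ over $R\in\binset{[n]}{n}$ (trivially $R=[n]$), and summing over ${\cal T}$ and applying Lemma~\ref{GK12-lem6} to $\matA\transp\matA$ (whose nonzero eigenvalues are all $1$) yields $Z=\binom{m-n}{k-n}$-type counts; the numerator sums factor similarly. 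Carrying this through gives the ratio $\tfrac{m-n+1}{k-n+1}$ for the Frobenius bound, and the general $\matX$ case follows by the $\TNormS{\matSig^{-1}}=\TNormS{\matX^{\dagger}}$ bookkeeping already used in Lemma~\ref{lem:square-to-det}.

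For the per-singular-value statement $\Expect{\sigma_i^{-2}(\matX_{\cal S})}\le(1+\tfrac{n(m-k)}{k-n+1})\sigma_i^{-2}(\matX)$, I would again use invariance to reduce to orthonormal-row $\matX$ and express $\sigma_i^{-2}(\matX_{\cal S})$ via the $i$th diagonal entry of $(\matX_{\cal S}\matX_{\cal S}\transp)^{-1}$ — rather, via the ratio of a principal minor complement to the full determinant, then average using Cauchy--Binet and Lemma~\ref{GK12-lem6} as above. Summing over $i$ recovers the trace (spectral-norm) bound as the case $i=n$, or alternatively derives it from the per-$i$ bound directly. The main obstacle I anticipate is purely bookkeeping: correctly identifying which elementary symmetric functions appear in the numerator versus denominator after expanding the $(n-1)$-row minors $\det((\matY_i)_{\cal S}(\matY_i)\transp_{\cal S})$ via Cauchy--Binet, and making sure the binomial ratios collapse to exactly $\tfrac{m-n+1}{k-n+1}$ and $\tfrac{n(m-k)}{k-n+1}$; the telescoping is clean once the right bijection between $k$-subsets and their $(n-1)$-subsets is set up, but it requires care with the rank-deficient sets contributing zero so that the equality/inequality dichotomy is stated correctly.
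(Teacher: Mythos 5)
Your treatment of the Frobenius-norm bound is essentially the paper's proof: apply the exact identity of Lemma~\ref{lem:square-to-det}, note that the determinant in the denominator cancels against the volume-sampling probability, and evaluate the resulting numerator and denominator with the Cauchy--Binet formula and Lemma~\ref{GK12-lem6}, with the inequality (and the stated equality case) coming from the fact that rank-deficient $k$-subsets are excluded from the expectation but their $(n-1)$-row minors may still be positive when you pass to the unrestricted sum. That part is sound, modulo the bookkeeping you acknowledge.

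The gap is in the second half. There is no determinantal formula of the kind you invoke for an \emph{individual} inverse squared singular value: the $i$th diagonal entry of $(\matX_{\cal S}\matX_{\cal S}\transp)^{-1}$ is indeed a ratio of a principal minor to the full determinant, but it is not $\sigma_i^{-2}(\matX_{\cal S})$; only the trace, i.e.\ $\FNormS{\matX_{\cal S}^{\dagger}}$, has that form, and that is exactly what the first part already exploits. Consequently the Cauchy--Binet averaging cannot be run ``per singular value.'' The invariance reduction also fails here: $\volsamp(\matX,k)=\volsamp(\matV\transp,k)$, but $\sigma_i(\matX_{\cal S})$ is not determined by $\sigma_i\left((\matV\transp)_{\cal S}\right)$, so you cannot simply assume orthonormal rows when bounding individual singular values. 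Nor does the crude bound $\sigma_i^{-2}(\matX_{\cal S})\le\FNormS{\matX_{\cal S}^{\dagger}}$ rescue the argument, since it only controls $\Expect{\sigma_i^{-2}(\matX_{\cal S})}$ by a multiple of $\sigma_n^{-2}(\matX)$, not of $\sigma_i^{-2}(\matX)$, and even for $i=n$ it gives the weaker constant $n+\frac{n(m-k)}{k-n+1}$. The paper's mechanism, which your proposal is missing, is different: for a full-rank ${\cal T}$ one writes $\matX\matPi$ as the product of $\matX_{\cal T}$ padded with zero columns and a unit upper block-triangular matrix $\matW$ whose off-diagonal block is $\matX_{\cal T}^{\dagger}\matX_{\bar{\cal T}}$; a multiplicative singular-value perturbation bound (Eisenstat--Ipsen) then gives $\sigma_i^{-2}(\matX_{\cal T})\le\TNormS{\matW}\,\sigma_i^{-2}(\matX)\le\left(1+\FNormS{\matX_{\cal T}^{\dagger}\matX_{\bar{\cal T}}}\right)\sigma_i^{-2}(\matX)$ for every $i$; finally one uses $\matX_{\cal S}^{\dagger}\matX_{\bar{\cal S}}=\matY_{\cal S}^{\dagger}\matY_{\bar{\cal S}}$, the identity $\FNormS{\matY_{\cal S}^{\dagger}\matY_{\bar{\cal S}}}=\FNormS{\matY_{\cal S}^{\dagger}}-n$ for orthonormal-row $\matY$, and the already-proven Frobenius bound applied to $\matY$ to get $\Expect{\FNormS{\matX_{\cal S}^{\dagger}\matX_{\bar{\cal S}}}}\le\frac{n(m-k)}{k-n+1}$, which yields the stated per-$i$ and spectral bounds. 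Without some substitute for this step, your proposal does not prove the second and third inequalities of the lemma.
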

\begin{proof}
For $i=1,\dots,n$, let $\matY_i \in \R^{(n-1) \times m}$ denote the matrix obtained after removing the $i$th row of $\matX$. Using the definition of expectation and the equality of Lemma~\ref{lem:square-to-det},
\begin{eqnarray*}
\Expect{\FNormS{\matX_{\cal S}^{\dagger}}} 
&=& \frac{\sum_{{\cal S}\in\binset{[m]}{k}}\det(\matX_{\cal S}\matX_{\cal S}\transp)\FNormS{\matX_{\cal S}^{\dagger}}}{\sum_{{\cal S}\in\binset{[m]}{k}}\det(\matX_{\cal S}\matX_{\cal S}\transp)}\\
&\buildrel{(*)}\over{=}& \frac{\sum_{{\cal S}\in\binset{[m]}{k},\rank(\matX_{\cal S})=n}\det(\matX_{\cal S}\matX_{\cal S}\transp)\FNormS{\matX_{\cal S}^{\dagger}}}{\sum_{{\cal S}\in\binset{[m]}{k}}\det(\matX_{\cal S}\matX_{\cal S}\transp)} \\
&=& \frac{\sum_{{\cal S}\in\binset{[m]}{k},\rank(\matX_{\cal S})=n}\sum_{i=1}^{n}\det((\matY_{i})_{\cal S}(\matY_{i})_{\cal S}\transp)}{\sum_{{\cal S}\in\binset{[m]}{k}}\det(\matX_{\cal S}\matX_{\cal S}\transp)}\,.
\end{eqnarray*}
In \math{(*)}, if $\rank(\matX_{\cal S}) \ne n$, then, $\det(\matX_{\cal S}\matX_{\cal S}\transp)=0$, so it
can be ignored in the sum. 
We will now analyze the numerator and the denominator of the last relation separately. We start with the denominator. We have
\begin{eqnarray*}
\sum_{{\cal S}\in\binset{[m]}{k}}\det(\matX_{\cal S}\matX_{\cal S}\transp)  
&\buildrel{(a)}\over{=} & \sum_{{\cal S}\in\binset{[m]}{k}}\sum_{{\cal T}\in\binset{\cal S}{n}}\det(\matX_{{\cal T}}\matX_{{\cal T}}\transp) \\
& \buildrel{(b)}\over{=} & \binnum{m-n}{k-n}\sum_{{\cal T}\in\binset{[m]}{n}}\det(\matX_{{\cal T}}\matX_{{\cal T}}\transp)\\
 & \buildrel{(c)}\over{=} & \binnum{m-n}{k-n}\prod_{i=1}^{n}\sigma_{i}^{2}\,,
\end{eqnarray*}
where $\sigma_{1}\ge \sigma_2\ge\dots\ge\sigma_{n}$ are the singular values of $\matX$.
\math{(a)} follows by applying the Cauchy-Binet formula.
\math{(b)} follows from observing that each set in $\binnum{[m]}{n}$
is repeated exactly $\binnum{m-n}{k-n}$ times in the sum.
\math{(c)} follows by applying the Cauchy-Binet formula again and
 the fact that for symmetric positive-definite
matrices the determinant is equal to the product of
the eigenvalues.

As for the numerator, we have
\begin{eqnarray*}
\sum_{{\cal S}\in\binset{[m]}{k},\rank(\matX_{\cal S})=n}\sum_{i=1}^{n}\det((\matY_{i})_{\cal S}(\matY_{i})_{\cal S}\transp)
& \buildrel{(a)}\over{\leq} &
\sum_{{\cal S}\in\binset{[m]}{k}}\sum_{i=1}^{n}\det((\matY_{i})_{\cal S}(\matY_{i})_{\cal S}\transp)\\
 & =  & \sum_{i=1}^{n}\sum_{{\cal S}\in\binset{[m]}{k}}\det((\matY_{i})_{\cal S}(\matY_{i})_{\cal S}\transp) \\
 & \buildrel{(b)}\over{=} & \sum_{i=1}^{n}\sum_{{\cal S}\in\binset{[m]}{k}}\sum_{{\cal T}\in\binset{\cal S}{n-1}}\det((\matY_{i})_{{\cal T}}(\matY_{i})_{{\cal T}}\transp)\\
 & \buildrel{(c)}\over{=} & \sum_{i=1}^{n}\binnum{m-n+1}{k-n+1}\sum_{{\cal T}\in\binset{[m]}{n-1}}\det((\matY_{i})_{{\cal T}}(\matY_{i})_{{\cal T}}\transp)\\
 & \buildrel{(d)}\over{=} & \binnum{m-n+1}{k-n+1}\sum_{i=1}^{n}\det(\matY_{i}\matY_{i}\transp)\\
 & \buildrel{(e)}\over{=} & \binnum{m-n+1}{k-n+1}\sum_{i=1}^{n}\prod_{j\neq i}\sigma_{i}^{2} \,.
\end{eqnarray*}
\math{(a)} follows because we are adding only positive terms in the sum.
\math{(b)} follows by applying the Cauchy-Binet formula.
\math{(c)} follows from observing that each set in $\binset{[m]}{n-1}$
is repeated exactly $\binnum{m-n+1}{k-n+1}$ times in the sum.
\math{(d)} follows by applying the Cauchy-Binet formula again.
Finally, in \math{(e)}, the matrices $\matY_{i}\matY_{i}\transp$ are equal to the matrix obtained by
deleting the $i$-th column and the $i$-th row of $\matX\matX\transp$, so according to Lemma~\ref{GK12-lem6},
$$
\sum_{i=1}^{n}\det(\matY_{i}\matY_{i}\transp)=\sum_{i=1}^{n}\prod_{j\neq i}\sigma_{i}^{2}.
$$

We now conclude the first part of the proof as follows,
\begin{eqnarray*}
\Expect{\FNormS{\matX_{\cal S}^{\dagger}}}  \leq \frac{\binnum{m-n+1}{k-n+1}\sum_{i=1}^{n}\prod_{j\neq i}\sigma_{i}^{2}}{\binnum{m-n}{k-n}\prod_{i=1}^{n}\sigma_{i}^{2}}
  = \frac{\binnum{m-n+1}{k-n+1}\FNormS{\matX^{\dagger} } }{\binnum{m-n}{k-n}}
  =  \frac{m-n+1}{k-n+1} \cdot \FNormS{\matX^{\dagger} }\,.
\end{eqnarray*}
(If $\matX_{\cal S}$ is full rank for every ${\cal S} \in \binset{[m]}{n}$ then
$(a)$ in the previous calculations is an equality.)

We now prove the bounds for the singular values of $\matX_{\cal S}$.
Let ${\cal T}$ be any subset of $[m]$ of cardinality $k$ such that $\matX_{\cal T}$ has full rank, and let $\bar{\cal T} = [m] - {\cal T}$.
Notice that $\bar{\cal T}$ has cardinality $m-k$.
Let,
$$
\matW = \left(\begin{array}{cc}
             \matI_k & \matX_{\cal T}^{\dagger}\matX_{\bar{\cal T}} \\
             {\bf 0}_{(m-k) \times k} & \matI_{m-k}
          \end{array}
    \right)\in \R^{m \times m}\,.
$$
(Note that $\matX_{\cal T}^{\dagger}\matX_{\bar{\cal T}}\in \R^{k \times (m-k)}$). Since $\matX_{\cal T}$ has full rank, we have
$$\left( \begin{array}{cc} \matX_{\cal T} & {\bf 0}_{n \times (m-k)} \end{array} \right) \matW =
\left( \begin{array}{cc} \matX_{\cal T} & \matX_{\bar{\cal T}} \end{array} \right) = \matX \matPi,
$$
where $\matPi \in \R^{m \times m}$ is an appropriate permutation matrix.
Clearly $\matW$ is non-singular (it is a triangular matrix with a non-zero diagonal), so for $i=1,\dots,n$,
$$
\sigma^{-2}_i \left( \matX_{\cal T} \right) =
    \sigma^{-2}_i \left( \left( \begin{array}{cc} \matX_{\cal T} & {\bf 0}_{n \times (m-k)} \end{array} \right) \right ) =
    \sigma^{-2}_i \left( \matX \matPi \matW^{-1} \right) \leq
    \TNormS{\matW} \cdot \sigma^{-2}_i \left( \matX \matPi \right) \le  \TNormS{\matW} \cdot \sigma^{-2}_i \left( \matX \right)  \,.
$$
In the above, the two inequalities are a simple application of Theorem 3.1 in~\cite{EI95}; we also used the fact that $\TNorm{\matPi} = 1$.
To bound $\TNormS{\matW}$ we observe that,
$$\TNormS{\matW} \le 1 + \TNormS{\matX_{\cal T}^{\dagger} \matX_{\bar{\cal T}}} \le 1 + \FNormS{\matX_{\cal T}^{\dagger} \matX_{\bar{\cal T}}}.$$
Now, if ${\cal S} \sim \volsamp(\matX, k)$ then only $\cal S$'s for which $\matX_{\cal S}$ is full rank have positive probability of being sampled. This implies that for $i=1,\dots,n$,
$$\Expect{\sigma^{-2}_i(\matX_{\cal S})} \leq \Expect{  \left(1+ \FNormS{\matX_{\cal S}^{\dagger}  \matX_{\bar{\cal S}}  } \right) }\cdot \sigma^{-2}_i(\matX)\,.
$$
The above bound is obtained as follows. Recall the two inequalities proved above for any ${\cal T}$:
$$\sigma^{-2}_i \left( \matX_{\cal T} \right) \le \TNormS{\matW} \cdot \sigma^{-2}_i \left( \matX \right),$$ and
$$\TNormS{\matW} \le 1 + \FNormS{\matX_{\cal T}^{\dagger} \matX_{\bar{\cal T}}}.$$ To the get the bound, combine these two
inequalities, apply the resulting inequality to ${\cal T} = {\cal S}$ (recall that ${\cal S}$ takes only values for which $\matX_{\cal S}$ is full rank), and take expectation on both sides.

We now bound $\Expect{  \left(1+ \FNormS{\matX_{\cal S}^{\dagger}  \matX_{\bar{\cal S}}  } \right) }$.
Let $\matX = \matU \matSig \matV\transp$ be an SVD of $\matX$ and let us denote $\matY = \matV\transp$.
Then, it is easy to verify that $\matX_{\cal S}^{\dagger} \matX_{\bar{\cal S}}=\matY_{\cal S}^{\dagger} \matY_{\bar{\cal S}}$.
To bound the expected value of $\FNormS{\matY_{\cal S}^{\dagger} \matY_{\bar{\cal S}}}$ we observe that,
$$
\matY_{\cal S}^{\dagger} \matY \matPi =
    \left( \begin{array}{cc} \matY_{\cal S}^{\dagger}\matY_{\cal S} & \matY_{\cal S}^{\dagger} \matY_{\bar{\cal S}} \end{array} \right)\,.
$$
This implies that 
$$\FNormS{\matY_{\cal S}^{\dagger} \matY \matPi} = \FNormS{\matY_{\cal S}^{\dagger}\matY_{\cal S}} + \FNormS{\matY_{\cal S}^{\dagger} \matY_{\bar{\cal S}}}.$$
Notice that $\matY_{\cal S}^{\dagger}\matY_{\cal S}$ is a projection;
so, $\FNormS{\matY_{\cal S}^{\dagger}\matY_{\cal S}} = n$. We now have $\FNormS{\matY_{\cal S}^{\dagger} \matY \matPi } = n +  \FNormS{\matY_{\cal S}^{\dagger} \matY_{\bar{\cal S}}}$. $\matY \matPi$ has orthonormal rows; so,
$\FNormS{\matY_{\cal S}^{\dagger} \matY \matPi } = \FNormS{\matY^{\dagger}_{\cal S}}$. So,
$\FNormS{\matY_{\cal S}^{\dagger} \matY_{\bar{\cal S}}} = \FNormS{\matY^{\dagger}_{\cal S}} - n.$
Since $\volsamp(\matX, k)=\volsamp(\matY, k)$, the Frobenius norm bound in the lemma guarantees that,
$$
\Expect{\FNormS{\matY^{\dagger}_{\cal S}}} \leq \frac{n(m - n + 1)}{k - n + 1}\,.
$$
Plugging that into the previously established equality $\FNormS{\matY_{\cal S}^{\dagger} \matY_{\bar{\cal S}}} = \FNormS{\matY^{\dagger}_{\cal S}} - n$, we find that,
$$
\Expect{\FNormS{\matY_{\cal S}^{\dagger} \matY_{\bar{\cal S}}}} \leq \frac{n(m-k)}{k-n+1}\,.
$$
This immediately gives a bound on $\Expect{  \left(1+ \FNormS{\matX_{\cal S}^{\dagger}  \matX_{\bar{\cal S}}  } \right) }$, which concludes the proof.
\end{proof}

We can now prove the following corollary, which was previously stated as Lemma~\ref{lem:oneremove}.
\begin{corollary} [Restatement of  Lemma~\ref{lem:oneremove}]
\label{cor:oneremove}
Let $\matX\in\mathbb{R}^{n\times m}$ ($m \ge n)$ be a full rank matrix. There exists a subset ${\cal S}\subset[m]$ of cardinality $m-1$ such that $\matX_{\cal S}$ is full rank and
$$
\FNormS{\matX^{\dagger}_{\cal S}}\leq \frac{m - n + 1}{m - n} \cdot \FNormS{\matX^{\dagger}}\,.
$$
\end{corollary}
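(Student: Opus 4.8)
The plan is to derive this corollary as an immediate consequence of Lemma~\ref{lem:volrand} (the volume-sampling bound), instantiated with the sampling parameter $k = m-1$. We may restrict attention to the case $m > n$, since for $m = n$ a subset of cardinality $m - 1 = n-1$ can never give a full-rank $\matX_{\cal S}$ and the statement is not meant to apply; when $m > n$ we have $m - 1 \ge n$, so Lemma~\ref{lem:volrand} is indeed applicable with $k = m-1$.

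First I would let ${\cal S} \sim \volsamp(\matX, m-1)$ and substitute $k = m-1$ into the Frobenius-norm bound of Lemma~\ref{lem:volrand}, obtaining
$$\Expect{\FNormS{\matX_{\cal S}^{\dagger}}} \le \frac{m-n+1}{(m-1)-n+1}\cdot\FNormS{\matX^{\dagger}} = \frac{m-n+1}{m-n}\cdot\FNormS{\matX^{\dagger}}.$$
Here $\FNormS{\matX_{\cal S}^{\dagger}}$ is finite precisely for those ${\cal S}$ in the support of $\volsamp(\matX, m-1)$, i.e.\ for ${\cal S}$ with $P_{\cal S} > 0$, and this support is nonempty: since $\matX$ is full rank it has $n$ linearly independent columns, and any subset of size $m-1$ containing those columns (such subsets exist because $m-1 \ge n$) satisfies $\det(\matX_{\cal S}\matX_{\cal S}\transp) > 0$.

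Next I would invoke the elementary averaging argument: $\Expect{\FNormS{\matX_{\cal S}^{\dagger}}}$ is a convex combination of the values $\FNormS{\matX_{\cal S}^{\dagger}}$ over the sets ${\cal S}$ with $P_{\cal S} > 0$, so at least one such set, call it ${\cal S}_0$, satisfies $\FNormS{\matX_{{\cal S}_0}^{\dagger}} \le \Expect{\FNormS{\matX_{\cal S}^{\dagger}}} \le \frac{m-n+1}{m-n}\cdot\FNormS{\matX^{\dagger}}$. Since $P_{{\cal S}_0} > 0$ forces $\det(\matX_{{\cal S}_0}\matX_{{\cal S}_0}\transp) > 0$, the submatrix $\matX_{{\cal S}_0}$ has rank $n$, i.e.\ is full rank, and $|{\cal S}_0| = m - 1$. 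Thus ${\cal S} := {\cal S}_0$ meets all requirements of the corollary.

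I do not expect a genuine obstacle here, as the statement is essentially a direct corollary of Lemma~\ref{lem:volrand}. The only points that warrant a word of care are (i) the degenerate case $m = n$, which must be excluded, and (ii) the observation that the volume-sampling distribution is supported exactly on the full-rank $(m-1)$-subsets, so the averaging step automatically produces a \emph{full-rank} witness $\matX_{{\cal S}_0}$, which is what makes $\FNormS{\matX_{{\cal S}_0}^{\dagger}}$ meaningful in the bound.
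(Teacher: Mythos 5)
Your proposal is correct and follows essentially the same route as the paper: instantiate the volume-sampling bound of Lemma~\ref{lem:volrand} with $k=m-1$, use the averaging argument to extract a set whose value is at most the expectation, and observe that $\volsamp(\matX,m-1)$ puts positive probability only on full-rank subsets, so the witness is automatically full rank. Your extra remarks (excluding the degenerate case $m=n$, where the bound's denominator vanishes anyway, and checking the support is nonempty) are sensible refinements of the same argument rather than a different approach.
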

\begin{proof}
Let ${\cal T} \sim \volsamp(\matX, m-1)$. According to Lemma~\ref{lem:volrand} we have
$$\Expect{\FNormS{\matX_{\cal T}^{\dagger}}}\leq\frac{m-n+1}{m-n} \cdot \FNormS{\matX^{\dagger}}\,.$$
The random variable $\FNormS{\matX_{\cal T}^{\dagger}}$ is discrete, so it must assume at least one value larger than the expectancy with non-zero probability. Let ${\cal S}$ be one such set, so
$$\FNormS{\matX^{\dagger}_{\cal S}}\leq \frac{m - n + 1}{m - n} \cdot \FNormS{\matX^{\dagger}}\,.$$
 $\matX_{\cal S}$ must be full rank since ${\cal T}$ assumes it with some non-zero probability, but the distribution $\volsamp(\matX, m-1)$ gives a zero probability to every subset ${\cal R}$ for which $\matX_{\cal R}$ is rank deficient.
\end{proof}

If there exists a subset ${\cal S}$ of columns of cardinality $k$ such that these columns are
linearly dependent then \eqref{eqn:exp-bound} might be a strict inequality.
For example, let $$\matY = \left( \begin{array} {cc} \matI_{n \times n} & {\bf 0}_{n \times (m - n)} \end{array} \right).$$
There is only one set of cardinality $n$ that has positive volume (i.e., the set of columns is full rank): ${\cal T} = [n]$. Since this is the only set with positive probability we have
$$\Expect{\FNormS{\matY_{\cal S}^{\dagger}}}=n=\FNormS{\matY^{\dagger}} < (m-n+1)\FNormS{\matY^{\dagger}}=(m-n+1)n\,.$$
\remove{
It is interesting to note that the function $f(\matX)= \ExpectSub{{\cal S} \sim \volsamp(\matX, n)}{\FNormS{\matX_{\cal S}^{\dagger}}}$, whose domain are full rank matrices of size $n\times m$ ($m \geq n$), is a discontinuous at matrices of this sort. Let $\matZ$
be a matrix such that $\matY_\epsilon = \matY + \epsilon \matZ$ has full rank for every subset ${\cal T}$ and every
$\epsilon \neq 0$. Obviously such a matrix exist. It is easy to see that for $\epsilon \rightarrow 0$
we have $\FNormS{\matY_\epsilon^{\dagger}}\rightarrow\FNormS{\matY^{\dagger}}=n$. We find that
$$ \lim_{\epsilon \rightarrow 0} f(\matY_\epsilon) = \lim_{\epsilon \rightarrow 0} \Expect{\FNormS{( \matY_\epsilon )_{\cal S}^{\dagger}}} = (m-n+1)n \neq n = \Expect{\FNormS{\matY_{\cal S}^{\dagger}}} = f(\matY)\,.$$
}

\subsubsection{Volume Sampling Subset Selection}
To use Lemma~\ref{lem:volrand}
in an algorithm one needs a method to sample a subset from $\volsamp(\matX, k)$. Computing the probabilities for all $\binnum{m}{k}$ subsets
and sampling according to them is not efficient; there are too many such sets. However, this is not necessary, since one can \emph{simulate} volume sampling using polynomial number of operations using recent results from~\cite{DR10,GK12}. More precisely, using the algorithm \textsc{VolumeSample} from~\cite{GK12} we can sample a subset ${\cal S}$ from $\volsamp(\matX, n)$ using $O(n^{3}m)$ operations.
Current determinant-based sampling algorithms~\cite{DR10,GK12} can sample only $k=n$ columns from $\matX$.
We leave it as an open question whether one can efficiently sample from $\volsamp(\matX, k)$ for arbitrary $k \ge n$.

\begin{algorithm}[t]
\textbf{Input:} $\matX\in\R^{n \times m}$ ($m> n$, $\rank(\matX)=n$), parameter $\eta > 0$. \\
\noindent \textbf{Output:} Set $\cal S$ $\subseteq [m]$ of cardinality $n$.
\begin{algorithmic}[1]

\STATE Let $\alpha =  (1+\eta) \cdot \left( m-n+1 \right) \cdot \FNormS{ \matX^{\dagger} }$.

\REPEAT
\STATE Apply \textsc{VolumeSample} from~\cite{GK12} to sample a subset ${\cal S}$ from $\volsamp(\matX, n)$.
\UNTIL  $\FNormS{\matX_{\cal S}^{-1}}  \le  \alpha $

\RETURN ${\cal S}$

\end{algorithmic}
\caption{A randomized volume-based sampling algorithm for subset selection (Theorem~\ref{thm4}.)}
\label{alg5}
\end{algorithm}

A complete pseudo-code description of our algorithm appears as Algorithm~\ref{alg5}.
Algorithm~\ref{alg5} proceeds as follows.
It starts by using \textsc{VolumeSample} from~\cite{GK12} to sample a subset ${\cal S}$ from $\volsamp(\matX, n)$ using $O(n^{3}m)$ operations.
We then compute $\FNormS{\matX_{\cal S}^{-1}}$ (note that \textsc{VolumeSample} must return a full rank $\matX_{\cal S}$) using $O(n^{3})$ operations and compare it to
$(1+\eta) \cdot \left( m-n+1 \right) \cdot  \FNormS{ \matX^{\dagger} }$.
If it is smaller than the bound we return ${\cal S}$, otherwise we repeat this procedure until we
find a satisfactory ${\cal S}$.

We now present the analysis of Algorithm~\ref{alg5}.

\begin{theorem}\label{thm4}
Fix $\matX \in \R^{n \times m}$ ($m \ge n$, $\rank(\matX)=n$).
Choose a parameter $\eta > 0$.
Upon termination, Algorithm~\ref{alg5} outputs a set ${\cal S} \subseteq [m]$ of cardinality $n$
such that,
\eqan{
 \FNormS{ \matX_{\cal S}^{-1} } \le (1+\eta) \cdot \left( m-n+1 \right) \cdot  \FNormS{ \matX^{\dagger} };
\hspace{0.1in}
\TNormS{ \matX_{\cal S}^{-1} }
\le(1+\eta) \cdot \left( m-n+1 \right) \cdot n \cdot \TNormS{ \matX^{\dagger} }.
}
For every $0< \delta < 1$, the algorithm will terminate after
$O\left( m n^3 \log{(1/\delta)}/\log{(1+\eta)}\right)$ operations with probability of at least $1-\delta$.
\end{theorem}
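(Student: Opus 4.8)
The plan is to prove the two claims of Theorem~\ref{thm4} separately: the deterministic approximation bounds, which follow at once from the loop's stopping criterion, and the probabilistic operation count, which rests on Lemma~\ref{lem:volrand} together with Markov's inequality and a geometric-tail estimate. Throughout, recall that \textsc{VolumeSample} from~\cite{GK12} returns a subset ${\cal S}\in\binset{[m]}{n}$ with $\matX_{\cal S}$ full rank, so $\matX_{\cal S}^{-1}$ is well defined and all the quantities below make sense.

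For the approximation bounds, observe that whenever Algorithm~\ref{alg5} returns ${\cal S}$, the loop condition guarantees $\FNormS{\matX_{\cal S}^{-1}}\le\alpha=(1+\eta)(m-n+1)\FNormS{\matX^{\dagger}}$, which is the first inequality. For the second, I would combine this with the two elementary facts $\TNormS{\matX_{\cal S}^{-1}}\le\FNormS{\matX_{\cal S}^{-1}}$ and $\FNormS{\matX^{\dagger}}=\sum_{i=1}^{n}\sigma_i^{-2}(\matX)\le n\,\sigma_n^{-2}(\matX)=n\cdot\TNormS{\matX^{\dagger}}$ (using $\rank(\matX)=n$), which yields $\TNormS{\matX_{\cal S}^{-1}}\le(1+\eta)(m-n+1)\,n\,\TNormS{\matX^{\dagger}}$.

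For the operation count I would apply Lemma~\ref{lem:volrand} with $k=n$. Since $\volsamp(\matX,n)$ assigns positive probability only to sets ${\cal S}$ with $\matX_{\cal S}$ full rank, for such ${\cal S}$ one has $\FNormS{\matX_{\cal S}^{\dagger}}=\FNormS{\matX_{\cal S}^{-1}}$, so Lemma~\ref{lem:volrand} gives $\Expect{\FNormS{\matX_{\cal S}^{-1}}}\le(m-n+1)\FNormS{\matX^{\dagger}}$ for ${\cal S}\sim\volsamp(\matX,n)$. As $\FNormS{\matX_{\cal S}^{-1}}$ is nonnegative, Markov's inequality gives
$$\Probab{\FNormS{\matX_{\cal S}^{-1}}>\alpha}\le\frac{\Expect{\FNormS{\matX_{\cal S}^{-1}}}}{\alpha}\le\frac{(m-n+1)\FNormS{\matX^{\dagger}}}{(1+\eta)(m-n+1)\FNormS{\matX^{\dagger}}}=\frac{1}{1+\eta}\,.$$
Thus each pass through the loop independently accepts its sampled ${\cal S}$ with probability at least $1-\tfrac{1}{1+\eta}=\tfrac{\eta}{1+\eta}>0$, so the algorithm terminates with probability one, and the number $T$ of loop iterations is stochastically dominated by a geometric random variable, $\Probab{T>t}\le(1+\eta)^{-t}$. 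Taking $t=\ceil{\log(1/\delta)/\log(1+\eta)}$ makes the right-hand side at most $\delta$. To finish, I would translate this into operations: computing $\alpha$ costs one SVD, $O(mn^2)$; each iteration costs $O(mn^3)$ for \textsc{VolumeSample}~\cite{GK12} plus $O(n^3)$ for $\FNormS{\matX_{\cal S}^{-1}}$, i.e.\ $O(mn^3)$; hence with probability at least $1-\delta$ the total is $O(mn^2)+t\cdot O(mn^3)=O\!\left(mn^3\log(1/\delta)/\log(1+\eta)\right)$, since $t\ge1$ absorbs the initial SVD.

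There is no genuine obstacle here; the only points needing care are the direction of Markov's inequality (we need an upper-tail bound on the failure event $\FNormS{\matX_{\cal S}^{-1}}>\alpha$, not on the success event) and the role of $\eta>0$, which is precisely what pushes the per-iteration failure probability below $1$ and makes the geometric-tail estimate meaningful; for small $\eta$ one has $\log(1+\eta)\approx\eta$, so the iteration count behaves like $\eta^{-1}\log(1/\delta)$, consistent with the claimed operation count.
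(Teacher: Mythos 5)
Your proposal is correct and follows essentially the same route as the paper: the approximation bounds come directly from the loop's acceptance test (with the spectral bound via $\TNormS{\matB}\le\FNormS{\matB}\le n\TNormS{\matB}$), and the operation count comes from Lemma~\ref{lem:volrand} plus Markov's inequality, yielding a per-iteration failure probability at most $1/(1+\eta)$ and hence a geometric tail on the number of iterations, each costing $O(mn^3)$. Your additional remark that computing $\alpha$ requires an $O(mn^2)$ SVD is a harmless refinement absorbed by the stated bound.
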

\begin{proof}
We first prove the approximation bound, and then bound the number of operations.

We only need to prove the approximation bound for the Frobenius norm.
The bounds for the spectral norm follow by the fact that for any matrix $\matB$ of rank $n$:
$\TNormS{\matB} \le \FNormS{\matB} \le n \TNormS{\matB}$. Notice that we repeat step 3 $t=1,2,...$
times (see step 3 in the algorithm), constructing sets ${\cal S}_1$, ${\cal S}_2 ...$.
We stop only if we find a subset for which the approximation bound holds (the threshold $\alpha$ is exactly the
one that appears in the theorem statement), so the bounds are satisfied upon termination of the algorithm.

We now turn our attention to the number of operations.
Lemma~\ref{lem:square-expected-by-volume} indicates that if $\cal S$ is sampled from $\volsamp(\matX, n)$,
then,
$\Expect{\FNormS{ \matX_{\cal S}^{\dagger} }} \le \left( m-n+1 \right) \cdot  \FNormS{ \matX^{\dagger} }$.
For the first iteration $t=1$, by Markov's inequality, we find that, with probability at most $1/(1+\eta)$:
$\FNormS{ \matX_{{\cal S}_1}^{\dagger} } > (1+\eta) \left( m-n+1 \right) \cdot  \FNormS{ \matX^{\dagger} }$.
Therefore, for a finite number of iterations $\ell > 1$, the probability that all $t=1,...,\ell$, satisfy
$\FNormS{ \matX_{{\cal S}_t}^{\dagger} } > (1+\eta)  \left( m-n+1 \right) \cdot  \FNormS{ \matX^{\dagger} }$ is at most
$1/(1+\eta)^\ell$. So, for any $0< \delta < 1$
the probability that $\ceil{\log{(1/\delta)}/\log{(1+\eta)}}$ successive iterations fail is smaller than $\delta$.

Each iteration (line 3) takes $O(n^{3}m)$. Combining this with the analysis in the
previous paragraph reveals that
for any $0< \delta < 1$, Algorithm~\ref{alg5} will finish after $O\left( m n^3 \log{(1/\delta)}/\log{(1+\eta)}\right)$ operations with probability of at least $1-\delta$.
\end{proof}

\section{Lower Bounds}\label{sec:opt}
This section provides lower bounds for the subset selection problem of Definition~\ref{def:prob}.
By lower bounds, we mean that there exists a matrix $\matX \in \R^{n \times m}$ such that for every ${\cal S}$ of cardinality $k \ge n$,
for $\xi = 2$ or $\xi = \mathrm{F}$,
we have $\XNormS{\matX_{\cal S}^{\dagger}} \ge \gamma \XNormS{\matX^{\dagger}}$, for some value of $\gamma$ which we call \emph{lower bound}.

\subsection{Lower bound for the spectral norm version of the subset selection problem}
Recall that the problem of Definition~\ref{def:prob} is defined for both $\xi=2$ and $\xi=\mathrm{F}$. Here we focus on the $\xi=2$ case
and provide a lower bound of the form 
$$\TNormS{\matX_{\cal S}^{\dagger}} \ge \gamma \TNormS{\matX^{\dagger}}.$$ We first state two known
results that will be used in our proof.

\begin{proposition}[Theorem 17 in~\cite{BDM11a}]
Let $\matA = [\e_1+\alpha\e_2, \e_1+\alpha\e_3,\ldots, \e_1+\alpha\e_{m+1}] \in\R^{(m+1)\times m}$ for some $\alpha > 0$.
If $m > 2$, then, for every subset ${\cal S} \subseteq [m]$ of cardinality $k$, we have
$$ \TNormS{\matA - \matA_{\cal S} \matA_{\cal S}^{\dagger} \matA} = \frac{m+\alpha^2}{k+\alpha^2} \cdot \TNormS{\matA - \matA_{n}},$$  where $\matA_n \in\R^{(m+1)\times m}$ is the best rank $n$ approximation to $\matA$.
\end{proposition}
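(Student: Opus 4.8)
The plan is to evaluate both sides of the claimed identity in closed form; the construction is simple enough that this is entirely explicit.

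\textbf{Right-hand side.} Write $\matA = \e_1\one_m\transp + \alpha\matB$, where $\one_m\in\R^m$ is the all-ones vector and $\matB = [\e_2,\dots,\e_{m+1}]\in\R^{(m+1)\times m}$ has orthonormal columns, each orthogonal to $\e_1$ (so $\matB\transp\e_1=\bm{0}_m$ and $\matB\transp\matB=\matI_m$). Then $\matA\transp\matA = \one_m\one_m\transp + \alpha^2\matI_m$, whose eigenvalues are $m+\alpha^2$ (simple, with eigenvector $\one_m$) and $\alpha^2$ (with multiplicity $m-1$). Hence $\matA$ has full column rank $m$, with singular values $\sqrt{m+\alpha^2}$ and $\alpha$, the latter of multiplicity $m-1$; since $m>2$ forces this multiplicity to be at least two, $\TNormS{\matA-\matA_n} = \sigma_{n+1}(\matA)^2 = \alpha^2$ for every $n$ in the intended range $1\le n\le m-1$. (One also wants $1\le k<m$ for the identity to be non-degenerate; for $k=m$ the left-hand side is trivially $0$.)

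\textbf{Left-hand side.} By symmetry — permuting the columns of $\matA$ merely permutes the rows $\e_2,\dots,\e_{m+1}$ accordingly while fixing $\e_1$ — we may assume ${\cal S}=\{1,\dots,k\}$. Let $P=\matA_{\cal S}\matA_{\cal S}^{\dagger}$ be the orthogonal projector onto $\col(\matA_{\cal S})=\mathrm{span}\{\e_1+\alpha\e_{i+1}:i\in[k]\}$, which (as $\matA$ has full column rank) is a $k$-dimensional subspace of $\mathcal W:=\mathrm{span}(\e_1,\dots,\e_{k+1})$, i.e. a hyperplane in $\mathcal W$. The columns of $\matA-P\matA$ indexed by ${\cal S}$ vanish. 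For $j\notin{\cal S}$ we have $\a_j=\e_1+\alpha\e_{j+1}$ with $j+1\ge k+2$, so $\e_{j+1}\perp\mathcal W\supseteq\col(\matA_{\cal S})$; hence $P\a_j=P\e_1$ \emph{for every} $j\notin{\cal S}$, and the $j$-th residual column equals $\rb_j=\a_j-P\e_1=\v+\alpha\e_{j+1}$, where $\v:=\e_1-P\e_1$ is one fixed vector, lying in $\mathcal W$, orthogonal to $\col(\matA_{\cal S})$ and therefore orthogonal to $\e_{j+1}$ for all $j\notin{\cal S}$. Computing the normal of the hyperplane $\col(\matA_{\cal S})$ inside $\mathcal W$ (impose $\langle\n,\e_1+\alpha\e_{i+1}\rangle=0$) gives $\n=\e_1-\tfrac1\alpha\sum_{i=2}^{k+1}\e_i$, whence $\v=\langle\e_1,\n\rangle\,\n/\TNormS{\n}=\tfrac{\alpha^2}{\alpha^2+k}\e_1-\tfrac{\alpha}{\alpha^2+k}\sum_{i=2}^{k+1}\e_i$ and $\TNormS{\v}=1/\TNormS{\n}=\tfrac{\alpha^2}{\alpha^2+k}$.

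\textbf{Putting it together.} Collecting the $m-k$ nonzero columns (padding a matrix with zero columns does not change its spectral norm), $\TNormS{\matA-\matA_{\cal S}\matA_{\cal S}^{\dagger}\matA}=\TNormS{\matR}$ with $\matR=\v\one_{m-k}\transp+\alpha\matC$, where $\matC=[\e_{j+1}:j\notin{\cal S}]$ has orthonormal columns all orthogonal to $\v$. Thus $\matR\transp\matR=\TNormS{\v}\,\one_{m-k}\one_{m-k}\transp+\alpha^2\matI_{m-k}$ (the cross terms vanish since $\matC\transp\v=\bm{0}$), whose largest eigenvalue is $\alpha^2+(m-k)\TNormS{\v}=\alpha^2+\tfrac{(m-k)\alpha^2}{\alpha^2+k}=\alpha^2\cdot\tfrac{m+\alpha^2}{k+\alpha^2}$. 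Combining this with $\TNormS{\matA-\matA_n}=\alpha^2$ from Step~1 gives the stated equality. The one step that needs care is the identification of the residual columns: everything collapses because of the column-permutation symmetry, because the ``fresh'' coordinate $\alpha\e_{j+1}$ of each $\a_j$ with $j\notin{\cal S}$ is orthogonal to $\col(\matA_{\cal S})$ (so all those columns project onto the same vector $P\e_1$), and because the residual matrix is then a rank-one perturbation of a scaled orthonormal-columns matrix, so its Gram matrix diagonalizes immediately. The only genuine computation is $\TNormS{\v}$, the squared distance from $\e_1$ to $\col(\matA_{\cal S})$.
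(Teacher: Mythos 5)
Your computation is correct, and it is worth noting that this paper does not prove the statement at all: it is imported verbatim as Theorem 17 of~\cite{BDM11a}, so your argument is a self-contained substitute for a citation rather than a rival to a proof given here. Your route is the natural explicit one and it checks out at every step: $\matA\transp\matA = \one\one\transp + \alpha^2\matI_m$ gives singular values $\sqrt{m+\alpha^2}$ and $\alpha$ (multiplicity $m-1$), hence $\TNormS{\matA-\matA_n}=\alpha^2$; by the column/row permutation symmetry you may take ${\cal S}=[k]$; the residual columns for $j\notin{\cal S}$ all share the fixed vector $\e_1 - \matA_{\cal S}\matA_{\cal S}^{\dagger}\e_1$, whose squared norm $\alpha^2/(\alpha^2+k)$ is exactly the squared distance from $\e_1$ to $\col(\matA_{\cal S})$, and the Gram matrix of the residual is $\frac{\alpha^2}{\alpha^2+k}\one\one\transp+\alpha^2\matI_{m-k}$, with top eigenvalue $\alpha^2\frac{m+\alpha^2}{k+\alpha^2}$. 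You are also right to flag the two implicit range restrictions ($1\le n\le m-1$ and $k<m$): as literally stated the identity fails for $k=m$, where the left-hand side is $0$; this degenerate case is harmless for the way the proposition is used in Theorem~\ref{lower:spectral}, since there the resulting lower bound $\frac{m+\alpha^2}{k+\alpha^2}-1$ is $0$ and the claim is trivially true. So the proposal is a complete and correct proof of the cited fact, in essentially the same explicit-computation spirit as the source it replaces.
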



\begin{proposition}[Lemma 7 in~\cite{BDM11a}]
Let $\matW \in \R^{d \times m}$,  parameter $n < \rank(\matW)$,
and sampling parameter $n \le k \le  m$. Let $\matW = \matU \matSig \matV\transp$ be the SVD  of $\matW$.
Let $\matZ$ be the first $n$ rows of $\matV\transp$.
For every subset ${\cal S} \subseteq [m]$ of cardinality $k$ for which $\matZ_{\cal S}$
has full rank, we have
$$
\TNormS{\matW - \matW_{\cal S} \matW_{\cal S}^{\dagger} \matW}\le \TNormS{\matW-\matW_n} + \TNormS{\left(\matW-\matW_n\right)_{\cal S}\matZ^{\dagger}_{\cal S}}\le
\left( 1 + \TNormS{\matZ^{\dagger}_{\cal S}}\right) \cdot \TNormS{\matW-\matW_n}\,.
$$
Here, $\matW_n \in \R^{d \times m}$ is the best rank $n$ approximation to $\matW$.
\end{proposition}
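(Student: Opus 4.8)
The plan is to bound $\TNormS{\matW - \matW_{\cal S}\matW_{\cal S}^{\dagger}\matW}$ by comparing the orthogonal-projection reconstruction of $\matW$ against a particular algebraic reconstruction built from $\matW_{\cal S}$, and then to exploit a row-space orthogonality so that the spectral norm of the resulting error splits additively.

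First I would record that $\matP := \matW_{\cal S}\matW_{\cal S}^{\dagger}$ is the orthogonal projector onto $\col(\matW_{\cal S})$, and that it is a best spectral-norm approximation of $\matW$ among matrices whose columns lie in $\col(\matW_{\cal S})$: for any such matrix $\matB$ we have $(\matI - \matP)\matB = {\bf 0}$, hence $\matW - \matP\matW = (\matI-\matP)(\matW - \matB)$ and $\TNorm{\matW - \matW_{\cal S}\matW_{\cal S}^{\dagger}\matW} \le \TNorm{\matI - \matP}\cdot\TNorm{\matW - \matB} \le \TNorm{\matW - \matB}$. I would then instantiate this with $\matB = \matW_{\cal S}\matZ_{\cal S}^{\dagger}\matZ$, whose columns clearly lie in $\col(\matW_{\cal S})$.

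Next comes an algebraic rewriting. Writing $\matW = \matU\matSig\matV\transp$ and letting $\matU_n,\matSig_n,\matV_n$ denote the leading $n$ blocks, so that $\matW_n = \matU_n\matSig_n\matV_n\transp = \matU_n\matSig_n\matZ$, and using that $\matZ_{\cal S}\in\R^{n\times k}$ has full row rank so $\matZ_{\cal S}\matZ_{\cal S}^{\dagger} = \matI_n$, one obtains $\matW_n = (\matW_n)_{\cal S}\matZ_{\cal S}^{\dagger}\matZ$. Subtracting gives
\[
\matW - \matW_{\cal S}\matZ_{\cal S}^{\dagger}\matZ = (\matW - \matW_n) - (\matW - \matW_n)_{\cal S}\matZ_{\cal S}^{\dagger}\matZ =: \matA - \matB.
\]
The rows of $\matA = \matW - \matW_n$ lie in the span of the trailing right singular vectors of $\matW$ and hence are orthogonal to the rows of $\matZ$, while the rows of $\matB$ lie in the row space of $\matZ$; therefore $\matA\matB\transp = {\bf 0}$, so $(\matA-\matB)(\matA-\matB)\transp = \matA\matA\transp + \matB\matB\transp$ and $\TNormS{\matA - \matB} = \TNorm{\matA\matA\transp + \matB\matB\transp} \le \TNormS{\matA} + \TNormS{\matB}$. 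Since $\matZ$ has orthonormal rows, $\TNorm{\matB} = \TNorm{(\matW-\matW_n)_{\cal S}\matZ_{\cal S}^{\dagger}}$, and with $\TNorm{\matA} = \TNorm{\matW - \matW_n}$ this yields the first inequality of the proposition. The second inequality then follows from submultiplicativity, $\TNormS{(\matW-\matW_n)_{\cal S}\matZ_{\cal S}^{\dagger}} \le \TNormS{(\matW-\matW_n)_{\cal S}}\cdot\TNormS{\matZ_{\cal S}^{\dagger}}$, together with the fact that deleting columns cannot increase the spectral norm, so $\TNormS{(\matW-\matW_n)_{\cal S}} \le \TNormS{\matW - \matW_n}$.

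The two points that need genuine care rather than bookkeeping are: (i) that the orthogonal projection is optimal in the \emph{spectral} norm and not merely the Frobenius norm, which is why the $(\matI-\matP)$ factorization above is essential; and (ii) that the error $\matA - \matB$ obeys the clean additive bound $\TNormS{\matA} + \TNormS{\matB}$ rather than a lossy factor-$\sqrt{2}$ estimate, which is exactly what the orthogonality $\matA\matB\transp = {\bf 0}$ buys. I expect verifying that orthogonality — that $\matB$'s rows are confined to the row space of $\matZ$ while $\matA\matZ\transp = {\bf 0}$ — to be the crux; the rest is routine linear algebra.
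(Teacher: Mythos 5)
Your proof is correct: the projector bound $\TNorm{\matW - \matW_{\cal S}\matW_{\cal S}^{\dagger}\matW} \le \TNorm{\matW - \matW_{\cal S}\matZ_{\cal S}^{\dagger}\matZ}$, the identity $\matW_n = (\matW_n)_{\cal S}\matZ_{\cal S}^{\dagger}\matZ$ from $\matZ_{\cal S}\matZ_{\cal S}^{\dagger}=\matI_n$, and the orthogonality $(\matW-\matW_n)\matZ\transp = {\bf 0}$ giving the additive split are all valid. Note that this paper does not prove the statement itself but imports it as Lemma 7 of~\cite{BDM11a}, and your argument is essentially the same projection-plus-matrix-Pythagoras proof given there, so there is nothing to add.
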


\begin{theorem}[Spectral Norm]\label{lower:spectral}
For any $\alpha>0$, $n > 0$, $m > 2$ with $m > n$, and $k$ with $n \le k \le m$, there exists
a full rank $n \times m$ matrix $\matX$ such that, for any subset $\cal S$ $\subseteq [m]$
of cardinality $k \ge n$ with $\rank(\matX_{\cal S}) = n$,
$$ \TNormS{ \matX_{\cal S}^{\dagger} } \ge \left( \frac{m+\alpha^2}{k+\alpha^2} - 1 \right) \cdot \TNormS{\matX^{\dagger}}.$$
\end{theorem}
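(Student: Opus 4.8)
The plan is to realize the claimed lower bound by instantiating the generic column-reconstruction machinery of~\cite{BDM11a} on the explicit hard instance appearing in Theorem 17 of~\cite{BDM11a}. Set $\matA = [\e_1+\alpha\e_2,\e_1+\alpha\e_3,\ldots,\e_1+\alpha\e_{m+1}] \in \R^{(m+1)\times m}$ and write an SVD $\matA = \matU\matSig\matV\transp$. The $m$ columns of $\matA$ are linearly independent (the coefficients of $\e_2,\dots,\e_{m+1}$ force independence since $\alpha>0$), so $\rank(\matA)=m>n$; hence $\matV\transp\in\R^{m\times m}$ is a genuine $m\times m$ orthogonal matrix. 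First I would define $\matX = \matZ\in\R^{n\times m}$ to be the matrix formed by the first $n$ rows of $\matV\transp$. Since the rows of $\matX$ are part of an orthonormal basis of $\R^m$, $\matX$ has full rank $n$ and all its singular values equal $1$, so $\TNormS{\matX^{\dagger}}=1$; this is what turns the upcoming absolute bound on $\TNormS{\matZ_{\cal S}^{\dagger}}$ into the stated relative bound.

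Next, fix any ${\cal S}\subseteq[m]$ of cardinality $k$ with $\rank(\matX_{\cal S})=n$, i.e.\ with $\matZ_{\cal S}$ of full rank. Applying Lemma 7 of~\cite{BDM11a} to $\matW=\matA$ (whose rank $m$ exceeds $n$, as that lemma requires), we get
$$\TNormS{\matA - \matA_{\cal S}\matA_{\cal S}^{\dagger}\matA}\ \le\ \left(1+\TNormS{\matZ_{\cal S}^{\dagger}}\right)\cdot\TNormS{\matA-\matA_n}\,,$$
while Theorem 17 of~\cite{BDM11a} evaluates the left-hand side exactly as $\frac{m+\alpha^2}{k+\alpha^2}\cdot\TNormS{\matA-\matA_n}$. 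Because $\rank(\matA)=m>n$, the best rank-$n$ error $\TNormS{\matA-\matA_n}=\sigma_{n+1}^2(\matA)$ is strictly positive, so I can cancel it and conclude $\frac{m+\alpha^2}{k+\alpha^2}\le 1+\TNormS{\matZ_{\cal S}^{\dagger}}$, that is,
$$\TNormS{\matX_{\cal S}^{\dagger}}=\TNormS{\matZ_{\cal S}^{\dagger}}\ \ge\ \frac{m+\alpha^2}{k+\alpha^2}-1\ =\ \left(\frac{m+\alpha^2}{k+\alpha^2}-1\right)\cdot\TNormS{\matX^{\dagger}}\,.$$
Note that the hypothesis ``$\matZ_{\cal S}$ has full rank'' needed to invoke Lemma 7 is precisely the restriction $\rank(\matX_{\cal S})=n$ stated in the theorem, so the bound holds for exactly the subsets $\cal S$ we are asked about.

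This argument is almost entirely bookkeeping; all the real content is pushed into the two cited propositions. The only places that require a little care are: (i) confirming $\rank(\matA)=m$, so that the top-$n$ right singular vectors genuinely form a full-rank $n\times m$ matrix with $\TNormS{\matX^{\dagger}}=1$; (ii) checking $\TNormS{\matA-\matA_n}\neq 0$ before dividing through; and (iii) verifying that the full-rank condition on $\matZ_{\cal S}$ coincides with the theorem's restriction on $\cal S$. I do not expect a substantive obstacle here — the main subtlety is simply justifying the normalization $\TNormS{\matX^{\dagger}}=1$, which is immediate once $\matX$ is taken to be a row-submatrix of the orthogonal matrix $\matV\transp$.
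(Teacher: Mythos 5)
Your proposal is correct and follows essentially the same route as the paper: the same hard instance $\matA=[\e_1+\alpha\e_2,\ldots,\e_1+\alpha\e_{m+1}]$, with $\matX$ taken as the top $n$ rows of $\matV\transp$, and the bound obtained by combining Theorem 17 and Lemma 7 of~\cite{BDM11a} and cancelling $\TNormS{\matA-\matA_n}$. The extra checks you spell out (that $\rank(\matA)=m$, that $\TNormS{\matA-\matA_n}>0$, and that $\TNormS{\matX^{\dagger}}=1$) are exactly the implicit normalizations the paper's proof relies on, so there is no substantive difference.
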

\begin{proof}
We construct the matrix $\matX$ as follows. Let $\matA = [\e_1+\alpha\e_2, \e_1+\alpha\e_3,\ldots, \e_1+\alpha\e_{m+1}] \in\R^{(m+1)\times m}$, and let $\matA = \matU \matSig \matV\transp$ be the SVD decomposition of $\matA$.
Let $\matX$ consist of the first $n$ rows of $\matV\transp$.
We prove the bound using Theorem 17 and Lemma 7 from~\cite{BDM11a} (see the two propositions above).

Applying Lemma 7 from~\cite{BDM11a} on $\matA$ and $\matX$ and combining it with the bound from Theorem 17 of~\cite{BDM11a} mentioned above, gives
$$
\TNormS{\matX^{\dagger}_{\cal S}}\ge \frac{\TNormS{\matA - \matA_{\cal S} \matA_{\cal S}^{\dagger} \matA}}{\TNormS{\matA-\matA_n}} - 1 =
\frac{m+\alpha^2}{k+\alpha^2} - 1 = \left(\frac{m+\alpha^2}{k+\alpha^2} - 1\right) \TNormS{\matX^{\dagger}}.
$$
\end{proof}

As $\alpha \rightarrow 0$, the  bound in the above theorem is $m/k - 1$. If $k = (1 + \Omega(1))n$ then
the upper bound of the deterministic algorithm of Theorem~\ref{thm2} asymptotically matches this lower bound.
The upper bounds of the algorithms in the Theorems~\ref{thm1},~\ref{thm3}, and~\ref{thm4} and Corollary~\ref{cor1} are - asymptotically - slightly worse. However, if $k=(1+o(1))n$ there is a gap between the lower bound and the best upper bound.


\subsection{Lower bound for the Frobenius norm version of the subset selection problem}
Recall that the problem of Definition~\ref{def:prob} is defined for both $\xi=2$ and $\xi=\mathrm{F}$. Here we focus on the $\xi=\mathrm{F}$ case
and provide a lower bound of the form $\FNormS{\matX_{\cal S}^{\dagger}} \ge \gamma \FNormS{\matX^{\dagger}}$.
We first state a known result that will be used in our proof.

\begin{proposition}[Theorem 19 in~\cite{BDM11a}]
Consider a block diagonal matrix $\matB \in \R^{d \times m}$:
a matrix $\matA \in \R^{d/n \times m/n}$ of the form that appears in the proof of Theorem~\ref{lower:spectral} is repeated $n$ times on $\matB$'s main diagonal.
For any $n \le \rank(\matB)$, and $k \ge n$,
any subset $\cal S$ of $k$ columns of $\matB$ satisfies,
$$ \FNormS{\matB - \matB_{\cal S} \matB_{\cal S}^{\dagger} \matB} = \frac{m-k}{m-n} \cdot \left( 1 + \frac{n}{k+\alpha^2}\right) \cdot \FNormS{\matB - \matB_{n}}.$$
\end{proposition}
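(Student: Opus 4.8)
The plan is to reconstruct the proof of Theorem 19 of~\cite{BDM11a} by exploiting the block structure of $\matB$: reduce the reconstruction error to a single diagonal block, evaluate that block's error in closed form, and then optimize over how the $k$ chosen columns are split among the blocks. Write $p=m/n$ for the number of columns of each block $\matA\in\R^{(p+1)\times p}$, and let $\alpha_0$ be the parameter used inside a block (a rescaling of $\alpha$; the exact relation $\alpha_0^2=\alpha^2/n$ will drop out of the final simplification). A one-line computation gives $\matA\transp\matA=\one\one\transp+\alpha_0^2\matI_p$, so $\matA$ has a single singular value $\sqrt{p+\alpha_0^2}$ and $p-1$ singular values equal to $\alpha_0$. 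Since $\matB=\diag(\matA,\dots,\matA)$, its top $n$ singular values all equal $\sqrt{p+\alpha_0^2}$ and the remaining $m-n$ equal $\alpha_0$; hence a best rank-$n$ approximant of $\matB$ is block diagonal and $\FNormS{\matB-\matB_n}=(m-n)\alpha_0^2$.

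The one genuinely structural step is that columns from different blocks of $\matB$ are supported on disjoint sets of rows. Consequently, if a subset $\cal S$ of $k$ columns selects $k_b$ columns from block $b$ (so $\sum_{b=1}^n k_b=k$), the orthogonal projector onto $\col(\matB_{\cal S})$ is block diagonal, equal to the direct sum of the projectors onto the selected columns within each block. Therefore
$$\FNormS{\matB-\matB_{\cal S}\matB_{\cal S}^{\dagger}\matB}=\sum_{b=1}^{n}\FNormS{\matA-\matA_{{\cal S}_b}\matA_{{\cal S}_b}^{\dagger}\matA},$$
and it suffices to understand one block's error as a function of $k_b$.

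For a single block I would use the permutation symmetry of $\matA$ (permuting its columns merely permutes the coordinates $\e_2,\dots,\e_{p+1}$) to assume ${\cal S}_b=\{1,\dots,k_b\}$. The selected columns span a $k_b$-dimensional subspace of the span of $\e_1,\dots,\e_{k_b+1}$, so every unselected column $\e_1+\alpha_0\e_{j+1}$ with $j>k_b$ has residual $(\matI-P)\e_1+\alpha_0\e_{j+1}$, a sum of two orthogonal vectors. A Sherman--Morrison evaluation of $P\e_1=\matA_{{\cal S}_b}(\matA_{{\cal S}_b}\transp\matA_{{\cal S}_b})^{-1}\matA_{{\cal S}_b}\transp\e_1$, using $\matA_{{\cal S}_b}\transp\matA_{{\cal S}_b}=\one\one\transp+\alpha_0^2\matI_{k_b}$ and $\matA_{{\cal S}_b}\transp\e_1=\one$, gives $\TNormS{(\matI-P)\e_1}=\alpha_0^2/(\alpha_0^2+k_b)$, hence
$$\FNormS{\matA-\matA_{{\cal S}_b}\matA_{{\cal S}_b}^{\dagger}\matA}=(p-k_b)\,\alpha_0^2\Bigl(1+\tfrac{1}{\alpha_0^2+k_b}\Bigr).$$
This is exactly the Frobenius-norm counterpart of the spectral identity behind Theorem 17 of~\cite{BDM11a}.

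Summing over blocks gives $\FNormS{\matB-\matB_{\cal S}\matB_{\cal S}^{\dagger}\matB}=\alpha_0^2\bigl[(m-k)+(p+\alpha_0^2)\sum_{b}(\alpha_0^2+k_b)^{-1}-n\bigr]$. Since $t\mapsto(\alpha_0^2+t)^{-1}$ is convex, $(k_1,\dots,k_n)\mapsto\sum_b(\alpha_0^2+k_b)^{-1}$ on the simplex $\sum_b k_b=k$ is minimized at the balanced split $k_b=k/n$; substituting this, simplifying, and using $\alpha_0^2=\alpha^2/n$ turns the expression into $\frac{m-k}{m-n}\bigl(1+\frac{n}{k+\alpha^2}\bigr)(m-n)\alpha_0^2=\frac{m-k}{m-n}\bigl(1+\frac{n}{k+\alpha^2}\bigr)\FNormS{\matB-\matB_n}$, which is the claimed value for the extremal (optimal) $k$-column subset, and in particular a lower bound on the error of every $k$-column subset — which is the form used in the lower-bound applications. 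The main obstacle is precisely this last step: the reconstruction error depends on how the $k$ columns are distributed among the blocks, not merely on $k$, so the clean closed form only emerges after the convexity argument pins down the balanced configuration; one must also treat the case where $k$ is not divisible by $n$ by rounding, at the cost of a lower-order term. The per-block Sherman--Morrison computation and the block decomposition of the projector are routine.
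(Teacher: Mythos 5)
The paper never proves this proposition---it is imported verbatim from~\cite{BDM11a}---so there is no in-paper argument to compare against; your reconstruction is the natural one (block-diagonal decomposition of the projector, Sherman--Morrison for the per-block residual, Jensen/convexity to balance the column counts across blocks), and each of these steps checks out. The substantive point you raise is real and worth stressing: since the total error depends on how the $k$ columns are split among the $n$ blocks (your per-block formula makes this explicit), the displayed identity cannot hold with equality for \emph{every} cardinality-$k$ subset; what your argument actually establishes is $\FNormS{\matB-\matB_{\cal S}\matB_{\cal S}^{\dagger}\matB}\ge\frac{m-k}{m-n}\bigl(1+\frac{n}{k+\alpha^2}\bigr)\FNormS{\matB-\matB_n}$ for all $\cal S$, with equality precisely at the balanced split when $n$ divides $k$, so the proposition should be read as a lower bound (attained by the optimal subset). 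This weaker form is all that is needed where the proposition is invoked, in the proof of Theorem~\ref{lower:frobenius}, which concludes with an inequality anyway; likewise your rescaling $\alpha_0^2=\alpha^2/n$ of the within-block parameter, versus using $\alpha$ itself in each block, only replaces $\alpha^2$ by $n\alpha^2$ in the final formula, which is immaterial in the $\alpha\to 0$ limit in which the bound is applied.
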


\begin{theorem}[Frobenius Norm]\label{lower:frobenius}
For any $\alpha>0$, $n$, $m$ with $m > n$, $\mod(m,n)=0$, and $m/n > 2$, and $k$ with $n \le k \le m $, there is
a full rank $n \times m$ matrix $\matX$ such that, for any subset $\cal S$ $\subseteq [m]$
of cardinality $k \ge n$ with $\rank(\matX_{\cal S}) = n$, we have
$$ \FNormS{ \matX_{\cal S}^{\dagger} } \ge \left( \frac{m-k}{k+\alpha^2} + 1 - \frac{k}{n} \right) \cdot \FNormS{\matX^{\dagger}} $$
\end{theorem}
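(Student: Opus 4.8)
The plan is to mirror the proof of Theorem~\ref{lower:spectral}, replacing the single ``arrow'' matrix by the block-diagonal construction of Proposition (Theorem 19 in~\cite{BDM11a}) and the spectral-norm reconstruction bound by its Frobenius-norm counterpart.

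\emph{Construction.} Let $\matB\in\R^{(m+n)\times m}$ be the block diagonal matrix of Theorem~19 in~\cite{BDM11a}: $n$ copies along the diagonal of the $(m/n+1)\times(m/n)$ matrix $[\e_1+\alpha\e_2,\e_1+\alpha\e_3,\ldots]$. Each such block has Gram matrix $\alpha^2\matI_{m/n}+\one\one\transp$, hence one singular value $\sqrt{m/n+\alpha^2}$ and $m/n-1$ singular values equal to $\alpha$; therefore $\matB$ has $n$ singular values equal to $\sqrt{m/n+\alpha^2}$ and $m-n$ equal to $\alpha$. In particular $\rank(\matB)=m\ge n$ (so Theorem~19 applies, given $\mod(m,n)=0$ and $m/n>2$), $\FNormS{\matB-\matB_n}=(m-n)\alpha^2$, and $\TNormS{\matB-\matB_n}=\alpha^2$ (the $(n{+}1)$st squared singular value of $\matB$). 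Let $\matB=\matU\matSig\matV\transp$ be the SVD and let $\matX$ be the first $n$ rows of $\matV\transp$; then $\matX$ has orthonormal rows, so it is full rank and $\FNormS{\matX^{\dagger}}=\FNormS{\matX}=n$.

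\emph{Two estimates on the reconstruction error.} Fix any $\cal S$ of cardinality $k\ge n$ with $\rank(\matX_{\cal S})=n$ (so $\matX_{\cal S}=(\matV\transp)_{\cal S}$ restricted to its top $n$ rows is full rank). Applying the Frobenius-norm counterpart of Lemma~7 in~\cite{BDM11a} (the identical argument, with $\TNorm{\cdot}$ replaced by $\FNorm{\cdot}$) to $\matW=\matB$, whose top-$n$ right singular vectors are exactly the rows of $\matX$, gives
\[
\FNormS{\matB-\matB_{\cal S}\matB_{\cal S}^{\dagger}\matB}\ \le\ \FNormS{\matB-\matB_n}+\FNormS{(\matB-\matB_n)_{\cal S}\matX_{\cal S}^{\dagger}}\ \le\ (m-n)\alpha^2+\alpha^2\,\FNormS{\matX_{\cal S}^{\dagger}},
\]
where the second inequality uses the mixed submultiplicative bound $\FNormS{\matP\matQ}\le\TNormS{\matP}\cdot\FNormS{\matQ}$ with $\matP=(\matB-\matB_n)_{\cal S}$, $\matQ=\matX_{\cal S}^{\dagger}$, together with $\TNormS{(\matB-\matB_n)_{\cal S}}\le\TNormS{\matB-\matB_n}=\alpha^2$ (a column submatrix cannot increase the spectral norm). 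On the other hand, Theorem~19 in~\cite{BDM11a} evaluates the same quantity exactly: $\FNormS{\matB-\matB_{\cal S}\matB_{\cal S}^{\dagger}\matB}=\frac{m-k}{m-n}\bigl(1+\frac{n}{k+\alpha^2}\bigr)\FNormS{\matB-\matB_n}=(m-k)\bigl(1+\frac{n}{k+\alpha^2}\bigr)\alpha^2$.

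\emph{Conclusion.} Combining the two estimates, cancelling $\alpha^2$, and rearranging yields $\FNormS{\matX_{\cal S}^{\dagger}}\ge(m-k)\bigl(1+\frac{n}{k+\alpha^2}\bigr)-(m-n)=(n-k)+\frac{n(m-k)}{k+\alpha^2}$; dividing by $\FNormS{\matX^{\dagger}}=n$ gives precisely $\FNormS{\matX_{\cal S}^{\dagger}}\ge\bigl(\frac{m-k}{k+\alpha^2}+1-\frac{k}{n}\bigr)\FNormS{\matX^{\dagger}}$, as claimed. The only non-routine ingredient is the Frobenius-norm analog of Lemma~7 in~\cite{BDM11a} (the excerpt quotes only its spectral form), and within it the insistence on the \emph{mixed} norm split $\FNorm{\matP\matQ}\le\TNorm{\matP}\FNorm{\matQ}$: it lets the harmless tail factor $\TNormS{(\matB-\matB_n)_{\cal S}}=\Theta(\alpha^2)$ carry the spectral norm while $\FNormS{\matX_{\cal S}^{\dagger}}$ — the quantity being bounded below — is kept in Frobenius norm; a purely Frobenius factorization would bleed an extra factor of $n$ and break the tightness. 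Everything else is bookkeeping of singular values and of the hypotheses required by the two cited propositions.
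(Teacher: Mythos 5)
Your proposal is correct and follows essentially the same route as the paper: the same block-diagonal construction $\matB$ built from $n$ copies of the arrow matrix, $\matX$ taken as the top $n$ right singular vectors, the exact evaluation from Theorem 19 of~\cite{BDM11a} combined with the Frobenius form of Lemma 7 via the mixed bound $\FNorm{\matP\matQ}\le\TNorm{\matP}\,\FNorm{\matQ}$, and the same final algebra after dividing by $\FNormS{\matX^{\dagger}}=n$. Your explicit computation of the block singular values and of $\TNormS{(\matB-\matB_n)_{\cal S}}\le\TNormS{\matB-\matB_n}$ only makes explicit steps the paper leaves implicit.
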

\begin{proof}
We construct the matrix $\matX$ as follows. Consider a block diagonal matrix $\matB \in \R^{d \times m}$:
a matrix $\matA \in \R^{d/n \times m/n}$ of the form that appears in the proof of Theorem~\ref{lower:spectral} is repeated $n$ times on $\matB$'s main diagonal. Let $\matB = \matU \matSig \matV\transp$ is the SVD of $\matB$. $\matX$ is the first $n$ rows of $\matV\transp$.

To prove the bound we use Theorem 19 and Lemma 7 from~\cite{BDM11a}.

Using $$\sigma_1(\matB) = \sigma_2(\matB) =\dots=\sigma_n(\matB)=n+\alpha^2;$$
$\sigma_{n+1}(\matB) = \sigma_{n+2}(\matB) =\dots=\sigma_{m}=\alpha^2$;
$\TNormS{\matB - \matB_n} = \alpha^2$; and $\FNormS{\matB - \matB_n} = (m-n) \alpha^2$, we obtain,
$$
\FNormS{\matB - \matB_{\cal S} \matB_{\cal S}^{\dagger} \matB} = \left(m-k\right) \cdot \left( 1 + \frac{n}{k+\alpha^2}\right)  \cdot \TNormS{\matB - \matB_{n}}.$$
Now, Lemma 7 of~\cite{BDM11a} implies that, for any matrix $\matW \in \R^{d \times m}$, rank parameter $n < \rank(\matW)$,
and sampling parameter $n \le k \le  m$, for any $\cal S$ of cardinality $k$, if $\matZ_{\cal S}$ has full rank ($\matZ$ is defined shortly),
$$
\FNormS{\matW - \matW_{\cal S} \matW_{\cal S}^{\dagger} \matW} \le \FNormS{\matW-\matW_n} + \FNormS{\left(\matW-\matW_n\right)_{\cal S}\matZ_{\cal S}^{\dagger}}.
$$
Here, $\matW_n$ is the best rank $n$ approximation to $\matW$.
$\matZ$ is defined as follows. Let $\matW = \matU \matSig \matV\transp$ be the SVD of $\matW$. $\matZ$ is the first $n$ rows of $\matV\transp$.
Applying spectral submultiplicativity to this relation we obtain,
$$
\FNormS{\matW - \matW_{\cal S} \matW_{\cal S}^{\dagger} \matW }\le  \FNormS{\matW-\matW_n}  + \FNormS{\matZ_{\cal S}^{\dagger}} \cdot \TNormS{\matW-\matW_n}.
$$
We now apply Lemma 7 from~\cite{BDM11a} on $\matB$ and $\matX$ and combine it with the bound from Theorem 19 of~\cite{BDM11a},
\eqan{
\FNormS{\matX_{\cal S}^{\dagger}} \ge \frac{\FNormS{\matW - \matW_{\cal S} \matW_{\cal S}^{\dagger} \matW}}{\TNormS{\matW-\matW_n}} - \frac{\FNormS{\matW-\matW_n}}{\TNormS{\matW-\matW_n}} 
&=&
 \left(m-k\right) \cdot \left( 1 + \frac{n}{k+\alpha^2}\right) - \left( m - n\right)  \\
 &=& 
 \left( \frac{m-k}{k+\alpha^2} + 1 - \frac{k}{n} \right) \cdot \FNormS{\matX^{\dagger}}.
}
\end{proof}

As $\alpha \rightarrow 0$ and $k=O\left(n\right)$, this bound is $m/k - O(1)$. If $k=(1 + \Omega(1))n$ the
Frobenius norm bounds in Theorems~\ref{thm1} and ~\ref{thm2} asymptotically match this lower bound.
However, if $k=(1+o(1))n$ there is a gap between the lower bound and the best upper bound.
There is also a gap when $k=\omega(n)$. We believe that the gap for $k=\omega(n)$ is the result of looseness in the lower bound,
but we were unable to prove a tighter bound than Theorem~\ref{lower:frobenius}.

\remove{
\clearpage
This corollary is correct. The bound though is not tight.
\begin{corollary}[Frobenius Norm]
For any $\alpha>0$, $n > 0$, $m > 2$ with $m > n$, and $k$ with $n \le k \le m$, there is
a full rank $n \times m$ matrix $\matX$ such that, for any subset $\cal S$ $\subseteq [m]$
of cardinality $k$ with $\rank(\matX_{\cal S}) = n$,
$$ \FNormS{ \matX_{\cal S}^{\dagger} } \ge \frac{m+\alpha}{k+\alpha} \cdot \frac{1}{n} \cdot \FNormS{\matX^{\dagger}} - 1$$
\end{corollary}
\begin{proof}
Let $\matX$ be the matrix of Theorem~\ref{lower:spectral},
$$ \TNormS{ \matX_{\cal S}^{\dagger} } \ge \frac{m+\alpha}{k+\alpha} \cdot \TNormS{\matX^{\dagger}} - 1.$$
Observe that $ \FNormS{ \matX_{\cal S}^{\dagger} } \ge \TNormS{ \matX_{\cal S}^{\dagger} } $,
$ \TNormS{\matX^{\dagger}} \ge n^{-1} \cdot \FNormS{\matX^{\dagger}}$, to conclude the proof. As $\alpha \rightarrow 0$,
the bound is $m/(kn) - 1$.
\end{proof}
}

\section{Low-stretch Spanning Trees and Subset Selection}\label{sec:ap}


Let $G=(V,E,w)$  be a weighted undirected connected graph.
Unless otherwise stated, in this section, we denote the number of vertices of $G$ by $n$, and the number of edges by $m$.
Let $T = (V, F, w)$ be a spanning tree of $G$, where $F$ is a subset of $E$ having exactly $n-1$ edges.
A spanning tree of a graph is a tree that spans all vertices of the given graph.
We use the same weight function $w$ because the edges in $T$ have the same weights with the corresponding edges in $G$.
Since $T$ is a tree, every pair of vertices in $T$ is connected by a unique path in $T$.
For any edge $e \in E$, let us denote by $p_{T}(e)$  the set of edges on the unique path in $T$ between the incident vertices of $e$.
The stretch of $e$ with respect to $T$ is
$\st_{\matT}(e)=\sum_{e'\in p_{T}(e)} \frac{w(e)}{w(e')}\,.$
The stretch of the graph $G$  with respect to $T$  is~\cite{AKPW95}
$$\st_{T}(G)=\sum_{e\in E}\st_{T}(e)\,.$$

The problem of finding a low-stretch spanning tree is the problem of finding a spanning tree $T$ of $G$ such that $\st_{T}(G)$  is minimized,
among all possible spanning trees of $G$. Let $\st(n)=\max_{G\in G_{n}}\min_{T}\st_{T}(G)$, where $G_{n}$  is the family of graphs with $n$  vertices. The following bounds are known: $\st(n)=\Omega(m\log n)$~\cite{AKPW95}; 
$\st(n)=O(m\log n\cdot\log\log n\cdot(\log\log\log n)^{3})$ ~\cite{ABN08}. In this section we show that finding a low-stretch spanning tree is in fact an instance of the Frobenius norm version of Problem~\ref{def:prob}.

Finding a low stretch spanning tree has quite a few uses. One important application is the solution of symmetric diagonally dominant (SDD) linear systems of equations. Boman and Hendrickson~\cite{BH02} were the first to suggest the use of low-stretch spanning trees to build preconditioners for SDD matrices. Spielman and Teng~\cite{ST04} later showed how to use low stetch spanning trees to solve SDD systems using a nearly linear amount of operations. The currently most efficient algorithm for solving SDD systems~\cite{KMP11} uses a low stretch spanning tree as well. One of the many obstacles in generalizing these algorithms for wider classes of matrices (e.g., finite-element matrices) is the lack of an equivalent concept, like the stretch, for such matrices. By studying the purely linear-algebraic nature of the low-stretch spanning tree problem (i.e. the Frobenius norm version of Problem~\ref{def:prob}),
our hope is to glean new insights on how to generalize the concept of low-stretch trees, or to substitute it with something else.

Other applications of low-stretch spanning trees include: Alon-Karp-Peleg-West game, MCT approximation and message-passing model. See~\cite{EEST05} for details.

Next, we show that finding a low-stretch spanning tree is an instance of subset selection.
We first relate graphs to matrices.
\begin{definition}[Edge-vertex incidence matrix/Laplacian matrix]
Let $G=(V,E,w)$  be a weighted undirected graph. Assume, without loss of generality, that $V=\{1,2,\dots,n\}$; $E=\{(u_1,v_1), (u_2,v_2), \dots, (u_m,v_m)\}$. The \emph{edge-vertex incidence matrix} of $G$ is $\matPi_G \in \R^{n \times m}$, where column $i$ of $\matPi_G$ is $\sqrt{w(u_i, v_i)}(\e_{u_i} - \e_{v_i})$. Here $\e_1,\e_2,\dots,\e_n$ are the identity (standard basis) vectors.
The \emph{Laplacian matrix} of $G$ is $\matL_G = \matPi_G \matPi\transp_G$.
\end{definition}
Every column in $\matPi_{G}$ represents an edge in $G$. A spanning tree $T$ is a group of edges that span $G$ and form a graph. The set of edges in $T$ correspond to a set of columns in $\matPi_G$, which we denote by ${\cal S}(T)$. Notice that if the indices are kept consistently, then, $\matPi_T = (\matPi_G)_{{\cal S}(T)}$. If ${\cal S} \subseteq [m]$ is a subset of columns, then, there is a subgraph $H$ of $G$ that contains the edges corresponding to the columns in ${\cal S}$. We denote this subgraph by $H({\cal S})$.
We are now ready to connect low-stretch spanning trees and subset selection.
\begin{theorem} \label{thm:st-to-subset}
Let $G$ be a weighted undirected connected graph. Let $\matPi_G = \matU \matSig \matV\transp$ be the SVD of $\matPi_G$
with $\matU \in \R^{n\times(n-1)}$, $\matSig \in \R^{(n-1)\times(n-1)}$ and $\matV \in \R^{m\times(n-1)}$
($G$ is connected; so, $\rank(\matPi_G) = n-1$). For notational convenience, let $\matY = \matV\transp$.
\begin{enumerate}
\item If $T$ is a spanning tree of $G$, then, $\st_{T}(G)=\FNormS{\matY^{-1}_{{\cal S}(T)}}$.
\item If ${\cal S} \subseteq [m]$ has cardinality $n-1$ and $\matY_{\cal S}$ has full rank, then, $H({\cal S})$ is a spanning tree of $G$.
\end{enumerate}
\end{theorem}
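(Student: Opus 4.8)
The plan is to prove the two parts separately, using the result of Spielman and Woo~\cite{SW09} for the first part and a straightforward rank argument for the second. For part~(1), I would begin by recalling the Spielman--Woo identity, which expresses the stretch of a spanning tree $T$ in terms of the incidence matrices: $\st_T(G) = \trace\left( \matPi_T^{\dagger} \matPi_G \matPi_G\transp (\matPi_T^{\dagger})\transp \right)$, or, equivalently, $\st_T(G) = \FNormS{\matPi_T^{\dagger} \matPi_G}$. Since $\matPi_G = \matU\matSig\matY$ and $\matPi_T = (\matPi_G)_{{\cal S}(T)} = \matU\matSig\matY_{{\cal S}(T)}$, and since $T$ being a spanning tree means $\matPi_T$ has $n-1$ linearly independent columns (so $\matY_{{\cal S}(T)}$ is a square invertible $(n-1)\times(n-1)$ matrix), I would write $\matPi_T^{\dagger} = \matY_{{\cal S}(T)}^{-1}\matSig^{-1}\matU\transp$ using Lemma~\ref{lem:pseudo} (all three factors are full rank). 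Then $\matPi_T^{\dagger}\matPi_G = \matY_{{\cal S}(T)}^{-1}\matSig^{-1}\matU\transp\matU\matSig\matY = \matY_{{\cal S}(T)}^{-1}\matY$. Taking Frobenius norms, $\FNormS{\matPi_T^{\dagger}\matPi_G} = \FNormS{\matY_{{\cal S}(T)}^{-1}\matY}$, and since $\matY = \matV\transp$ has orthonormal rows, right-multiplication by $\matY$ followed by taking $\FNormS{\cdot}$ leaves the value unchanged; more carefully, $\FNormS{\matY_{{\cal S}(T)}^{-1}\matY} = \trace(\matY_{{\cal S}(T)}^{-1}\matY\matY\transp(\matY_{{\cal S}(T)}^{-1})\transp) = \trace(\matY_{{\cal S}(T)}^{-1}(\matY_{{\cal S}(T)}^{-1})\transp) = \FNormS{\matY_{{\cal S}(T)}^{-1}}$, using $\matY\matY\transp = \matV\transp\matV = \matI_{n-1}$. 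This gives $\st_T(G) = \FNormS{\matY^{-1}_{{\cal S}(T)}}$.

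For part~(2), suppose ${\cal S} \subseteq [m]$ has cardinality $n-1$ and $\matY_{\cal S}$ has full rank. Since $\matPi_G = \matU\matSig\matY$ with $\matU$ and $\matSig$ of full rank $n-1$, the matrix $(\matPi_G)_{\cal S} = \matU\matSig\matY_{\cal S}$ also has rank $n-1$, i.e. its $n-1$ columns are linearly independent. Now I would invoke the standard fact from algebraic graph theory that a set of columns of the edge-vertex incidence matrix $\matPi_G$ is linearly independent if and only if the corresponding set of edges contains no cycle (the incidence matrix is the representation matrix of the graphic matroid). Hence the $n-1$ edges of $H({\cal S})$ form a forest. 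A forest on $n$ vertices with $n-1$ edges is necessarily connected — otherwise it would have at least two components, and a forest with $c$ components on $n$ vertices has exactly $n - c \le n - 2$ edges, a contradiction. Therefore $H({\cal S})$ is a connected acyclic subgraph spanning all $n$ vertices, i.e. a spanning tree of $G$.

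I do not anticipate a serious obstacle; the only point requiring care is making sure the Spielman--Woo formula is applied in the correct normalization and that the reduction from $\matPi_T$ to $\matY_{{\cal S}(T)}$ via Lemma~\ref{lem:pseudo} is legitimate, which hinges on $T$ being a spanning tree so that $\matY_{{\cal S}(T)}$ is genuinely invertible (not merely full column rank of a rectangular matrix). For part~(2), the mild subtlety is the cycle/forest characterization of linear independence of incidence-matrix columns; this is classical but worth stating explicitly since it is exactly the bridge between the linear-algebraic hypothesis ($\matY_{\cal S}$ full rank) and the combinatorial conclusion ($H({\cal S})$ is a tree).
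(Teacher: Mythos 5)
Your proposal is correct and follows essentially the same route as the paper: part (1) rests on the Spielman--Woo stretch identity (your form $\FNormS{\matPi_T^{\dagger}\matPi_G}$ is the same as the paper's $\trace(\matL_G\matL_T^{\dagger})$ via $(\matA\matA\transp)^{\dagger}=(\matA^{\dagger})\transp\matA^{\dagger}$), followed by the same SVD substitution and orthonormality of $\matY$ to reduce to $\FNormS{\matY^{-1}_{{\cal S}(T)}}$. Part (2) differs only cosmetically: you argue acyclicity via the graphic-matroid characterization plus edge counting, while the paper deduces connectivity from $\rank(\matPi_{H({\cal S})})=n-1$; both are standard incidence-matrix facts and equally valid.
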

\begin{proof}
To prove the first part of Theorem~\ref{thm:st-to-subset},
we need a result of Spielman and Woo~\cite{SW09}, who recently connected the stretch of $G$ with respect to $T$ to
the matrix $\matL_G \matL^{\dagger}_T $. More precisely, Theorem 2.1 in~\cite{SW09} shows that if $T$ is a spanning tree then $\st_{T}(G) = \trace\left( \matL_{G} \matL_{T}^{\dagger} \right)$.
Here,  $G$ is a weighted undirected connected graph and $T$ is a spanning tree of $G$.

Let us denote ${\cal S} = {\cal S}(T)$. Since $T$ is a tree we have $\rank(\matPi_T)=n-1$ (it is well known that the edge incidence matrix of a connected graph has rank $|V|-1$). Since $\matPi_T = (\matPi_G)_{\cal S}= \matU \matSig \matY_{\cal S}$, $\matY_{\cal S}$ must be full rank. Now, 
\begin{eqnarray*}
 \st_{T}(G)  \buildrel{(a)}\over{=}  \trace\left( \matL_{G} \matL_{T}^{\dagger} \right)
            &\buildrel{(b)}\over{=}  &  \trace\left( \matPi_{G} \matPi_{G}\transp \left((\matPi_G)_{\cal S} (\matPi_G)_{\cal S}\transp \right)^{\dagger} \right)\\
           &  \buildrel{(c)}\over{=} &  \trace\left( \matU \matSig^2 \matU\transp (\matU \matSig \matY_{\cal S} \matY_{\cal S}\transp \matSig\matU\transp)^{\dagger} \right) \\
           &  \buildrel{(d)}\over{=} &  \trace\left( \matU \matSig^2 \matU\transp \matU \matSig^{-1} \left( \matY_{\cal S} \matY_{\cal S}\transp \right)^{-1} \matSig^{-1}\matU\transp \right) \\
           &  \buildrel{(e)}\over{=} &  \trace\left( \matSig \left( \matY_{\cal S} \matY_{\cal S}\transp \right)^{-1} \matSig^{-1}\right)\\
           &  \buildrel{(f)}\over{=} &  \trace\left( \left( \matY_{\cal S} \matY_{\cal S}\transp \right)^{-1} \right)\\
            & = & \FNormS{\matY^{-1}_{{\cal S}(T)}}\,.
\end{eqnarray*}
\math{(a)} follows by the Spielman-Woo result.
\math{(b)} follows by replacing the Laplacian matrices with the product of their edge-incidence matrices.
\math{(c)} follows by introducing the SVD of $\matPi_G$ and the equality $(\matPi_G)_{\cal S} = \matU \matSig \matY_{\cal S}$.
\math{(d)} follows since all three matrices involved ($\matU$, $\matY_{\cal S}$, and $\matSig$) are full rank.
\math{(e)} follows since $\matU$ has orthonormal columns.
\math{(f)} follows since $\trace(\matA \matB)=\trace(\matB \matA)$.

We now prove the second part of the theorem.
For $H({\cal S})$ to be a spanning tree, it has to be a connected graph with $n-1$ edges. The last condition is met since ${\cal S}$ has cardinality $n-1$, and the number of edges in $H({\cal S})$  is equal to the cardinality of ${\cal S}$. As for connectivity, notice that $\matPi_{H({\cal S})} = \left( \matPi_G \right)_{\cal S}=\matU \matSig \matY_{\cal S}$. Now, since $\matY_{\cal S}$ has full rank, we have $\rank(\matPi_{H({\cal S})}) = |V|-1$. This directly implies that $H({\cal S})$ is connected.
\end{proof}
The algorithms that we presented in Theorems~\ref{thm1} and~\ref{thm4} in Section~\ref{sec:main} can be used to find a low-stretch spanning tree (run these algorithms on the matrix $\matY$ of the above theorem), but they are not competitive both in terms of operation count and in terms of approximation bounds. Both these algorithms can guarantee $\st_{T}(G) \leq (n-1)(m-n+2)$. The operation count is $(m^2n)$ and $O(mn^3)$, respectively.
This upper bound also holds for the easily computable maximum weight spanning tree.
Koutis et al. describe in~\cite{KMP11} an algorithm which gives the available state-of-the-art upper bound
$\st_{T}(G) \leq  O(m\log n\cdot\log\log n\cdot(\log\log\log n)^{3})$,
and has operation count of $O(m \log(n) + n \log(n)\log\log(n))$. The main reason for this gap is that our algorithms are designed for general matrices, while~\cite{ABN08,KMP11} describe a graph algorithm, which better exploits the unique structure of the problem. Nevertheless, when reinterpreting our algorithms as algorithms for constructing low-stretch spanning trees yields interesting connections that sheds light on both problems, as we discuss below.

\subsection{Low-stretch spanning trees via the greedy removal algorithm}
Theorem~\ref{thm:st-to-subset} along with Theorem~\ref{thm1} suggest a greedy removal algorithm for constructing a spanning tree with low stretch: start with a full set of edges $H=E$; then, at each iteration, find the edge $e$ such that $\FNormS{\matY^{\dagger}_{{\cal S}(H - \{e\})}}$ is minimized, and set $H\longleftarrow H - \{e\}$. Finish once $H$ has $n-1$ edges. That is, we apply the algorithm of Theorem~\ref{thm1} on $\matY$ (see Theorem~\ref{thm:st-to-subset} for the definition of $\matY$). We note that this algorithm is \emph{different} from the natural greedy removal algorithm, which would remove edges to keep the stretch of the subgraph minimal in each step.
It is possible to define the stretch $\st_H(G)$ of a subgraph $H$; we refer the reader to chapter 18 of~\cite{Peleg00} for the definition. It is also possible to show that for a spanning subgraph $H$, $\FNormS{\matY^{\dagger}_{{\cal S}(H)}} \leq \st_H(G)$ (we omit the proof), but an equality does not hold. In fact, our algorithmic results imply that it is possible to find a subgraph with $O(n)$ edges such that $\FNormS{\matY^{\dagger}_{{\cal S}(H)}} = O(m)$, but there exists a graph for which every subgraph $H$ of $O(n)$ edges we have $\st_H(G) = \Omega(m \log(n))$ (Corollary 18.1.5 in~\cite{Peleg00}).

We conducted some simple experiments with our greedy removal algorithm. In the first experiment, we used greedy removal to generate a spanning tree $T_n$ of the complete graph $K_n$ on $n$ with equal weights vertices, for $n=10,11,\dots,50$. We then computed the stretch of $T_n$. We found that $\st_{T_n} (K_n) \approx 0.6 m \log^2 n$. We then repeated this experiment with random weights on the edges of $K_n$. We found that in almost all runs, $\st_{T_n} (K_n) \approx 0.3 m \log^2 n$. These values are much better than our theoretical bounds, and are closer to what it is possible to find using state-of-the-art algorithms for low-stretch trees. These experiments, although far from exhaustive, suggest that our theoretical \emph{worst-case}
upper bounds for greedy removal are rather pessimistic for the matrices relevant to finding a low-stretch spanning tree.

\subsection{Maximum weight spanning trees and maximum volume subsets}
The volume corresponding to a set ${\cal S}$ has a very natural interpretation when ${\cal S}$ is a subset of columns in $\matPi_G$, and it corresponds to a tree. Let $\matPi_G = \matU \matSig \matV\transp$ be the SVD of $\matPi_G$ ($\matU \in \R^{n\times(n-1)}$, $\matSig \in \R^{(n-1)\times(n-1)}$, and $\matV \in \R^{m\times(n-1)}$; $G$ is connected so $\rank(\matPi_G) = n-1$). For notational convenience, let $\matY = \matV\transp$. Define $\bar{\matU}$ to be the first $n-1$ rows of $\matU$; $\bar{\matU}$ is a square matrix). Define $\bar{\matPi}_G$ to be the first $n-1$ rows of $\matPi_G$. Notice that $\bar{\matPi}_G = \bar{\matU} \matSig \matY$. This implies that $\volsamp(\matY, k)=\volsamp(\bar{\matPi}_G, k)$ and also that the set ${\cal S}$ that maximizes $\det(\matY_{\cal S})^2$ also maximizes $\det((\bar{\matPi}_G)_{\cal S})^2$.

Let $T$ be a spanning tree of $G$. Determinants of the form $\det((\bar{\matPi}_G)_{{\cal S}(T)} (\bar{\matPi}_G)\transp_{{\cal S}(T)})$
have a very natural interpretation. The matrix $(\bar{\matPi}_G)_{{\cal S}(T)} (\bar{\matPi}_G)\transp_{{\cal S}(T)}$ is a Laplacian of a graph for which a column and row of some vertex have been removed. It is well known that the determinant of such matrices, when the graph is a tree, is equal to the product of the weights of the tree edges. That is,
$$
\det((\bar{\matPi}_G)_{{\cal S}(T)} (\bar{\matPi}_G)\transp_{{\cal S}(T)}) = \prod_{e \in T} w(e)\,.
$$

The subset of columns ${\cal S}$ that maximizes the volume also maximizes $\prod_{e \in H({\cal S})} w(e)$. This trivially implies that the corresponding tree is a maximum weight spanning tree. So, for edge-incidence matrices one can use an efficient maximum weight spanning tree algorithm to find the maximum volume subset of columns efficiently.
The bound we obtain is $\st_{T}(G) < (n - 1)(m-n+2)$. We are unaware of any other analysis of the stretch of a maximum weight spanning tree, but this bound can be easily proven using much simpler arguments.

\subsection{Low-stretch spanning trees via volume sampling} Recall Problem~\ref{def:prob}, and let the input matrix be the matrix $\matY$ from
Theorem~\ref{thm:st-to-subset}. Using volume sampling (Lemma~\ref{lem:volrand})
to sample a subset of columns from this $\matY$ corresponds to sampling a random spanning tree,
where a tree $T$ is sampled with relative probability $\prod_{e \in T} w(e)$. We denote this probability distribution on spanning trees of $G$ by $\Gamma(G)$.
Lemma~\ref{lem:volrand}  provides a bound on the stretch of a random spanning tree sampled from $\Gamma(G)$. Notice that it is a strict upper bound. The reason is that not every subgraph $H$ is a tree. We conjecture that this bound is pessimistic, and leave for future work the refinement of the bound.
\begin{corollary}\label{cor:randstretch}
Let $G$ be a weighted undirected connected graph, and let ${\cal T}$ be a random spanning tree, where tree $T$ is sampled with relative probability $\prod_{e \in T} w(e)$. Then,
$$
\Expect{\st_{\cal T}(G)} < (n - 1)(m-n+2)\,.
$$
\end{corollary}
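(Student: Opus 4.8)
The plan is to read the statement off directly from Lemma~\ref{lem:volrand} via Theorem~\ref{thm:st-to-subset}, after identifying the distribution $\Gamma(G)$ with a volume sampling distribution. I retain the notation of Theorem~\ref{thm:st-to-subset}: let $\matPi_G = \matU\matSig\matV\transp$ be the SVD of the edge-vertex incidence matrix, so $\rank(\matPi_G)=n-1$, and set $\matY = \matV\transp \in \R^{(n-1)\times m}$, which has orthonormal rows. The first step is to show that drawing ${\cal S} \sim \volsamp(\matY, n-1)$ and forming the subgraph $H({\cal S})$ reproduces exactly $\Gamma(G)$. By the computation preceding the corollary, $\volsamp(\matY, n-1) = \volsamp(\bar{\matPi}_G, n-1)$, where $\bar{\matPi}_G$ is the first $n-1$ rows of $\matPi_G$, and for a spanning tree $T$ one has $\det\!\big((\bar{\matPi}_G)_{{\cal S}(T)} (\bar{\matPi}_G)\transp_{{\cal S}(T)}\big) = \prod_{e\in T} w(e)$; on the other hand, any subset ${\cal S}$ of cardinality $n-1$ whose subgraph is not a spanning tree has $\matY_{\cal S}$ rank-deficient (by part 2 of Theorem~\ref{thm:st-to-subset}), hence $\det(\matY_{\cal S}\matY_{\cal S}\transp)=0$ and receives zero volume-sampling probability. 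Thus $\volsamp(\matY, n-1)$ is supported exactly on $\{{\cal S}(T) : T \text{ a spanning tree of } G\}$, with $\Pr[{\cal S}(T)] \propto \prod_{e\in T} w(e)$, i.e. it pushes forward along $H(\cdot)$ to $\Gamma(G)$.

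Next I would invoke part 1 of Theorem~\ref{thm:st-to-subset}: for a spanning tree $T$ the matrix $\matY_{{\cal S}(T)}$ is square and full rank, so $\matY^{\dagger}_{{\cal S}(T)} = \matY^{-1}_{{\cal S}(T)}$ and $\st_T(G) = \FNormS{\matY^{-1}_{{\cal S}(T)}}$. Combining this with the identification of $\Gamma(G)$ above,
$$\Expect{\st_{\cal T}(G)} \;=\; \ExpectSub{{\cal S}\sim\volsamp(\matY, n-1)}{\FNormS{\matY^{\dagger}_{\cal S}}}.$$
Now apply Lemma~\ref{lem:volrand} to the full rank matrix $\matY\in\R^{(n-1)\times m}$ (note $m\ge n-1$) with sampling parameter $k = n-1$, so that the row dimension $n-1$ plays the role of the parameter $n$ in that lemma. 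This yields
$$\ExpectSub{{\cal S}}{\FNormS{\matY^{\dagger}_{\cal S}}} \;\le\; \frac{m-(n-1)+1}{(n-1)-(n-1)+1}\cdot\FNormS{\matY^{\dagger}} \;=\; (m-n+2)\,\FNormS{\matY^{\dagger}}.$$
Since $\matY = \matV\transp$ has $n-1$ orthonormal rows, $\matY^{\dagger} = \matV$ and $\FNormS{\matY^{\dagger}} = n-1$, giving $\Expect{\st_{\cal T}(G)} \le (n-1)(m-n+2)$.

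For the strict inequality I would trace back the inequality in Lemma~\ref{lem:volrand}: the only place where the lemma's proof is not an equality is where the sum over full-rank $(n-1)$-subsets is enlarged to the sum over all $(n-1)$-subsets, and that step is strict precisely when $\matY$ has a rank-deficient $(n-1)$-column submatrix — equivalently, when $G$ contains a set of $n-1$ edges whose subgraph is not a spanning tree. This is exactly the phenomenon recorded in the example following Lemma~\ref{lem:volrand} (not every $(n-1)$-edge subgraph of $G$ is a tree), and it upgrades the bound to $\Expect{\st_{\cal T}(G)} < (n-1)(m-n+2)$.

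The computations are entirely routine; the only points that need care are the index bookkeeping (the shift $n\mapsto n-1$ forced by $\rank(\matPi_G)=n-1$ when invoking Lemma~\ref{lem:volrand}) and the verification that the support of $\volsamp(\matY, n-1)$ really is the set of spanning trees of $G$, so that the pointwise identity $\st_{\cal T}(G) = \FNormS{\matY^{\dagger}_{{\cal S}({\cal T})}}$ transfers to an identity of expectations. The strictness assertion — exhibiting a rank-deficient $(n-1)$-column submatrix of $\matY$, i.e. a non-spanning $(n-1)$-edge subgraph — is the only genuinely delicate step, and is the one I expect to require the most attention.
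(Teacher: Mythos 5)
Your proposal is correct and follows essentially the same route as the paper: the paper's own justification is just the paragraph preceding Corollary~\ref{cor:randstretch}, which identifies $\volsamp(\matY,n-1)$ with the tree distribution $\Gamma(G)$ via the determinant identity $\det\bigl((\bar{\matPi}_G)_{{\cal S}(T)}(\bar{\matPi}_G)\transp_{{\cal S}(T)}\bigr)=\prod_{e\in T}w(e)$, applies Theorem~\ref{thm:st-to-subset} and Lemma~\ref{lem:volrand} with the same $n\mapsto n-1$ shift and $\FNormS{\matY^{\dagger}}=n-1$, and attributes strictness to the existence of $(n-1)$-edge subsets that are not trees. Your reconstruction fills in exactly these steps (and your strictness discussion matches the paper's brief remark), so there is nothing genuinely different to flag.
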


One can use \textsc{VolumeSample} from~\cite{GK12} to generate such a spanning tree in $O(n^3 m)$ operations. However, the problem of generating a sample from $\Gamma(G)$ is a well studied problem, and there exists algorithms that can generate a random spanning tree faster than $O(n^3 m)$. See~\cite{Wilson96} for a short review.

\subsection{Towards better bounds for low-stretch spanning trees}
State-of-the-art algorithms for finding low stretch spanning trees attain theoretical worst-case
bounds that are better than the ones we obtain for a general matrix.
We now provide a preliminary explanation for this gap.

Consider a matrix $\matX \in \R^{n \times m}$ with orthonormal rows such that for every subset ${\cal S} \subseteq [m]$ of cardinality $n$,
$\det(\matX_{\cal S})^2 = 0$  or $\det(\matX_{\cal S})^2 = C$, for some constant $C$. The second inequality of Lemma~\ref{lem:square-to-det} is
\[
\FNormS{\matX_{\cal S}^{-1}}
=  \frac{\sum_{j=1}^{m}\sum_{i=1}^{n}\det\left(\matX_{\cal S}(i\rightarrow \x_j\right))^{2}}{\det\left(\matX_{\cal S}\right)^{2}}.
\]
(here it is an equality because $\matX$ is orthonormal; also, $\TNormS{\matX^{\dagger}} = 1$). Since all determinants are $0$ or $C$, we find that
$$
\FNormS{\matX_{\cal S}^{-1}} = \#\{{\cal T}\,:\,\rank(\matX_{\cal T})=n, {\cal T} = ({\cal S} - \{i\}) \cup \{j\} \mbox{ for } i \in {\cal S}, j \in [m]\}\,.
$$
That is, for a subset ${\cal S}$ such that the columns of $\matX$ in ${\cal S}$ form a basis for the column space of $\matX$, $\FNormS{\matX_{\cal S}^{-1}}$ is equal to the number of bases that can be obtained by replacing a single column. The last quantity can only be bounded universally by $n(m-n+1)$, and that quantity is obtained for all ${\cal S}$s if $\matX_{\cal T}$ has full rank for \emph{every} subset ${\cal T}$. However, if there exist at least one subset ${\cal T}$ for which $\matX_{\cal T}$ is singular then there is a subset ${\cal S}$ for which the $n(m-n+1)$ bound is strict. If many such sets exist, then, the bound is probably very loose.

Now, let us consider the incidence matrix of a complete graph with equal weights. If for a subset ${\cal S}$ the subgraph $H({\cal S})$ is not a tree, then, $\det(\matY_{\cal S}) = 0$ ($\matY$ is defined in Theorem~\ref{thm:st-to-subset}) . Every tree has exactly the same weight, so for all ${\cal S}$'s that correspond to trees we have the same $\det(\matY_{\cal S})^2$. We see that $\matY$ falls into the case  discussed above. We conclude that the reason that $\matY$ has a subset of column ${\cal S}$ for which $\FNormS{\matY_{\cal S}^{-1}}$ is small is the fact that for some subsets ${\cal T}$ the matrix $\matY_{\cal T}$ does not have full rank. In fact, for the complete graph, most cardinality $n-1$ subsets of edges will not result in a tree or a full rank $\matY_{\cal S}$. If we could enumerate these subsets exactly, this should give a better upper bound for this special matrix.


\section{Other Applications}\label{sec:other}

\subsection{Column-Based Low-Rank Matrix Reconstruction}\label{application:lowrank}
Suppose we want to build a low rank approximation of $\matA \in \R^{d \times m}$. For a rank parameter
$r < \rank(\matA)$, let $\matA_r \in \R^{d \times m}$ denote the best rank $r$ approximation to $\matA$. That is, $\matA_r$ minimizes $\TNorm{\matA-\matB}$, over $\matB$,
where $\matB$ ranges on all rank $r$ $d \times m$ matrices. It is well known that $\matA_r$
can be computed via the SVD of $\matA$. However,
SVD uses all the columns of $\matA$ to compute $\matA_r$. In some applications it is desirable
to use only a small set of columns to build the low rank approximation (see~\cite{BDM11a} and references there in
for such applications). Let ${\cal S} \subseteq [m]$
and $\matA_{\cal S}$ contains a subset of columns of $\matA$ indicated in $\cal S$.
Define $\matPi_{{\cal S},r}(\matA) \in \R^{d \times m}$ to be the best rank $r$ approximation of $\matA$
within the columns space of $\matA_{\cal S}$, with respect to the spectral norm
(if $\cal S$ $= [m]$, then $\matPi_{{\cal S},r}(\matA) = \matA_r$).
The so-called column-based low-rank matrix reconstruction problem is: given $\matA$, $r < \rank(\matA)$, and
an sampling parameter $k \geq r$, find a subset ${\cal S}$ of cardinality at most $k$ such that
$\TNorm{\matA - \matPi_{{\cal S},r}(\matA)}$ is minimized among all the possible choices for the subset $\cal S$.

It is natural to evaluate $\matPi_{{\cal S},r}(\matA)$ in terms of $\matA_r$. That is, provide approximation
bounds of the form $\TNorm{\matA - \matPi_{{\cal S},r}(\matA)} \le \alpha \cdot \TNorm{\matA - \matA_r}$.
Currently, the best deterministic such algorithms are available in~\cite{BDM11a}. These algorithms
achieve asymptotically optimal upper bounds, but there is still room for improvement in terms of lowering the operation count.

The algorithm of Corollary~\ref{cor1} can be used to obtain a new deterministic algorithm for the column-based low-rank matrix reconstruction problem. First, construct an SVD decomposition $\matA = \matU \matSig \matV\transp$. Let $\matX \in \R^{r \times m}$ be the first $r$ rows of $\matV\transp$. We now use the algorithm of Corollary~\ref{cor1} on $\matX$ to generate a subset ${\cal S} \subseteq [m]$ of size $k$, which is the result of the algorithm. The following bound holds,
$$ \TNorm{\matA - \matPi_{{\cal S},r}(\matA)} \le \sqrt{2 + \frac{ r(m-k) }{ k - r + 1 }  }  \cdot \TNorm{\matA - \matA_r}\,.$$
We omit the proof, which follows by combining Lemma 7 from~\cite{Bou11a} and Corollary~\ref{cor1}. The algorithm is deterministic and the operation count is $T_{SVD} + O(mr(m-k))$, where $T_{SVD}$ is the number of operations needed to compute the top $r$ right singular vectors of $\matA$.
Our approximation bound is slightly worse than the bounds in~\cite{BDM11a} but the bound on the number of operations can sometimes be better, depending on the size of the input matrix. We refer the interested reader to~\cite{BDM11a} to conduct her own comparison.

\subsection{Sparse Solutions to Least-squares Regression Problems}\label{application:sparse}
Fix inputs $\matA \in \R^{d \times m}$ and $\b \in \R^d$; consider the following least-squares problem,
$ \min_{\x \in \R^m} \TNorm{\matA \x - \b}$.
Since there is no assumption on $d$ and $m$, or that $\matA$ is full rank, the minimizer of $\TNorm{\matA \x - \b}$
might not be unique; there might be a full subspace of minimizers. Even if there is a unique minimizer, it might have
a huge norm, while there exists an almost-minimizer with small norm. It depends on the application
what exactly is needed, but often some kind of \emph{regularization} is used to address the issues just mentioned.
One popular regularization technique is \emph{truncated} SVD~\cite{Han87}: for $r < \rank(\matA)$,
let $\matA_r \in \R^{d \times m}$ of rank $r$ denote
the rank-$r$ SVD of $\matA$; then, the truncated SVD regularized
solution is given by
$
\x_{svd(r)}=\matA_r^{\dagger}\b \in \R^{m}
$.

However, sometimes a different regularization is sought: requiring the solution vector to be sparse.
That is, we are interested in constructing a vector $\x_k \in \R^m$ that has at most $k$ non-zeros, for some
$k$. Since truncated SVD is arguably the most natural
regularizer, it makes sense to compare $\x_k$ to $\x_{svd(r)}$ for some $r \leq k$. More specifically, we are interested in bounds of the form,
$$ \TNorm{\matA \x_k - \b} \le \TNorm{\matA \x_{svd(r)} - \b} + \alpha\,.$$
The idea of obtaining sparse solutions
with approximation bounds of the above type can be traced to~\cite{CH92}. Currently, the best \emph{deterministic}
method is in~\cite{Bou11b} ($k>r$) with $$\alpha = \left(1+\sqrt{\frac{r}{k}} \right) \norm{\b}_2 \FNorm{\matA-\matA_r} / \sigma_{r}(\matA).$$

The algorithm of Corollary~\ref{cor1} (Algorithm~\ref{alg2}) can be used to design a new deterministic algorithm. First, construct an SVD decomposition $\matA = \matU \matSig \matV\transp$. Let $\matX \in \R^{r \times m}$ be the first $r$ rows of $\matV\transp$. We now use the algorithm of Corollary~\ref{cor1} on $\matX$ to generate a subset ${\cal S} \subseteq [m]$ of size $k$.  We now compute $\hat{\x}_k = \matA^{\dagger}_{\cal S} \b \in \R^k$. We now form $\x_k$ as follows. For $i\in{\cal S}$ let $j_i$ be the column in $\matA_{\cal S}$ that correspond to column $i$ in $\matA$. Now, for every $i\in{\cal S}$, we set the $i$-th entry of $\x_k$ to the value of $j_i$-th entry in $\hat{\x}_k$. All other entries are set to zero.
The following bound holds,
$$
\TNorm{\matA \x_k-\b}
\le
\TNorm{ \matA \x_{svd(r)}-\b} + \left(1+\sqrt{\frac{r(m-k)}{k-r+1}} \right) \norm{\b}_2 \frac{\sigma_{r+1}(\matA)}{\sigma_{r}(\matA)}\,.
$$
We omit the proof since it follows immediately by combining Lemma 3 from~\cite{Bou11b} with
Corollary~\ref{cor1} in our paper. The algorithm is deterministic and the operation count is $O(d m \min\{d,m\} + mr(m-k))$.

Our bound essentially contains the term $\sqrt{m-k} \cdot \sigma_{r+1}(\matA)$ in place of the term $\FNorm{\matA-\matA_r}$ in the bound of~\cite{Bou11b}. It is always the case that $\FNorm{\matA-\matA_r} \le \sqrt{m-r} \cdot \sigma_{r+1}(\matA)$, but since $k \ge r$, our bound might be better in some cases (e.g. when $k \rightarrow m$).

\subsection{Feature Selection in $k$-Means Clustering}\label{application:kmeans}
The deterministic algorithm of Corollary~\ref{cor1} can also be
used for deterministic feature selection in $k$-means clustering. We refer
the reader to~\cite{BM11} for an introduction to this problem. Theorem 4 of~\cite{BM11}
gives such a polynomial-time deterministic unsupervised feature selection algorithm, which selects features from the data and then \emph{rescales} them.
Using Corollary~\ref{cor1}, one can design a deterministic unsupervised feature selection algorithm \emph{without} rescaling.
We omit the details, since the algorithm is similar to the one described for sparse least squares, and
the analysis is  a combination of Lemma 10 from~\cite{BM11} with Corollary~\ref{cor1}. The approximation bound
that is obtained is comparable to the bound in~\cite{BM11}.
\section{Open Problems and Future Directions}
Several interesting questions remain unanswered and we leave them for future investigation.
First, is the Frobenius-norm version of Problem~\ref{def:prob} NP-hard?
Second, is it possible to close the existing gaps between lower and upper bounds for Problem~\ref{def:prob}?
Third, is it possible to extend the Strong Rank Revealing QR method of~\cite{GE96} to sample arbitrary $k \ge n $ columns?
Fourth, is it possible to extend the polynomial implementations of volume sampling in~\cite{DR10,GK12} to sample arbitrary number of columns from short-fat matrices?
Finally, is it possible to derandomize the algorithm of Theorem~\ref{thm4}?

\section*{Acknowledgements}
We would like to thank the two anonymous referees and the editor for
their numerous comments and suggestions; Ioannis Koutis for bringing \cite{Packer02,Koutis06,KMP11} to our attention;
Petros Drineas, Frank De Hoog, Ilse Ipsen, Sivan Toledo, and Mark Tygert for many useful discussions and suggestions on a preliminary draft of this work;
and Anastasios Zouzias for pointing out the connection between the restricted invertibility line of research~\cite{BT87,Tropp09,SS12} and ours.

The authors acknowledge the support from XDATA program of the Defense Advanced Research Projects Agency (DARPA), administered through Air Force Research
Laboratory contract FA8750-12-C-0323.

\end{document}